\documentclass[final,1p]{elsarticle}
 \usepackage{etoolbox}
\makeatletter
\patchcmd{\ps@pprintTitle}{\footnotesize\itshape
       Preprint submitted to \ifx\@journal\@empty Elsevier
      \else\@journal\fi\hfill\today}{\relax}{}{}
\makeatother
\makeatletter
\def\ps@pprintTitle{%
 \let\@oddhead\@empty
 \let\@evenhead\@empty
 \def\@oddfoot{\centerline{\thepage}}%
 \let\@evenfoot\@oddfoot}
\makeatother




\usepackage{amssymb}
\usepackage{amsmath}
\usepackage{color}
\newcommand\scalemath[2]{\scalebox{#1}{\mbox{\ensuremath{\displaystyle #2}}}}

 \usepackage{amsthm}
 \newtheorem{thm}{Theorem}[section]
 \newtheorem*{thmm}{Theorem}
 \newtheorem{lem}{Lemma}[section]
 \newtheorem{cor}{Corollary}[section]
 \newdefinition{rmk}{Remark}[section]
  \newdefinition{defe}{Definition}[section]
 \newproof{pf}{Proof}
 \newproof{pot}{Proof of Theorem \ref{thm2}}

 \usepackage{caption}
\usepackage{subcaption}


\journal{Elsevier}

\begin{document}

\begin{frontmatter}



\title{Splines and Wavelets on Circulant Graphs}


\author[MSK]{M. S. Kotzagiannidis\corref{cor1}\fnref{label2}}
\ead{madeleine.kotzagiannidis@ed.ac.uk }

\address[MSK]{Institute for Digital Communications, The University of Edinburgh, King's Buildings, Thomas Bayes Road, Edinburgh EH9 3FG, UK}
\address[PLD]{Department of Electrical and Electronic Engineering, Imperial College London, London SW7 2AZ, UK}
\cortext[cor1]{Corresponding author}
\fntext[label2]{The work in this paper was carried out while the first author was a PhD student at the second author's institution.}
\author[PLD]{P. L. Dragotti}

\begin{abstract}
 We present novel families of wavelets and associated filterbanks for the analysis and representation of functions defined on circulant graphs. In this work, we leverage the inherent vanishing moment property of the circulant graph Laplacian operator, and by extension, the e-graph Laplacian, which is established as a parameterization of the former with respect to the degree per node, for the design of vertex-localized and critically-sampled higher-order graph (e-)spline wavelet filterbanks, which can reproduce and annihilate classes of (exponential) polynomial signals on circulant graphs. In addition, we discuss similarities and analogies of the detected properties and resulting constructions with splines and spline wavelets in the Euclidean domain. Ultimately, we consider generalizations to arbitrary graphs in the form of graph approximations, with focus on graph product decompositions. In particular, we proceed to show how the use of graph products facilitates a multi-dimensional extension of the proposed constructions and properties.
\end{abstract}

\begin{keyword}
graph signal processing \sep graph wavelet \sep sparse representation \sep circulant graphs \sep splines


\end{keyword}

\end{frontmatter}


\section{Introduction}
\label{}
There exists a certain fascination with the idea of transferring fundamental signal processing insights to the higher-dimensional domain of graphs and implications thereof. Concurrently, the breadth of emerging applications, in light of the availability of large complex data, arising from i.a. social or biological information structures, has created a need for advanced representation and processing techniques. \\
Notably, the appeal of operating with respect to data encapsulated within the higher-dimensional dependency structures of a graph lies not only in the potential for superior data processing for real-world applications, but also becomes apparent in the development of a corresponding mathematical framework, which seeks to extend conventional signal processing properties to the graph domain, thus naturally challenging the structural confinement of existing frameworks and posing intriguing new questions. \\
The breadth of contributions towards the aforementioned problem statement encompass the novel field of Graph Signal Processing (GSP), having predominantly evolved from two different model assumptions: the collective of works originating from spectral graph theory (\cite{shu}, \cite{chung}), with the graph Laplacian matrix as the central operator on the one hand, and the more generalized setting with focus on the graph adjacency matrix, expanding on concepts from i.a. algebraic signal processing \cite{moura}, on the other hand.\\
The notion of wavelets on graphs, in particular, presents a promising avenue to facilitate sophisticated processing of complex data, which is captured in the graph signal and underlying graph, beyond classical wavelet theory, due to the potential to operate with respect to the inherent geometry of the data in a more localised manner.
A range of designs have been proposed, including the diffusion wavelet \cite{Coifman}, the biorthogonal and perfect reconstruction filterbank on bipartite graphs (\cite{ortega2}, \cite{ortega3}), and the spectral graph wavelet \cite{spectral}, 
tailored to satisfy a set (or subset) of properties, which evolved from the traditional domain, such as localization in the vertex or spectral graph domain, critical sampling and invertibility, along with notions of graph-specific downsampling and graph-coarsening for a multiscale representation, as well as to facilitate generalizations to arbitrary graphs, for applications such as image processing \cite{shu}. In particular, sparsity on graphs via wavelet analysis would appear as a natural extension to its foundation in the discrete-time domain, and some works have opened its discussion through topics such as the wavelet coefficient decay at small scales of graph-regular signals \cite{ricaud} via the spectral graph wavelet transform, the tight wavelet frame transform on graphs \cite{tight}, as well as the overcomplete Laplacian pyramid transform with a spline-like interpolation step \cite{mult}. These approaches so far lacked a concrete graph wavelet design methodology which targets the annihilation of graph signals, or, alternatively, the identification of classes of graph signals which can be annihilated by existing constructions. \\
\\
With the overall agenda to explore sparsity and sparse representations on circulant graphs, we seek to build a bridge from the Euclidean to the graph domain traversing the topics of graph spline wavelet theory and multi-dimensional signal processing on graphs, while providing an intuition behind the emerging spline-like graph functions as an underlying theme. The appeal of circulant graphs pertains to their convenient set of properties such as Linear Shift Invariance (LSI)\cite{ekambaram1}, which facilitate intuitive downsampling and shifting operations, as well as to their link with the traditional domain of signal processing, where the Graph Fourier Transform (GFT) of a circulant graph is a simple permutation of the DFT. Further, circulant matrices (and hence graphs) can be efficiently stored as they are entirely characterized by only one row. Our main focus lies on undirected circulant graphs, as the associated symmetry yields real-valued filters, however, as will be revealed in Sect. $3$, most properties can be generalized to directed circulant graphs, reaffirming the established similarities with classical splines.\\
\\
In this work, we present novel families of higher-order graph (e-)spline wavelets and associated filterbanks on circulant graphs, inspired by the critically sampled `spline-like' graph wavelet filterbank by Ekambaram et al. \cite{ekambaram2} and the classical (e-)spline wavelets (\cite{spline},\cite{espline}), which can reproduce and annihilate certain classes of graph signals. By leveraging the vanishing moment property of the e-graph Laplacian matrix, as a generalization of the classical graph Laplacian, we discover (e-)spline-like functions, which bear similar properties to
the traditional cases (\cite{splines}, \cite{espline}), and give rise to associated wavelet and filterbank constructions, a subset of which we previously introduced in (\cite{spie}, \cite{icassp}). In particular, we identify the classes of smooth graph signals, which can be annihilated on the vertices of circulant graphs up to a graph-dependent border effect, as (exponential) polynomials.\\
We eventually provide generalizations to the developed theory by proposing to approximate arbitrary graphs as either communities of partitioned circulant sub-graphs or graph (Kronecker) products of circulant graphs, the latter of which serve as the building blocks of a multi-dimensional extension. Here, we resort to employing graph products to both generalize wavelet analysis to arbitrary graphs and facilitate lower-dimensional processing, while inducing and/or preserving sparsity where applicable, and proceed to introduce multi-dimensional separable and non-separable graph wavelet transforms. \\
\\
{\bf Contributions.} We list the main original contributions in this paper as follows:
\begin{enumerate}
\item Novel families of higher-order circulant \textit{graph spline wavelets}, which extend traditional vanishing moment properties to the graph domain, and associated filterbanks
\item Novel families of higher-order circulant \textit{graph e-spline wavelets}, which extend vanishing exponential moment properties to the graph domain, and associated filterbanks
\item \textit{Multi-dimensional separable} and \textit{non-separable} graph spline wavelet transforms, which generalize properties and operations to arbitrary graphs through graph products
\end{enumerate}
{\bf Related Work.} To the best of our knowledge, there do not exist comparable graph wavelet constructions on circulant graphs with the aforementioned reproduction and annihilation properties. While established graph wavelet constructions, such as the spectral or tight graph wavelet (\cite{spectral}, \cite{tight}) may achieve sufficiently sparse representations in the graph wavelet domain, i.a. for appropriate design choices of the associated wavelet kernel, there is no concrete (or intuitive) theory on what types of graph signals can be annihilated, beyond the class of piecewise-constant signals, in particular, using the properties and connectivity of the graph at hand. Sparsity on graphs has been more specifically addressed in dictionary learning on graphs \cite{dict}, which is concerned with the problem of identifying an (overcomplete) basis ${\bf D}$ under which a given graph signal ${\bf y}$ can be sparsely represented as ${\bf y}={\bf D}{\bf x}$. Furthermore, multi-dimensional wavelet analysis has been considered for bipartite graphs as the operation with respect to separate edge sets on the same vertex set within a bipartite subgraph decomposition \cite{ortega2}. In our proposed framework for graph approximation and wavelet analysis via graph product decomposition, the notion of multiple dimensions arises from the graph product operation itself, with each factor constituting a separate dimension. The present work on graph spline wavelet theory further provides the foundation for the sparse graph signal sampling and graph coarsening framework, developed in a complementary manuscript \cite{acha2}. \\  
\\
This paper is organised as follows: we introduce the notation and background theory in Section $2$, and proceed to discuss the collective of derived graph wavelet families as well as set them into the context of classical spline theory in Section $3$. Subsequently, in Section $4$, we expand the framework through the introduction of graph products. In Section $5$, we present concrete examples illustrating the application of derived graph wavelet transforms in both an artificial and data-driven setting, before concluding with an outlook on future work in Section $6$. The appendix contains all proofs not included in the main text.

\section{Preliminaries}
\subsection{Notation}
\noindent We denote vectors with boldfaced lower case letters ${\bf x}$ and matrices with boldfaced uppercase letters ${\bf A}$. Let ${\bf 1}_N$ and ${\bf 0}_N$ define the constant column vectors of length $N$ with entries of 1's and 0's respectively, and $\begin{bmatrix}{\bf x}^B\\{\bf y}^{B^{\complement}}\end{bmatrix}$ be the vector with samples of ${\bf x}$ at positions in index set $B$ and ${\bf y}$ at positions in complement $B^{\complement}$, with partitions ${\bf x}^B$ and ${\bf y}^{B^{\complement}}$ possibly interlacing. The vector and matrix norms, we will most frequently make use of, are the $l_0$-pseudo-norm, denoted with $||{\bf x}||_0=\#\{i:x_i\neq0\}$, the $l_2$-norm, given by $||{\bf x}||_2=\left(\sum_{i=1}^N |x_i|^2\right)^{1/2}$ and the Frobenius-norm of a  matrix ${\bf A}$, given by $||{\bf A}||_F=\sqrt{\sum_{i=1}^m \sum_{j=1}^n |A_{i,j}|^2}$. In addition, we define the Frobenius inner product between matrices ${\bf A}$ and ${\bf B}$ as $\langle {\bf A},{\bf B}\rangle_F=tr({\bf A}^T{\bf B})$. Given a matrix ${\bf L}$ and the sets of indices $A$ and $B$, the notation ${\bf L}(A,B)$ indicates that the corresponding rows and columns in ${\bf L}$ are chosen. At last, we note that, contrary to classical wavelet theory, we reverse the notation of dual pairs, by denoting the synthesis filters with $(\tilde{{\bf G}},\tilde{{\bf H}})$ and the analysis filters with $({\bf G},{\bf H})$.

\subsection{GSP Theory: Circulant Graphs}
For the ensuing discussion, we consider graphs, which are undirected, connected, (un-)weighted, and do not contain any self-loops. Let a graph $G=(V,E)$ be defined by a set $V=\{0,...,N-1\}$ of vertices, with cardinality $|V|=N$, and a set $E$ of edges. The connectivity of $G$ is given via its adjacency matrix ${\bf A}$, with $A_{i,j}>0$ if there is an edge between nodes $i$ and $j$, and $A_{i,j}=0$ otherwise, and its degree matrix ${\bf D}$, which is diagonal with entries $D_{i,i}=\sum_{j} A_{i,j}$. The non-normalized graph Laplacian matrix ${\bf L}={\bf D}-{\bf A}$ of $G$ has a complete set of orthonormal eigenvectors $\{{\bf u}_{l}\}_{l=0}^{N-1}$, with corresponding nonnegative eigenvalues $0= \lambda_0<\lambda_1\leq \dots \leq \lambda_{N-1}$. \\
Circulant graphs represent a special class of graphs as they reveal a set of convenient properties, which can be used for the preservation of traditional signal processing concepts and operations (see Fig. $1$ for examples). In particular, a circulant graph $G$ is defined via a generating set $S=\{s_1,\dots,s_M\}$, with $0<s_k\leq N/2$, whose elements indicate the existence of an edge between node pairs $(i,(i\pm s_k)_N),\forall s_k\in S$, where $()_N$ is the$\mod N$ operation; more intuitively, a graph is circulant if its associated graph Laplacian is a circulant matrix under a particular node labelling \cite{ekambaram1}. Furthermore, we can define the symmetric, circulant graph Laplacian matrix ${\bf L}$, with first row $\lbrack l_0\quad ... \quad l_{N-1}\rbrack$, via its representer polynomial $l(z)=\sum_{i=0}^{N-1} l_i z^i$ with $z^{-j}=z^{N-j}$. In particular, $l(z)$ gives rise to the eigenvalues of ${\bf L}$, as ordered by the DFT-matrix, at frequency locations $\frac{2\pi i k}{N}$ via $l(e^{\frac{2\pi i k}{N}})=\lambda_k,\enskip k=0,...,N-1$ \cite{circul}. 
Let the $M$-connected ring graph $G$, as part of a sub-class of circulant graphs, be defined via the generating set $S=\{1,...,M\}$, such that there exists an edge between nodes $i$ and $j$, if $(i-j)_{N}\leq M$ is satisfied. The associated circulant graph Laplacian matrix is banded of bandwidth $M$. Another relevant class is that of bipartite graphs, which are characterized by a vertex set $V=X\cup Y$ consisting of two disjoint sets $X$ and $Y$, such that no two vertices within the same set are adjacent.\\
In GSP theory, a graph signal ${\bf x}$ is traditionally a real-valued scalar function defined on the vertices of a graph $G$, with sample value $x(i)$ at node $i$, and can be represented as the vector ${\bf x}\in\mathbb{R}^N$ \cite{shu}; in this work, we extend this definition to include complex-valued graph signals ${\bf x}\in\mathbb{C}^N$, for illustration purposes, while maintaining real weights between connections on $G$. The Graph Fourier Transform (GFT) $\hat{{\bf x}}$ of ${\bf x}$ defined on $G$, is the representation in terms of the graph Laplacian eigenbasis ${\bf U}=\lbrack {\bf u}_0| \cdots |{\bf u}_{N-1}\rbrack$ such that $\hat{{\bf x}}={\bf U}^H {\bf x}$, where $H$ denotes the Hermitian transpose, extending the concept of the Fourier transform to the graph domain \cite{shu}. We note that in case of a circulant graph, the GFT can be represented by the traditional DFT, up to a permutation, as circulant matrices are diagonalisable by the DFT-matrix.\\
Due to their regularity, circulant graphs lend themselves for defining meaningful downsampling operations. As established by Ekambaram et al. in \cite{ekambaram1}, one can downsample a given graph signal by $2$ on the vertices of $G$ with respect to any element $s_k\in S$. For simplicity, we resort to the simple downsampling operation with respect to the outmost cycle ($s_1=1$) of a given circulant $G$, i.e. skipping every other labelled node, assuming that the graph at hand is connected such that $s_1\in S$, and $N=2^n$ for $n\in \mathbb{N}$.
\begin{figure}[tbp]
	\centering
	
	{\includegraphics[width=1.3in]{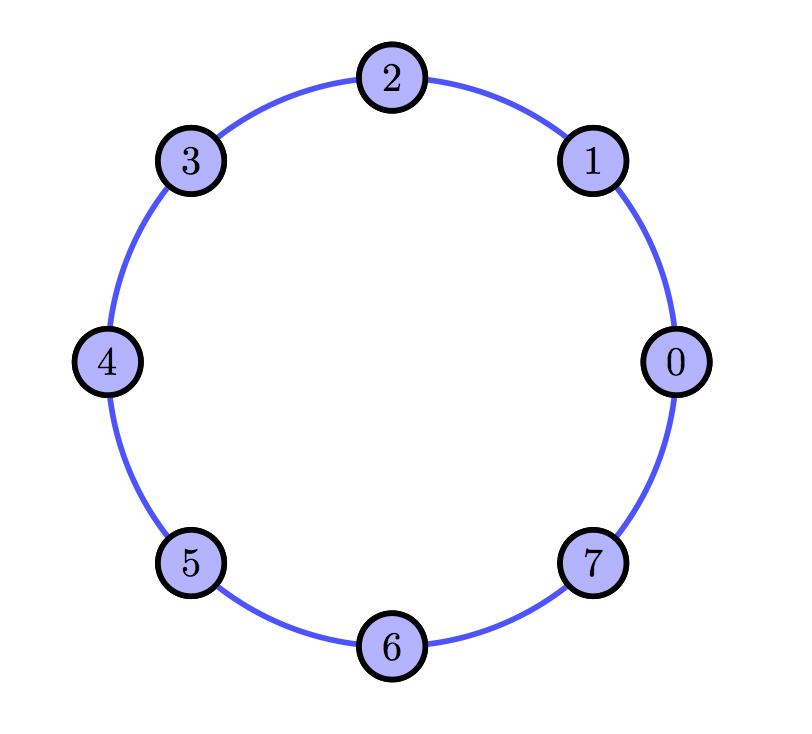}}
	{\includegraphics[width=1.4in]{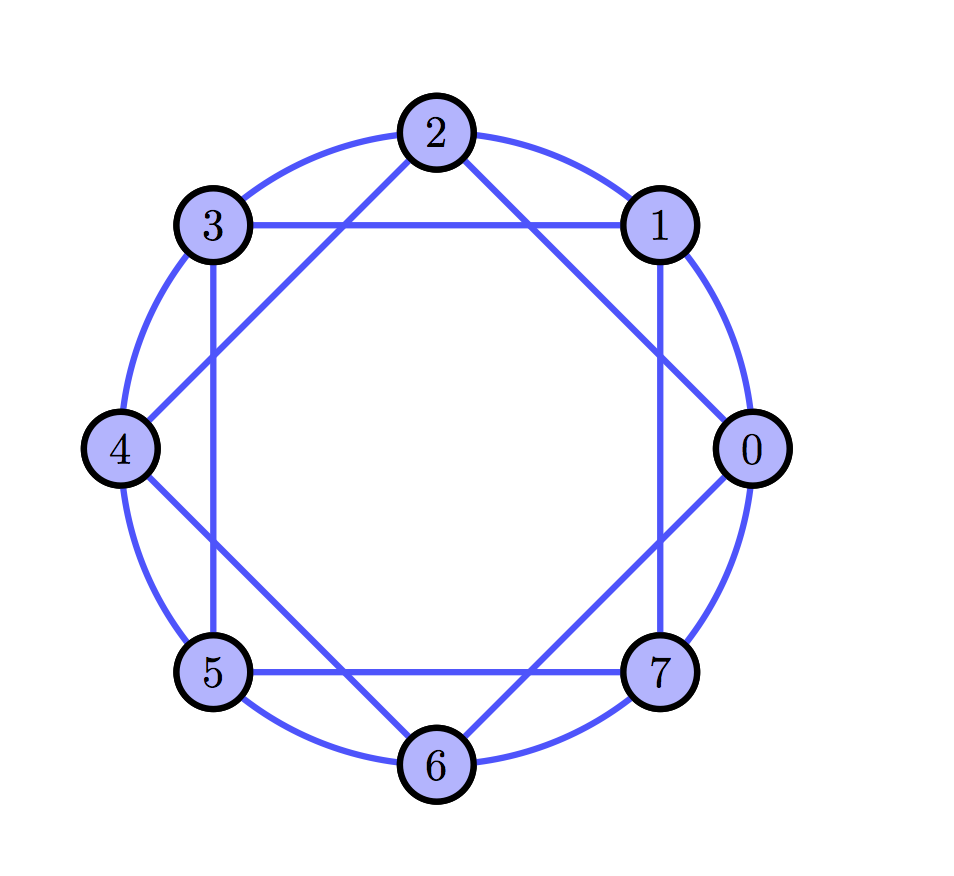}}
	{\includegraphics[width=1.3in]{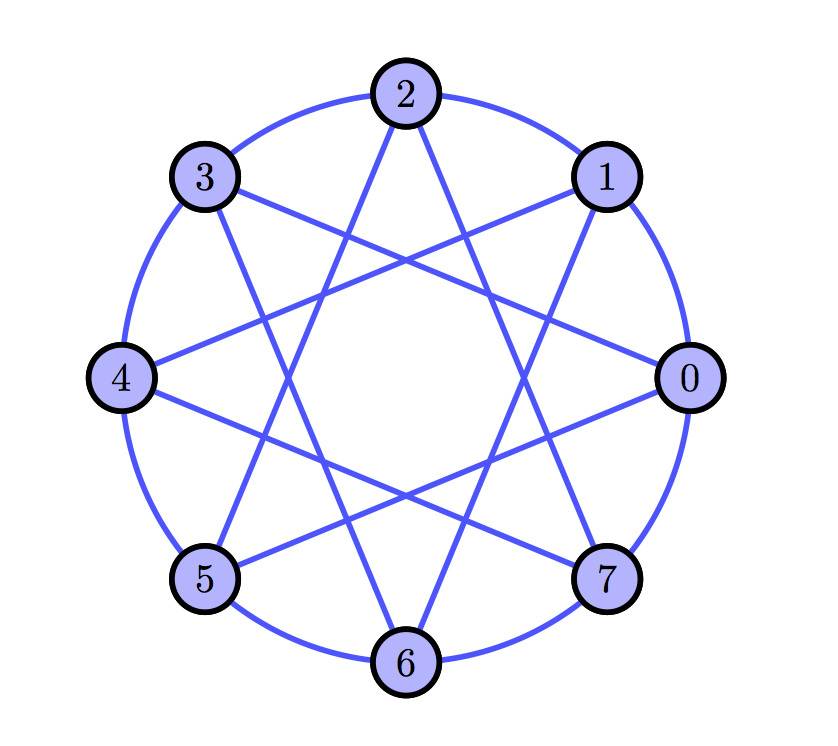}}	
	\caption{Circulant Graphs with generating sets $S=\{1\}$, $S=\{1,2\}$, and $S=\{1,3\}$ (f. left).}
\end{figure}
In addition, the same authors introduced a set of vertex-domain localized filters constituting the `spline-like' graph wavelet filterbank on circulant graphs (\cite{ekambaram2},\cite{Ekambaram3}), which satisfies critical sampling and perfect reconstruction properties:
\begin{thmm} [\cite{ekambaram2}]The set of low-and high-pass filters, defined on an undirected connected circulant graph with adjacency matrix ${\bf A}$ and degree $d$ per node, take (weighted) averages and differences with respect to neighboring nodes at $1$-hop distances of a given graph signal, and can be expressed as:
\begin{equation}\label{eq:eq1}{\bf H}_{LP}=\frac{1}2{}\left({\bf I}_N+\frac{{\bf A}}{d}\right)\end{equation}
\begin{equation}\label{eq:eq2}{\bf H}_{HP}=\frac{1}2{}\left({\bf I}_N-\frac{{\bf A}}{d}\right).\end{equation}
The filterbank is critically sampled and invertible as long as at least one node retains the low-pass component, while the complementary set of nodes retains the high-pass components.
\end{thmm}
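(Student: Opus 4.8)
The plan is to treat the two assertions in turn: the closed form of the two filters, and then the critical-sampling and invertibility claim. For the closed form I would simply unpack the defining operation. On a circulant graph every node has the same degree $d$, so $\frac{1}{d}\mathbf{A}$ is an averaging operator for which $\big(\frac{1}{d}\mathbf{A}\mathbf{x}\big)(i)$ is exactly the mean of $\mathbf{x}$ over the $1$-hop neighbourhood of node $i$; hence the map returning at each node the average of the node's own sample and of this neighbourhood mean is $\frac12\big(\mathbf{I}_N+\frac{1}{d}\mathbf{A}\big)$, and the one returning the corresponding difference is $\frac12\big(\mathbf{I}_N-\frac{1}{d}\mathbf{A}\big)$, which gives \eqref{eq:eq1}--\eqref{eq:eq2}. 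I record for later use that $\mathbf{H}_{LP}+\mathbf{H}_{HP}=\mathbf{I}_N$. Critical sampling is then immediate: if $B\subseteq V$ is the set of nodes retaining the low-pass output and $B^{\complement}$ the set retaining the high-pass output, the analysis stage produces $|B|+|B^{\complement}|=N$ coefficients from the $N$ input samples.

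For invertibility, collect the retained coefficients into one vector $\mathbf{y}\in\mathbb{C}^N$ (the ordering is irrelevant for nonsingularity), so that $\mathbf{y}=\mathbf{S}_B\mathbf{H}_{LP}\mathbf{x}+\mathbf{S}_{B^{\complement}}\mathbf{H}_{HP}\mathbf{x}$, with $\mathbf{S}_B,\mathbf{S}_{B^{\complement}}$ the diagonal $0$--$1$ indicator matrices of $B$ and $B^{\complement}$. Inserting \eqref{eq:eq1}--\eqref{eq:eq2} and using $\mathbf{S}_B+\mathbf{S}_{B^{\complement}}=\mathbf{I}_N$ collapses this to $\mathbf{y}=\mathbf{M}\mathbf{x}$ with $\mathbf{M}=\frac12\big(\mathbf{I}_N+\mathbf{E}\,\frac{1}{d}\mathbf{A}\big)$, where $\mathbf{E}=\mathbf{S}_B-\mathbf{S}_{B^{\complement}}$ is the diagonal $\pm1$ matrix equal to $+1$ on $B$ and $-1$ on $B^{\complement}$. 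Thus the filterbank is invertible precisely when $-d$ is not an eigenvalue of $\mathbf{E}\mathbf{A}$.

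I would then rule this out as follows. Suppose $\mathbf{E}\mathbf{A}\mathbf{v}=-d\mathbf{v}$ with $\mathbf{v}\neq\mathbf{0}$. Since $\mathbf{E}$ is orthogonal this gives $\|\mathbf{A}\mathbf{v}\|_2=d\|\mathbf{v}\|_2$, and as the $d$-regular connected graph $G$ has symmetric adjacency matrix of spectral radius $d$, equality in $\|\mathbf{A}\mathbf{v}\|_2\le d\|\mathbf{v}\|_2$ forces $\mathbf{v}$ into the span of the eigenvectors of $\mathbf{A}$ for the extreme eigenvalues $\pm d$. If $G$ is not bipartite this span is $\mathbb{R}\mathbf{1}_N$, so $\mathbf{v}=c\mathbf{1}_N$ and the identity $\mathbf{A}\mathbf{v}=-d\mathbf{E}\mathbf{v}$ becomes $\mathbf{1}_N=-\mathbf{E}\mathbf{1}_N$, i.e.\ $\mathbf{E}=-\mathbf{I}_N$ and $B=\emptyset$. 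If $G$ is bipartite with colour classes $X,Y$, the span is $\operatorname{span}\{\mathbf{1}_X,\mathbf{1}_Y\}$ and regularity gives $\mathbf{A}\mathbf{1}_X=d\mathbf{1}_Y$, $\mathbf{A}\mathbf{1}_Y=d\mathbf{1}_X$; writing $\mathbf{v}=a\mathbf{1}_X+b\mathbf{1}_Y$ and comparing $\frac{1}{d}\mathbf{A}\mathbf{v}=-\mathbf{E}\mathbf{v}$ on $X$ and on $Y$ separately shows that $\mathbf{E}$ must be constant on each class and that the only such sign patterns admitting a nonzero $\mathbf{v}$ are $\mathbf{E}=\mathbf{I}_N$ (forcing $B=V$) or $\mathbf{E}=-\mathbf{I}_N$ (forcing $B=\emptyset$). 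In every case the hypothesis that at least one node retains the low-pass output while its complement retains the high-pass outputs --- that is, $\emptyset\neq B\subsetneq V$ --- is contradicted, so $\mathbf{M}$ is nonsingular; the synthesis filters $(\tilde{\mathbf{G}},\tilde{\mathbf{H}})$ are then obtained from the block decomposition of $\mathbf{M}^{-1}$.

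The step I expect to be the real obstacle is the bipartite case: there $\frac{1}{d}\mathbf{A}$ already carries $-1$ in its spectrum, so $\mathbf{H}_{LP}$ on its own is singular, and one must show that retaining the high-pass channel on a nonempty $B^{\complement}$ rotates the offending subspace out of $\ker\big(\mathbf{I}_N+\mathbf{E}\frac{1}{d}\mathbf{A}\big)$. This is exactly where the explicit description of the $\pm d$-eigenspace of $\mathbf{A}$ and a careful case analysis on whether $\mathbf{E}$ is constant on each colour class are needed; the non-bipartite case, by contrast, reduces at once to the trivial obstruction $B=\emptyset$.
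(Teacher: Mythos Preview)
Your proof is correct and is essentially the $k=1$ specialisation of the paper's argument for Theorem~\ref{thm31} in Appendix~A.1: both reduce the nullspace of $\tfrac{1}{2}\bigl(\mathbf{I}_N+\mathbf{K}\tfrac{\mathbf{A}}{d}\bigr)$ to the extreme eigenspace of $\tfrac{\mathbf{A}}{d}$ via a norm identity (the paper writes it as $\sum_i r(i)^2(1-\gamma_i^2)=0$, you as $\|\mathbf{A}\mathbf{v}\|_2=d\|\mathbf{v}\|_2$) and then dispatch the non-bipartite and bipartite residual cases by the same explicit inspection of the $\pm 1$-eigenvectors. The only cosmetic difference is that you work directly with the nullspace vector $\mathbf{v}$, whereas the paper expands $\mathbf{z}=\mathbf{V}\mathbf{r}$ in the eigenbasis from the outset.
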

\noindent Multiscale analysis can be conducted by iterating the result on the respective downsampled low-pass branches, where corresponding coarsened graphs are obtained through suitable reconnection strategies \cite{Ekambaram3}.
\subsubsection{Downsampling and Reconnection on Circulant Graphs}
Succeeding the definition of a wavelet transform on a circulant graph, we examine the problem of identifying suitable coarsened graph(s) on the vertices of which the downsampled low-and high-pass-representations of the original graph signal can be defined, so as to facilitate a multiresolution decomposition in the graph domain. 
While a downsampling pattern can be easily identified, it is not straight-forward to determine if or how to reconnect the reduced set of vertices. In general, the set of desired properties of a coarsened graph, comprising closure, preservation of the initial connectivity and spectral characterisation of the graph and/or graph type, among others, (see \cite{mult} for a more detailed review), is rather difficult to satisfy entirely, and priorities need to be set in keeping with the overall goal to be achieved. Since we are interested in a sparse graph wavelet representation, and the obtained sparsity $K$ may increase with the bandwidth $M$ of the adjacency matrix, as will be clarified later on, we favor a graph reconnection which reduces or maintains $M$ when conducting multiresolution analysis. Kron-reduction \cite{Kron} is a commonly used method, which employs a sub-matrix approximation scheme, thus taking into account the entire given graph Laplacian matrix for a more accurate dimensionality-reduced graph-representation, yet it often leads to denser graphs (and thus an increased matrix bandwidth) due to its maximum reconnection. In particular, given the graph Laplacian matrix ${\bf L}$ and an index set $V_{\alpha}$ of retained nodes, Kron-reduction evaluates the graph-Laplacian matrix $\tilde{{\bf L}}$ of the coarsened graph as
 \[ \tilde{{\bf L}}={\bf L}(V_{\alpha},V_{\alpha})-{\bf L}(V_{\alpha},V_{\alpha}^{\complement}){\bf L}(V_{\alpha}^{\complement},V_{\alpha}^{\complement})^{-1}{\bf L}(V_{\alpha},V_{\alpha}^{\complement})^T.\]
In the traditional domain, the downsampled signal samples are `reconnected' through a simple stacking operation, whose graph-analogy on a simple cycle would correspond to the reconnection of $2$-hop neighbours, yet a straight-forward graph generalization is hindered by the overall complex connectivity of a graph, making it unclear to which extent one generally needs to reconnect downsampled nodes.
For a multilevel sparse graph wavelet representation, we therefore resort to two variations, both of which preserve circularity with little or no reconnection, to determine the coarse graph $\tilde{G}$: $(1)$ we do not reconnect nodes, and only keep existing edges (with exception of maintaining $s=1\in S$ to ensure that $\tilde{G}$ is connected), $(2)$ we (re-)connect a subset of nodes, such that $\tilde{G}$ is identical in structure to the initial $G$, i.e. it has the same generating set $S$. \\
We note that while the former approach leads to the sparsest possible solution, as given an $M$-connected (banded) circulant graph, we continuously remove edges resulting from odd elements in the generating set, reducing its band, it fails to preserve the global connectivity of the initial graph, which for the objective at hand, we deem of secondary importance. In addition, it produces the trivial simple cycle for bipartite circulant graphs. The latter approach, while leading to a slightly less sparse representation due to the constant bandwidth, reconnects a subset of nodes, which were initially connected via a path, and preserves connectivity through an exact replication in lower dimension. In the complementary work on sampling theory on graphs (\cite{acha2}, Lemma $4.2$), it is further proved that the latter approach specifically preserves spectral graph information.
\section{Families of Spline Wavelets on Circulant Graphs}
The concept of graph-specific smoothness of a signal ${\bf x}$ on a graph $G$ with graph Laplacian matrix ${\bf L}$ has been introduced via the graph Laplacian quadratic form $S_2({\bf x})={\bf x}^T{\bf L}{\bf x}$, and successfully leveraged in i.a. denoising schemes \cite{shu}, however, it provides little information on the actual sparsity of graph signals, as measured per the $l_0$-norm $||{\bf L}{\bf x}||_0$. Therefore, in this work, we choose to adhere to the standard, graph-independent, notion of annihilation of polynomial or exponential signals (for a given labelling), when considering compressibility of graph signals. \\
In classical signal processing, a high-pass filter ${\bf h}$ with taps $h_k$ is known to have $N$ vanishing moments when it is orthogonal with respect to the subspace of polynomials of up to degree $N-1$, that is when 
\[m_n=\sum_{k\in\mathbb{Z}} h_k k^n=0,\quad\text{for}\enskip n=0,...,N-1,\]
where $m_n$ is the $n$-th order moment of ${\bf h}$. Coincidentally, a filter $H(z)$ with $N$ vanishing moments is characterized by $N$ zeros at $z=1$. In \cite{Coifman}, this property is extended to graphs and manifolds by defining the number of vanishing moments of a scaling function as the number of eigenfunctions of the given diffusion operator ${\bf T}$ to which the former is orthogonal, up to a precision measure. Further, in \cite{vanla} the eigenvectors of the graph Laplacian are selected as the basis for `generalized vanishing moments' on graphs. However, this definition of vanishing moments on graphs does not accommodate equivalencies between the graph and traditional domain, when considering for instance discrete periodic (time) signals on a simple cycle.
\\
Contrary to the above, we therefore choose to maintain the traditional definition throughout this work for purposes of illustrating analogies to our developed spline wavelet theory on circulant graphs, yet will briefly revisit this interpretation when discussing the properties and implications of the novel e-graph Laplacian operator. 
\begin{defe}
A graph signal ${\bf p}\in \mathbb{R}^N$ defined on the vertices of a graph $G$ is (piecewise) polynomial if its labelled sequence of sample values, with value $p(i)$ at node $i$, is the discrete, vectorized version of a standard (piecewise) polynomial, such that ${\bf p}=\sum_{j=1}^K {\bf p}_j \circ {\bf 1}_{\lbrack t_j,t_{j+1})}$, where $t_1=0$ and $t_{K+1}=N$, with pieces
$p_j(t)=\sum_{d=0}^D a_{d,j} t^d,\enskip j=1,...,K$, for $t\in\mathbb{Z}^{\geq 0}$, coefficients $a_{d,j} \in\mathbb{R}$, and maximum degree $D=deg(p_j(t))$.
\end{defe}
In the following discussion, we develop a set of novel Graph Wavelet Transform (GWT) designs, which are localised in the vertex domain, critically sampled, invertible, and finally, tailored to the annihilation of polynomial graph signals, the foundation of which is laid by the ensuing result:
\begin{lem} \label{lem31}For an undirected, circulant graph $G=(V,E)$ of dimension $N$, the associated representer polynomial $l(z)=l_0 +\sum_{i=1}^{M} l_{i} (z^i+z^{-i})$ of graph Laplacian matrix ${\bf L}$, with first row $\lbrack l_0 \ l_1\ l_2 \quad ... \quad l_2 \ l_{1}\rbrack$, has two vanishing moments. Therefore, the operator ${\bf L}$ annihilates polynomial graph signals of up to degree $D=1$, subject to a border effect determined by the bandwidth $M$ of ${\bf L}$, provided $2M<<N$.\end{lem}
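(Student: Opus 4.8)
The plan is to connect the ``two vanishing moments'' claim to the classical criterion of two zeros at $z=1$, exploiting the fact that for a circulant graph the representer polynomial $l(z)$ plays exactly the role of the filter symbol $H(z)$ in the discrete-time setting. First I would observe that the Laplacian structure ${\bf L} = {\bf D} - {\bf A}$ on a circulant graph forces $l_0 = \sum_{i=1}^{M} 2 l_i$ (the row sums of ${\bf L}$ vanish, since ${\bf L}{\bf 1}_N = {\bf 0}_N$), which is precisely the statement $l(1) = l_0 - \sum_{i=1}^M l_i(1 + 1) = 0$; this gives the zeroth-order moment $m_0 = \sum_k l_k = 0$. Then I would compute the first-order moment $m_1 = \sum_k l_k k$ — or equivalently evaluate the derivative of $l(z)$ at $z=1$ after writing the filter with symmetric support indices $k \in \{-M,\dots,M\}$ — and note that the symmetry $l_i = l_{-i}$ makes $\sum_k l_k k = 0$ termwise by odd-even cancellation. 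Hence $l(z)$ has (at least) a double zero at $z=1$, i.e. two vanishing moments.

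Next I would translate this spectral/symbol statement back into a direct statement about the action of ${\bf L}$ on polynomial graph signals. Writing ${\bf L}{\bf p}$ at an interior node $i$ (one at distance $> M$ from the wrap-around boundary of the labelling) as the finite-difference-type sum $\sum_{j=-M}^{M} l_{j}\, p(i+j)$, the two vanishing moments mean this annihilates any affine $p(t) = a_0 + a_1 t$: the $a_0$ term is killed by $m_0 = 0$ and the $a_1 t$ term by $m_0$ and $m_1$ both vanishing (expanding $p(i+j)$ in powers of $j$). So $({\bf L}{\bf p})(i) = 0$ at every node $i$ whose $M$-hop neighbourhood does not straddle the index wrap, which is the asserted border effect: the exceptional nodes are the $O(M)$ nodes near the ``seam'' of the circular labelling, and provided $2M \ll N$ these form a negligible fraction. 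For the piecewise case one additionally restricts to nodes whose $M$-neighbourhood lies within a single polynomial piece $[t_j, t_{j+1})$, so the border effect also picks up $O(M)$ nodes around each breakpoint $t_j$.

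The main obstacle — really the only subtle point — is handling the modular/circulant boundary carefully: on a true circulant matrix every row is a full cyclic shift, so $({\bf L}{\bf p})(i) = \sum_j l_j\, p((i+j)_N)$ uses the \emph{wrapped} index, and a globally affine function on $\{0,\dots,N-1\}$ is \emph{not} affine across the wrap (it jumps from $N-1$ back to $0$). This is exactly why the annihilation can only be claimed ``up to a border effect'': I would make precise that the claim is the vanishing of $({\bf L}{\bf p})(i)$ for $i$ in the index range $\{M, M+1, \dots, N-1-M\}$ (and away from breakpoints in the piecewise case), where the cyclic wrap is invisible and the computation reduces to the ordinary translation-invariant one governed by the two vanishing moments. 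The $2M \ll N$ hypothesis is what guarantees this good range is non-empty and, in fact, comprises almost all nodes. Everything else is the routine moment bookkeeping sketched above, so I would not belabour it.
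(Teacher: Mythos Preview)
Your argument is correct but proceeds differently from the paper. The paper establishes the double zero at $z=1$ by an explicit algebraic factorization: writing the representer polynomial (in terms of the adjacency weights $d_i = -l_i$) as
\[
l(z) \;=\; \sum_{i=1}^{M} d_i\,(z^{i}-1)(z^{-i}-1),
\]
and then observing that each summand is divisible by $(z-1)(z^{-1}-1)$ via the identity $z^{n}-1=(z-1)(1+z+\cdots+z^{n-1})$. You instead verify the moment conditions $m_0=m_1=0$ directly---the zeroth from the row-sum property ${\bf L}{\bf 1}_N={\bf 0}_N$, the first from the symmetry $l_{j}=l_{-j}$---and then spell out the action of ${\bf L}$ on an affine signal at an interior node. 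Both routes reach the same conclusion; yours is arguably more transparent for a reader coming from classical filter theory and makes the border-effect discussion more explicit, while the paper's factorization has the advantage of generalizing verbatim to the e-graph Laplacian in Lemma~\ref{lem32}, where one obtains $\tilde{l}(z)=\sum_j d_j(1-e^{i\alpha j}z^{j})(1-e^{-i\alpha j}z^{j})(-z^{-j})$ and the exponential vanishing moments are read off immediately. One minor sign slip: with $l(z)=l_0+\sum_{i} l_i(z^{i}+z^{-i})$ as stated in the lemma, the row-sum condition reads $l_0 + 2\sum_i l_i = 0$, not $l_0 = 2\sum_i l_i$; this does not affect your argument.
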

\begin{proof}
The representer polynomial of ${\bf L}$ with degree $d=\sum_{i=1}^M 2d_i$ per node and symmetric weights $d_i=A_{j,(i+j)_N}$, can be expressed as:
\[ l(z)=(-d_M z^{-M}-...-d_1z^{-1}+d-d_1 z-...-d_M z^M)=\sum_{i=1}^M d_i (z^i-1)(z^{-i}-1),\]
whereby factors are divisible by $(z^{\pm 1}-1)$ respectively using the equality $z^n-1=(z-1)(1+z+...+z^{n-1})$, thus proving that the matrix ${\bf L}$ has two vanishing moments.
Therefore, for a sufficiently small $M$ with respect to the dimension $N$ of the graph $G$, or in other words, if the adjacency matrix of $G$ is a symmetric, banded circulant matrix of bandwidth $M$, the corresponding ${\bf L}$ annihilates linear polynomial graph signals on $G$ up to a boundary effect.
\end{proof}
Given that ${\bf L}$ is circulant, we can generalize this property by considering ${\bf L}^k,k\in\mathbb{N}$, which has $2k$ vanishing moments due to the equivalency between polynomial and circulant matrix multiplication. In general, we observe that for an arbitrary graph $G$, whose graph Laplacian matrix ${\bf L}$ has rows of the above form, we may achieve similar annihilation, yet this property is not carried over to higher order $k$. Following the interpretation of the graph Laplacian matrix as a high-pass filter, this insight gives rise to a new range of graph wavelet filterbanks, whose high-pass filters can annihilate higher-order polynomial graph signals for a sparse graph wavelet domain representation, as we will demonstrate in the next sections. 
\subsection{Graph Spline Wavelets}
We begin by showing that the `spline-like' graph wavelet transform (see Eqns. (\ref{eq:eq1})-(\ref{eq:eq2})) can be generalized to higher order by raising its filters to the $k$-th power, thereby incorporating the previously detected vanishing moment property:
\begin{thm}\label{thm31}
Given the undirected, and connected circulant graph $G=(V,E)$ of dimension $N$, with adjacency matrix ${\bf A}$ and degree $d$ per node, we define the higher-order graph-spline wavelet transform (HGSWT), composed of the low-and high-pass filters
\begin{equation}\label{eq:t31}{\bf H}_{LP}=\frac{1}{2^k}\left({\bf I}_N+\frac{{\bf A}}{d}\right)^k\end{equation}
\begin{equation}\label{eq:t231}{\bf H}_{HP}=\frac{1}{2^k}\left({\bf I}_N-\frac{{\bf A}}{d}\right)^k\end{equation}
whose associated high-pass representer polynomial $H_{HP}(z)$ has $2k$ vanishing moments. This filterbank is invertible for any downsampling pattern, as long as at least one node retains the low-pass component, while the complementary set of nodes retains the high-pass components.
\end{thm}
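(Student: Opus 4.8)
The statement splits into two assertions: the vanishing‑moment count for $H_{HP}(z)$, and invertibility of the critically‑sampled transform. The plan is to first rewrite both filters as polynomials in the circulant graph Laplacian ${\bf L}$. Since $G$ is $d$‑regular we have ${\bf A}=d{\bf I}_N-{\bf L}$, hence ${\bf I}_N-{\bf A}/d={\bf L}/d$ and ${\bf I}_N+{\bf A}/d=(2d{\bf I}_N-{\bf L})/d$, so that
\[{\bf H}_{HP}=\frac{1}{(2d)^k}{\bf L}^{k},\qquad {\bf H}_{LP}=\frac{1}{(2d)^k}\bigl(2d{\bf I}_N-{\bf L}\bigr)^{k}.\]
For the moment count, ${\bf L}^{k}$ is circulant and, by the equivalence between circulant‑matrix and polynomial multiplication (modulo $z^{N}-1$), its representer polynomial is $l(z)^{k}$; by Lemma~\ref{lem31}, $l(z)$ carries the factor $(z-1)(z^{-1}-1)$, i.e.\ a double zero at $z=1$, so $l(z)^{k}$ — and hence $H_{HP}(z)$, which is a scalar multiple of it — has a zero of order $2k$ at $z=1$, that is, $2k$ vanishing moments; annihilation of polynomial graph signals up to degree $2k-1$ then follows away from the $2Mk$‑wide border, provided $2Mk\ll N$ so that no wrap‑around occurs. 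This part is essentially a one‑line corollary of Lemma~\ref{lem31}.

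The substantive part is invertibility. I would write the critically‑sampled transform in its in‑place form ${\bf T}={\bf P}_B{\bf H}_{LP}+{\bf P}_{B^{\complement}}{\bf H}_{HP}$, where $B$ (nonempty) indexes the nodes retaining the low‑pass component, $B^{\complement}$ those retaining the high‑pass component, and ${\bf P}_B,{\bf P}_{B^{\complement}}$ are the complementary diagonal $0/1$ projectors; this differs from the stacked form only by a row permutation. Suppose ${\bf T}{\bf x}={\bf 0}$ for a real ${\bf x}$. Because ${\bf P}_B$ and ${\bf P}_{B^{\complement}}$ act on disjoint coordinate sets, this forces ${\bf H}_{LP}{\bf x}$ to vanish on $B$ and ${\bf H}_{HP}{\bf x}$ to vanish on $B^{\complement}$; in particular the two vectors have disjoint supports, so $\langle {\bf H}_{LP}{\bf x},{\bf H}_{HP}{\bf x}\rangle=0$. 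Since ${\bf H}_{LP}$ and ${\bf H}_{HP}$ are symmetric and commute (both are polynomials in the symmetric ${\bf L}$), this reads ${\bf x}^{T}\bigl({\bf H}_{LP}{\bf H}_{HP}\bigr){\bf x}=0$, and ${\bf H}_{LP}{\bf H}_{HP}=(2d)^{-2k}{\bf L}^{k}(2d{\bf I}_N-{\bf L})^{k}$ is positive semidefinite.

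It then remains to read off the kernel of this operator. Its eigenvalues are $(2d)^{-2k}\lambda_i^{k}(2d-\lambda_i)^{k}\ge 0$ (using $0\le\lambda_i\le 2d$ for a $d$‑regular graph) and they vanish precisely when $\lambda_i\in\{0,2d\}$. For connected $G$ the eigenvalue $0$ is simple, with eigenvector ${\bf 1}_N$, while $2d$ is an eigenvalue only if $G$ is bipartite, in which case it too is simple, with eigenvector the $\pm1$ bipartition signal ${\bf v}$; both ${\bf 1}_N$ and ${\bf v}$ are entrywise nonzero. Hence ${\bf x}=c_1{\bf 1}_N$ (non‑bipartite) or ${\bf x}=c_1{\bf 1}_N+c_2{\bf v}$ (bipartite). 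Since ${\bf 1}_N$ and ${\bf v}$ are ${\bf L}$‑eigenvectors for $0$ and $2d$, one has ${\bf H}_{LP}{\bf 1}_N={\bf 1}_N$, ${\bf H}_{LP}{\bf v}={\bf 0}$, and ${\bf H}_{HP}{\bf v}={\bf v}$; the requirement that ${\bf H}_{LP}{\bf x}$ vanish on $B$ then gives $c_1=0$ (as $B\ne\emptyset$ and ${\bf 1}_N$ has no zero entry), and the requirement that ${\bf H}_{HP}{\bf x}$ vanish on $B^{\complement}$ gives $c_2=0$ as soon as at least one node retains the high‑pass component. Thus ${\bf x}={\bf 0}$, ${\bf T}$ is invertible, and since $B$ was an arbitrary (nonempty, with nonempty complement) partition class this holds for any downsampling pattern; for $k=1$ the argument specializes to the filterbank of Eqns.~(\ref{eq:eq1})–(\ref{eq:eq2}).

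I expect the third step — identifying the kernel of ${\bf H}_{LP}{\bf H}_{HP}$ exactly as the span of the $\lambda\in\{0,2d\}$ eigenvectors, and arguing that each of these is entrywise nonzero so cannot vanish on a nonempty node set — to be the only genuine obstacle; everything before it (reduction to a polynomial in ${\bf L}$, the disjoint‑support orthogonality, and the semidefiniteness) is routine. One could alternatively avoid spectral theory by recasting invertibility of ${\bf T}$ as nonsingularity of ${\bf A}+d{\bf E}$ with ${\bf E}$ the $\pm1$ sign pattern of the sampling, which is weakly (irreducibly) diagonally dominant, but a small amount of spectral information is still needed to discard the boundary eigenvectors, so the route above is the cleaner one and makes the bipartite exception transparent.
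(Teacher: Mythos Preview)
Your proof is correct and reaches the same endpoint as the paper, but by a genuinely shorter route. The paper expands $({\bf I}_N\pm{\bf A}/d)^k$ with the binomial theorem, separates even and odd powers, rearranges the nullspace equation, takes $\ell_2$-norms of both sides, and then reassembles the resulting sums via the identity $A_i^2-B_i^2=(1-\gamma_i^2)^k$ to obtain $\sum_i r(i)^2(1-\gamma_i^2)^k=0$. Your disjoint-support observation $\langle {\bf H}_{LP}{\bf x},{\bf H}_{HP}{\bf x}\rangle=0$ lands on the identical quadratic form in one line, since ${\bf H}_{LP}{\bf H}_{HP}=\tfrac{1}{4^k}\bigl({\bf I}_N-({\bf A}/d)^2\bigr)^k$ has exactly the eigenvalues $\tfrac{1}{4^k}(1-\gamma_i^2)^k$. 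The kernel analysis (eigenvalues $\gamma_i=\pm1$, simplicity from connectedness, entrywise nonvanishing of ${\bf 1}_N$ and the bipartition signal) is then handled the same way in both arguments. What your route buys is brevity and transparency---no binomial bookkeeping, no difference-of-squares trick; what the paper's route buys is a template that it reuses verbatim for the e-spline filterbank of Theorem~\ref{thm32}, where the product $\prod_n(\beta_n^2-\gamma_i^2)^k$ need not be sign-definite, though your approach would expose those extra conditions equally well. One small remark: your parenthetical ``nonempty complement'' is needed in the bipartite case to force $c_2=0$, and the paper's own proof glosses over this edge case; in the non-bipartite case $B\ne\emptyset$ alone suffices, exactly as you note.
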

\noindent \textit{Proof.} See Appendix $A.1$.\\
\\
In particular, given a graph signal ${\bf p}\in\mathbb{R}^N$ defined on $G$, the \textit{HGSWT} yields
\[{\bf \tilde{p}}=\left(\frac{1}{2}({\bf I}_N+{\bf K}){\bf H}_{LP}+\frac{1}{2}({\bf I}_N-{\bf K}){\bf H}_{HP}\right){\bf p}=\frac{1}{2}({\bf I}_N+{\bf K}) {\bf \tilde{p}}_{LP}+\frac{1}{2}({\bf I}_N-{\bf K}){\bf \tilde{p}}_{HP}\]
where ${\bf K}$ is a diagonal sampling matrix, with $K_{i,i}=1$ at $i=0,2,...,N-2$ and $K_{i,i}=-1$ otherwise, i.e. we downsample w.r.t. $s=1\in S$ and retain even-numbered nodes. The resulting signals ${\bf \tilde{p}}_{LP}, {\bf \tilde{p}}_{HP}\in\mathbb{R}^{N}$ represent low-and high-pass versions of ${\bf p}$ on $G$ within a $k$-hop local neighborhood $N(i,k), \forall i\in V$. The higher the degree of the filterbank, the higher the number of vanishing moments, and the less localized it becomes in the vertex domain. In Figure $2$, the low-and high-pass graph filter functions of the \textit{HGSWT} are plotted for $k=2$ and furthermore the spread in the vertex domain is illustrated for a sample circulant graph. 
\\
\begin{figure}[tbp]
	\centering
	{\includegraphics[width=1.5in]{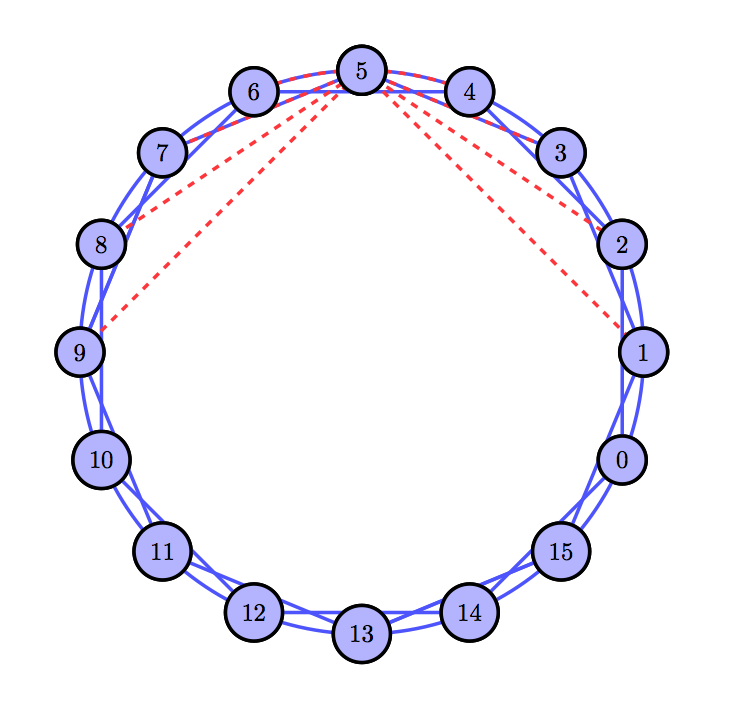}}%
	{\includegraphics[width=3in]{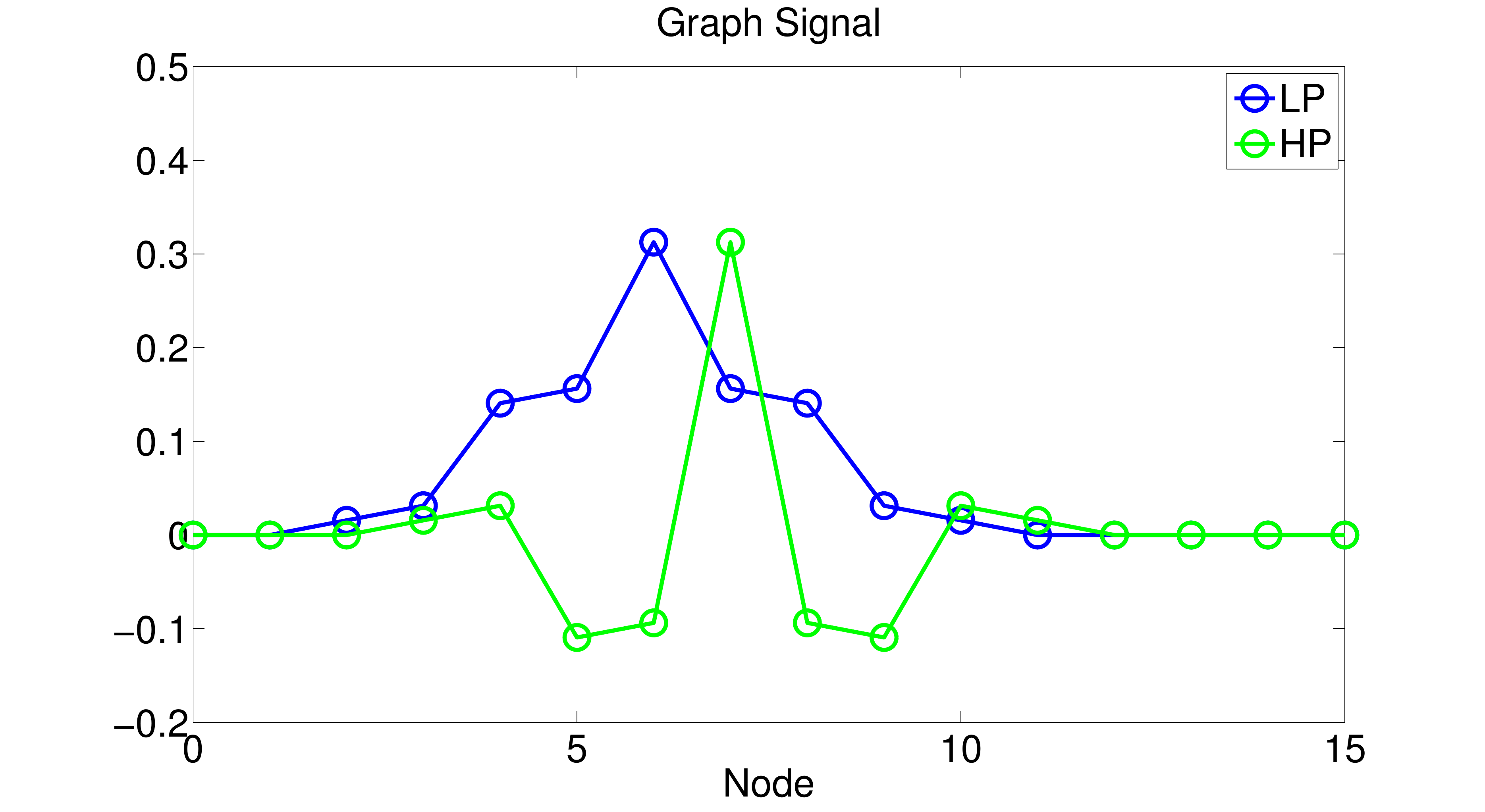}}%
		\caption{Localization of the HGSWT filters for $k=2$ in the graph vertex domain for graph $G$ with $S=\{1,2\}$: shown at vertex $v=5\in V$ on $G$ (left), and the corresponding graph filter functions at alternate vertices.}
\end{figure}
\\
A bipartite circulant graph $G$ is characterized by a generating set $S$ which contains only odd elements $s_k\in S$ for even dimension $N$, with the simple cycle $S=\{1\}$ as a natural example; we note the following interesting property of the \textit{HGSWT}, when $G$ is such:
\begin{cor}\label{cor31}
When $G$ is  an undirected, circulant, bipartite graph, with adjacency matrix ${\bf A}$ of bandwidth $M$, the polynomial representation $H_{LP}(z)$ of the low-pass filter ${\bf H}_{LP}$ in Eq. (\ref{eq:t31}) can reproduce polynomial graph signals up to degree $2k-1$, subject to a border effect determined by the bandwidth $Mk$ of ${\bf H}_{LP}$, provided $2Mk<<N$.
\end{cor}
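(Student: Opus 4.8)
\emph{Proof idea.} The plan is to reuse the symbol-factorization machinery of Lemma \ref{lem31} and Theorem \ref{thm31}, but to expand the low-pass symbol about $z=-1$ instead of $z=1$: on a bipartite circulant graph $z=-1$ is the spectral point interchanged with $z=1$ by the even/odd downsampling, so it is there that the ``B-spline accuracy'' of ${\bf H}_{LP}$ must be read off. Writing the representer polynomial of ${\bf A}$ as $a(z)=\sum_{i=1}^{M}d_i(z^{i}+z^{-i})$ with weights $d_i\ge 0$ and $d=\sum_{i=1}^{M}2d_i$, the filter of Eq.~(\ref{eq:t31}) has symbol
\[H_{LP}(z)=\frac{1}{2^{k}}\Bigl(1+\frac{a(z)}{d}\Bigr)^{k}=\frac{1}{(2d)^{k}}\bigl(d+a(z)\bigr)^{k}.\]
Since $G$ is bipartite and circulant, every generator is odd, i.e. $d_i=0$ for even $i$; hence $a(-z)=-a(z)$ and therefore $H_{LP}(-z)=\tfrac{1}{2^{k}}(1-a(z)/d)^{k}=H_{HP}(z)$. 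As Theorem \ref{thm31} shows $H_{HP}(z)$ to have $2k$ zeros at $z=1$ (its $2k$ vanishing moments), this immediately gives $2k$ zeros of $H_{LP}(z)$ at $z=-1$.

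To keep the argument self-contained I would alternatively mimic the factorization in the proof of Lemma \ref{lem31}: expanding gives $d+a(z)=\sum_{i}d_i(1+z^{i})(1+z^{-i})$ (using $\sum_i 2d_i=d$), and for odd $i$ the identity $1+z^{i}=(1+z)\sum_{j=0}^{i-1}(-z)^{j}$ — and likewise for $z^{-i}$ — exhibits $(1+z)(1+z^{-1})$ as a factor of every summand, so $(1+z)(1+z^{-1})\mid d+a(z)$ and hence $H_{LP}(z)$ carries $2k$ zeros at $z=-1$. I would then record the normalization $H_{LP}(1)=(2d)^{-k}\bigl(d+a(1)\bigr)^{k}=(2d)^{-k}(2d)^{k}=1$, using $a(1)=\sum_i 2d_i=d$. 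Having $2k$ zeros at $z=-1$ together with $H_{LP}(1)=1$ are exactly the order-$2k$ Strang--Fix conditions, so, in direct analogy with the classical degree-$(2k-1)$ B-spline refinement mask $\bigl(\tfrac{1+z}{2}\bigr)^{2k}$ (\cite{spline},\cite{splines}), $H_{LP}(z)$ reproduces polynomial signals up to degree $2k-1$ — the low-pass counterpart of the order-$2k$ annihilation property of $H_{HP}$ established in Theorem \ref{thm31}.

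For the boundary statement I would simply note that ${\bf H}_{LP}=\tfrac{1}{2^{k}}({\bf I}_N+{\bf A}/d)^{k}$ is circulant of bandwidth $Mk$ whenever ${\bf A}$ has bandwidth $M$, so the reproduction identity is exact on the bulk vertices and is disturbed only within $Mk$ hops of the cyclic wrap-around, which is negligible once $2Mk\ll N$ — precisely the mechanism controlling the border effect in Lemma \ref{lem31}. The only genuinely delicate point is this transfer: ``$2k$ zeros at $z=-1$ $\Leftrightarrow$ reproduction of polynomials of degree $2k-1$'' is the classical Strang--Fix equivalence on $\mathbb{Z}$, and on the finite cycle it persists only up to that $Mk$-wide boundary layer, which is exactly why $2Mk\ll N$ is assumed; the remaining symbol bookkeeping is routine.
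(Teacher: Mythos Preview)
Your proposal is correct and your second, self-contained argument is essentially identical to the paper's own proof: both write $d+a(z)=\sum_{i\,\text{odd}} d_i(1+z^{i})(1+z^{-i})$, extract the factor $(1+z)(1+z^{-1})$ from each summand via the oddness of $i$, raise to the $k$-th power, and invoke the Strang--Fix condition. Your first route via the spectral-folding identity $H_{LP}(-z)=H_{HP}(z)$ is a nice shortcut the paper only mentions in the discussion following the corollary rather than in the proof itself.
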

\begin{proof} Similarly, as in Lemma \ref{lem31}, we can express the representer polynomial as
\[H_{LP}(z)=\frac{1}{(2d)^k}(d_M z^{-M}+...+d_1z^{-1}+d+d_1 z+...+d_M z^M)^k\]\[=\left(\frac{1}{2d}\sum_{1\leq i \leq M, i\in2\mathbb{Z}^{+}+1}d_i (z^i+1)(z^{-i}+1)\right)^k\] 
and note that the RHS factors $(z+1)^k(z^{-1}+1)^k$, since $(z^i+1)$ has a root at $z=-1$ only for $i\in 2\mathbb{Z}+1$. According to the Strang-Fix condition (\cite{strang}, \cite{vet}), this is necessary and sufficient for ensuring the reproduction of polynomials. 
\end{proof}
\noindent It becomes evident that our use of the spline-wavelet terminology is well-founded, since in the case of bipartite circulant graphs the polynomial reproduction property can be generalised to higher order $k$, and graph filters (\ref{eq:t31})-(\ref{eq:t231}) respectively reproduce and annihilate polynomial graph signals up to degree $n=2k$-$1$, bridging the gap to the traditional domain. We will elaborate more thoroughly on similarities with the classical spline and spline wavelets in Sect. $3.3$.\\
Further, the matrix ${\bf Q}={\bf D}+{\bf A}$, chosen as the low-pass filter in present constructions, is also known as the \textit{signless Laplacian} and has been studied for its spectral properties \cite{signless2}. For a connected graph, ${\bf Q}$ is a positive semi-definite matrix, whose smallest (simple) eigenvalue is $0$ if and only if the graph is bipartite; its multiplicity is further equal to the number of connected bipartite components \cite{signless1}. A more generalized form of this result has been independently derived in Cor. \ref{cor33} of Sect. $3.2$.\\
In particular, the characteristic polynomials, i.e. the eigenvalues, of the signless and traditional graph Laplacian are known to be the same for a bipartite graph. Notably, the eigenvalues of the bipartite adjacency matrix are symmetric with respect to zero \cite{chung}. This is further illustrated within the derived reproduction property of Cor. \ref{cor31} for circulant bipartite graphs; here, the frequency parameters $z=e^{\frac{2\pi i k}{N}}$ and $-z=e^{\frac{2\pi i (k+N/2)}{N}}$, $k=0,...,N-1$, as incorporated in the relation $H_{LP}(-z)=H_{HP}(z)$, induce graph-filter eigenvalues that are shifted by $N/2$ in their position within the DFT-ordered spectrum, following the spectral folding $A(-z) = -A(z)$ of the representer polynomial $A(z)$ of the adjacency matrix ${\bf A}$.

\subsection{Graph E-Spline Wavelets}
Inspired by the generalized framework of cardinal exponential splines \cite{espline} in the classical domain, we proceed to identify a new class of graph signals and graph wavelets which maintain and extend these properties to the graph domain.
\begin{defe}
A complex exponential polynomial graph signal ${\bf y}\in\mathbb{C}^N$ with parameter $\alpha\in\mathbb{R}$, is defined such that node $j$ has sample value $y(j)=p(j) e^{i \alpha j}$, for  polynomial ${\bf p}\in\mathbb{R}^N$ of degree $\textit{deg}(p(t))$.\end{defe}
In accordance with the definition of e-splines in the classical domain via a differential operator, and our previous result on the graph Laplacian, we introduce the e-graph Laplacian as a generalized graph difference operator:
\begin{defe}
Let $G=(V,E)$ be an undirected, circulant graph with adjacency matrix ${\bf A}$ and degree $d=\sum_{j=1}^M 2 d_j$ per node with symmetric weights $d_{j}=A_{i,(j+i)_N}$. Then the parameterised e-graph Laplacian of $G$ is given by $\tilde{{\bf L}}_{\alpha}=\tilde{{\bf D}}_{\alpha}-{\bf A}$, with exponential degree $\tilde{d}_{\alpha}=\sum_{j=1}^M 2 d_j \cos(\alpha j)$.
\end{defe}
The standard graph Laplacian ${\bf L}$ can be therefore regarded as a special case of the e-graph Laplacian $\tilde{{\bf L}}_{\alpha}$ for $\alpha=0$, however, with $\tilde{d}_{\alpha}\leq d$ the matrix ceases to be positive semi-definite otherwise. 
\begin{lem}\label{lem32}
For an undirected, circulant graph $G=(V,E)$ of dimension $N$, the associated representer polynomial $\tilde{l}(z)=\tilde{l}_0 +\sum_{i=1}^{M} \tilde{l}_{i} (z^i+z^{-i})$ of the e-graph Laplacian matrix $\tilde{{\bf L}}_{\alpha}$, with first row $\lbrack \tilde{l}_0 \ \tilde{l}_1\ \tilde{l}_2 \quad ... \quad \tilde{l}_2 \ \tilde{l}_{1}\rbrack$, has two vanishing exponential moments, i.e. the operator $\tilde{{\bf L}}_{\alpha}$ annihilates complex exponential polynomial graph signals with exponent $\pm i\alpha$ and $\textit{deg}(p(t))=0$. Unless $\alpha=\frac{2\pi k}{N}$ for $k\in\lbrack 0, N-1\rbrack$, this is subject to a border effect determined by the bandwidth $M$ of $\tilde{{\bf L}}_{\alpha}$, provided $2M<<N$.
\end{lem}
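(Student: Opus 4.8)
The plan is to mirror the argument of Lemma \ref{lem31}, replacing the single double root at $z=1$ by the conjugate pair of roots $z=e^{\pm i\alpha}$, which is exactly the pair of values at which the exponential degree $\tilde{d}_{\alpha}$ was engineered to make $\tilde{l}$ vanish. First I would write the representer polynomial explicitly from the definition of $\tilde{{\bf L}}_{\alpha}$: since its first row is $\lbrack \tilde{l}_0\ \tilde{l}_1\ \dots\ \tilde{l}_1\rbrack$ with $\tilde{l}_0=\tilde{d}_{\alpha}=\sum_{j=1}^{M}2d_j\cos(\alpha j)$ and $\tilde{l}_j=-d_j$ for $1\le j\le M$, one obtains
\[\tilde{l}(z)=\sum_{j=1}^{M}d_j\bigl(e^{i\alpha j}+e^{-i\alpha j}-z^{j}-z^{-j}\bigr)=\sum_{j=1}^{M}d_j\bigl[(e^{i\alpha j}-z^{j})+(e^{-i\alpha j}-z^{-j})\bigr].\]
Using $z^{j}-a^{j}=(z-a)(z^{j-1}+az^{j-2}+\dots+a^{j-1})$ with $a=e^{i\alpha}$ for the first bracket and its reciprocal for the second, each summand is seen to carry the factor $(z-e^{i\alpha})(z^{-1}-e^{-i\alpha})$, so $\tilde{l}(z)$ factors as $(z-e^{i\alpha})(z^{-1}-e^{-i\alpha})\,\tilde{q}(z)$ for a symmetric Laurent polynomial $\tilde{q}$; this is the e-spline analogue of the factorisation $l(z)=\sum_i d_i(z^i-1)(z^{-i}-1)$ used in Lemma \ref{lem31}. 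Equivalently, and more directly, substituting $z=e^{\pm i\alpha}$ gives $z^{j}+z^{-j}=2\cos(\alpha j)$, whence $\tilde{l}(e^{\pm i\alpha})=\tilde{d}_{\alpha}-\sum_{j=1}^{M}d_j\,2\cos(\alpha j)=0$; this is the pair of \emph{vanishing exponential moments}, i.e. $\sum_{j}\tilde{l}_j e^{\pm i\alpha j}=0$, annihilating the two-dimensional exponential space spanned by $e^{i\alpha n}$ and $e^{-i\alpha n}$ (equivalently $\cos(\alpha n),\sin(\alpha n)$).

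Second, I would translate the algebraic identity into the annihilation statement through the banded-circulant structure of $\tilde{{\bf L}}_{\alpha}$. For ${\bf y}$ with $y(n)=c\,e^{i\alpha n}$ (the $\textit{deg}(p)=0$ case) and any interior node $m$ with $M\le m\le N-1-M$, the $m$-th entry of $\tilde{{\bf L}}_{\alpha}{\bf y}$ involves only the unwrapped samples $y(m-M),\dots,y(m+M)$, so it equals $e^{i\alpha m}\sum_{j=-M}^{M}\tilde{l}_j e^{-i\alpha j}=e^{i\alpha m}\,\tilde{l}(e^{-i\alpha})=0$; linearity extends this to $\pm i\alpha$ combinations. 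Near the two boundary regions the $\bmod\,N$ wrap-around mixes $y(N-1)$ with $y(0)$, whose ratio $e^{-i\alpha N}$ differs from $1$ unless $e^{i\alpha N}=1$, i.e. unless $\alpha=\tfrac{2\pi k}{N}$; this produces the $O(M)$-wide border effect, exactly as the non-periodicity of the ramp $0,1,\dots,N-1$ does in Lemma \ref{lem31}. In the exceptional case $\alpha=\tfrac{2\pi k}{N}$ the signal $e^{i\alpha n}$ is a genuine $N$-periodic eigenvector of the circulant matrix $\tilde{{\bf L}}_{\alpha}$ with eigenvalue $\tilde{l}(e^{2\pi i k/N})=0$, so the annihilation is exact with no border effect.

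The step I expect to need the most care is not the algebra — which collapses at once precisely because $\tilde{d}_{\alpha}$ is \emph{defined} so that $\tilde{l}(e^{\pm i\alpha})=0$ — but the precise bookkeeping of the border effect: pinning down which nodes are untouched by the cyclic wrap-around, and verifying that this set is nonempty and comprises all but $O(M)$ nodes under the stated hypothesis $2M\ll N$. This parallels, and can directly reuse, the boundary discussion already given for Lemma \ref{lem31}.
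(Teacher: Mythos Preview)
Your proposal is essentially correct and follows the paper's route, with one slip worth flagging: the factor $(z-e^{i\alpha})(z^{-1}-e^{-i\alpha})$ you extract is \emph{not} the conjugate pair but a double root at $e^{i\alpha}$, since $z^{-1}-e^{-i\alpha}=-e^{-i\alpha}z^{-1}(z-e^{i\alpha})$; the factor you want is $(z-e^{i\alpha})(z-e^{-i\alpha})$ (up to a unit). The paper obtains this cleanly by writing each summand directly as a product,
\[2d_j\cos(\alpha j)-d_j(z^j+z^{-j})=d_j\,(1-e^{i\alpha j}z^j)(1-e^{-i\alpha j}z^j)(-z^{-j}),\]
whose two bracketed factors contribute the elementary factors $(1-e^{\mp i\alpha}z^{-1})$ and hence the roots $e^{\mp i\alpha}$ separately. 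Your direct substitution $\tilde{l}(e^{\pm i\alpha})=0$ is correct and already establishes the claim on its own, so the proof stands; your treatment of the border effect is more explicit than the paper's, which simply defers to the analogy with Lemma~\ref{lem31}.
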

\begin{proof}
Consider the representer polynomial $\tilde{l}(z)$ of $\tilde{{\bf L}}_{\alpha}$:
\[\tilde{l}(z)=\sum_{j=1}^M 2 d_j \cos(\alpha j)-d_j(z^{j}+ z^{-j})
=\sum_{j=1}^M d_j (1-e^{i\alpha j}z^j )(1-e^{-i\alpha j}z^j )(-z^{-j})\]
where we note that $(1-e^{\mp i\alpha}z^{-1})$ is a factor of $(1-e^{\pm i\alpha j}z^j)$, which corresponds to two exponential vanishing moments \cite{espline}.
\end{proof}
\noindent As we require the weights of a graph wavelet filter ${\bf H}$ to be symmetric and real-valued, the construction of $\tilde{{\bf L}}_{\alpha}$ with polynomial factors $(1-e^{i\alpha}z^{-1})(1-e^{-i\alpha}z^{-1})$ ensures this. \\
The focus in this work is restricted to the class of complex exponential polynomial graph signals of the form $y(t)=p(t) e^{i\alpha t},\alpha \in\mathbb{R}$, which can be represented by trigonometric splines in the traditional domain, however, one can expand the framework by letting $\alpha=-i\beta,\beta\in\mathbb{R}$. In particular, this parameterization gives rise to real exponential polynomials, which can be represented in terms of hyperbolic functions of the form $(\cosh(\beta t),\sinh(\beta t))$, thereby inducing the class of hyperbolic splines \cite{espline}, with redefined e-degree $\tilde{d}=\sum_{k=1}^M 2 d_k \cos(-i \beta k)=\sum_{k=1}^M 2 d_k \cosh(\beta k)$.\\
\\
To provide an intuition behind the structure of the e-graph Laplacian, we make the following remark:
\begin{rmk}
The eigenvalues $\{\lambda_j\}_{j=0}^{N-1}$ of a circulant matrix, and in particular of ${\bf A}$ in Lemma \ref{lem32}, can be expressed as $\lambda_j=\sum_{k=1}^M 2 d_k \cos\left(\frac{2 \pi k j}{N}\right),\quad j=0,...,N-1$. We note that all circulant matrices have the same eigenbasis ${\bf U}$. Hence, when we restrict $\alpha=\frac{2\pi j}{N}$ and $j\in\lbrack 0, N-1\rbrack$, the nullspace of the corresponding e-graph Laplacian $\tilde{{\bf L}}_{\alpha}$ consists of its $j$-th eigenvector ${\bf u}_j$, where ${\bf u}_j$ represents a complex exponential graph signal with $\alpha=\frac{2\pi j}{N}$ and $\textit{deg}(p(t))=0$. In particular, we have $\tilde{d}_{\alpha}=\lambda_j$, and for $\alpha=0$ this becomes the maximum eigenvalue $\tilde{d}_0=d=\lambda_{max}$, whose associated eigenvector is the all-constant ${\bf u}_{max}={\bf 1}_N$, i.e. the nullspace of standard graph Laplacian ${\bf L}$. This facilitates the reinterpretation of the e-graph Laplacian as $\tilde{{\bf L}}_{\alpha}=\lambda_j {\bf I}_N-{\bf A}$, for $\alpha=\frac{2\pi j}{N}$ and $j\in\lbrack 0, N-1\rbrack$, or more generally, $(\lambda_j {\bf I}_N-{\bf A}){\bf u}_j={\bf 0}_N$, with $\tilde{{\bf L}}_{\alpha}$ representing the shift of ${\bf L}$ by $\lambda_j-d=-\tilde{\lambda}_j$ toward annihilation of ${\bf u}_j$, where $\{\tilde{\lambda}_j\}_{j=0}^{N-1}$ denotes the spectrum of ${\bf L}$. Depending on eigenvalue multiplicities, the nullspace of $\tilde{{\bf L}}_{\alpha}$ is accordingly extended. 
\end{rmk}
Revisiting our initial discussion on vanishing moments, it becomes evident that the annihilation property of the e-graph Laplacian is related to the definition of vanishing moments on graphs by Coifman et al. in \cite{Coifman}, as the nullspace of the former consists of (a subset of) its eigenvectors; however, we do not extend this definition up to a precision metric, and note that our chosen operator is parametric. The significance of our approach lies in the fact that annihilation is also local, and not restricted to the nullspace of the operator $\tilde{{\bf L}}_{\alpha}$. Even though we are mainly interested in properties pertaining to circulant graphs, one may consider extending the idea of a `nullspace-shifted' graph Laplacian operator to all regular graphs, for which it is known that ${\bf L}$ and ${\bf A}$ share the same eigenbasis \cite{chung}, as well as arbitrary graphs, extending the classes of graph signals which can be annihilated on their nodes. \\
In addition, while the ordered eigenvalues of $\tilde{{\bf L}}_{\alpha}$ are no longer nonnegative, their interpretation as graph frequencies which order the corresponding eigenvectors in terms of the number of their oscillations (zero crossings) \cite{shu} remains valid, with the only difference that the graph frequency $d-\lambda_j$ of ${\bf L}$ becomes the new zero or DC-frequency in $\tilde{{\bf L}}_{\alpha}=(\lambda_j{\bf I}_N-{\bf A})$. Further, the e-graph Laplacian quadratic form can be expressed as $\tilde{S}_{\alpha2}({\bf x})={\bf x}^T \tilde{{\bf L}}_{\alpha}{\bf x}=(\tilde{d}_{\alpha}-d)||{\bf x}||_2^2 +{\bf x}^T {\bf L} {\bf x}=(\tilde{d}_{\alpha}-d)||{\bf x}||_2^2 +S_{2}({\bf x})$.\\
\\
Based on these insights, we proceed to design a graph e-spline wavelet filterbank, following a similar line as the (higher-order) graph-spline wavelet filterbank. In order to generalize the types of graph signals which can be reproduced and/or annihilated by a GWT, we incorporate multiple parameters $\vec{\alpha}=(\alpha_1,...,\alpha_T)\in\mathbb{R}^T$ via a simple circular convolution of the graph filter functions, resulting in an invertible transform:
\begin{thm}\label{thm32}
The higher-order graph e-spline wavelet transform (HGESWT) on a connected, undirected circulant graph $G$, is composed of the low-and high-pass filters
\begin{equation}\label{eq:t32}{\bf H}_{LP_{\vec{\alpha}}}=\prod_{n=1}^T\frac{1}{2^k} \left(\beta_n{\bf I}_N+\frac{{\bf A}}{d}\right)^k\end{equation}
\begin{equation}\label{eq:t232}{\bf H}_{HP_{\vec{\alpha}}}=\prod_{n=1}^T \frac{1}{2^k}\left(\beta_n{\bf I}_N-\frac{{\bf A}}{d}\right)^k\end{equation}
where ${\bf A}$ is the adjacency matrix, $d$ the degree per node and parameter $\beta_n$ is given by $\beta_n=\frac{\tilde{d}_{\alpha_n}}{d}$ with $\tilde{d}_{\alpha_n}=\sum_{j=1}^M 2 d_j \cos(\alpha_n j)$ and $\vec{\alpha}=(\alpha_1,...,\alpha_T)$. Then the high-pass filter annihilates complex exponential polynomials (of deg$(p(t))\leq k-1$) with exponent $\pm i \alpha_n$ for $n=1,...,T$. The transform is invertible for any downsampling pattern as long as the eigenvalues $\gamma_i$ of $\frac{{\bf A}}{d}$ satisfy $|\beta_n|\neq |\gamma_i|,\enskip i=0,...,N-1$, under either of  the sufficient conditions
 \\
$(i)$ $k\in2\mathbb{N}$, or \\
$(ii)$ $k\in\mathbb{N}$ and $\beta_n, T$ such that $\forall \gamma_i, f(\gamma_i)=\prod_{n=1}^T(\beta_n^2-\gamma_i^2)^k> 0$ or $f(\gamma_i)< 0$.\\
If parameters $\beta_n$, are such that $\beta_n= \gamma_i$, for up to $T$ distinct values, the filterbank continues to be invertible under the above as long as $\beta_n\neq 0$ and at least $\sum_{i=1}^{T} m_i$ low-pass components are retained at nodes in set $V_{\alpha}$ such that $\{{\bf v}_{+i, k} (V_{\alpha})\}_{i=1, k=1}^{i=T, k=m_i}$ (and, if eigenvalue $-\gamma_i$ exists, complement $\{{\bf v}_{-i, k}({V_{\alpha}^{\complement}})\}_{i=1, k=1}^{i=T, k=m_i}$) form linearly independent sets, where $m_i$ is the multiplicity of $\gamma_i$ and $\{{\bf v}_{\pm i,k}\}_{k=1}^{m_i}$ are the eigenvectors respectively associated with $\pm \gamma_i$.
\end{thm}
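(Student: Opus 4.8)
The plan is to derive the exponential annihilation property from Lemma~\ref{lem32} and the invertibility statements from a disjoint-support / positivity argument carried out in the common circulant eigenbasis ${\bf U}$.

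\emph{Annihilation.} Since $\beta_n=\tilde{d}_{\alpha_n}/d$, each factor of ${\bf H}_{HP_{\vec{\alpha}}}$ is, up to a scalar, a power of an e-graph Laplacian: $\frac{1}{2^k}\big(\beta_n{\bf I}_N-\tfrac{{\bf A}}{d}\big)^k=\frac{1}{(2d)^k}\big(\tilde{d}_{\alpha_n}{\bf I}_N-{\bf A}\big)^k=\frac{1}{(2d)^k}\tilde{{\bf L}}_{\alpha_n}^{\,k}$. By Lemma~\ref{lem32} the representer polynomial $\tilde{l}_{\alpha_n}(z)$ is divisible by $(1-e^{i\alpha_n}z^{-1})(1-e^{-i\alpha_n}z^{-1})$, so --- exactly as for ${\bf L}^k$ in the discussion following Lemma~\ref{lem31}, using that the representer polynomial of a product of circulant matrices is the product of the factors' representer polynomials modulo $z^N-1$ --- the representer polynomial of $\tilde{{\bf L}}_{\alpha_n}^{\,k}$ is divisible by $(1-e^{i\alpha_n}z^{-1})^k(1-e^{-i\alpha_n}z^{-1})^k$; hence it has $2k$ vanishing exponential moments and annihilates $\{t^d e^{\pm i\alpha_n t}:d\le k-1\}$ up to a border effect of order $Mk$. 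Multiplying the $T$ factors, the representer polynomial of ${\bf H}_{HP_{\vec{\alpha}}}$ is divisible by $\prod_{n=1}^{T}(1-e^{i\alpha_n}z^{-1})^k(1-e^{-i\alpha_n}z^{-1})^k$, so every one of the $T$ exponential classes is annihilated, now up to a border effect determined by the total bandwidth $TMk$ (provided $2TMk\ll N$).

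\emph{Generic invertibility.} Write ${\bf M}={\bf P}\,{\bf H}_{LP_{\vec{\alpha}}}+({\bf I}_N-{\bf P})\,{\bf H}_{HP_{\vec{\alpha}}}$ for the transform matrix, with ${\bf P}$ the diagonal $0/1$ projector onto the retained low-pass nodes $V_{\alpha}$ (for the even/odd choice, ${\bf P}=\frac{1}{2}({\bf I}_N+{\bf K})$). As ${\bf M}$ is square it suffices to prove injectivity. If ${\bf M}{\bf p}={\bf 0}_N$, then since the two summands have disjoint supports $V_{\alpha}$ and $V_{\alpha}^{\complement}$ both vanish, so ${\bf H}_{LP_{\vec{\alpha}}}{\bf p}$ is supported on $V_{\alpha}^{\complement}$ while ${\bf H}_{HP_{\vec{\alpha}}}{\bf p}$ is supported on $V_{\alpha}$. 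These two vectors therefore have disjoint supports, so their Hermitian inner product vanishes: $0=({\bf H}_{LP_{\vec{\alpha}}}{\bf p})^{H}({\bf H}_{HP_{\vec{\alpha}}}{\bf p})={\bf p}^{H}{\bf H}_{LP_{\vec{\alpha}}}{\bf H}_{HP_{\vec{\alpha}}}{\bf p}$, using that ${\bf H}_{LP_{\vec{\alpha}}}$ is real and symmetric. Both filters are polynomials in the circulant matrix ${\bf A}/d$ and hence simultaneously diagonalised by ${\bf U}$; on the eigenvector with ${\bf A}/d$-eigenvalue $\gamma_i$ the matrix ${\bf H}_{LP_{\vec{\alpha}}}{\bf H}_{HP_{\vec{\alpha}}}$ acts by $2^{-2kT}\prod_{n=1}^{T}(\beta_n+\gamma_i)^k(\beta_n-\gamma_i)^k=2^{-2kT}f(\gamma_i)$. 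Expanding ${\bf p}$ in ${\bf U}$ therefore yields $\sum_i f(\gamma_i)\,|\hat p_i|^2=0$. The hypothesis $|\beta_n|\neq|\gamma_i|$ forces $f(\gamma_i)\neq0$ for all $i$, and under $(i)$ --- $k$ even, whence $f(\gamma_i)=\prod_n(\beta_n^2-\gamma_i^2)^k>0$ automatically --- or under $(ii)$ --- assumed directly --- all $f(\gamma_i)$ share one common strict sign; hence $\hat p_i=0$ for every $i$ and ${\bf p}={\bf 0}_N$.

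\emph{Degenerate case and main obstacle.} When some $\beta_n$ equals an eigenvalue $\gamma_i$, the identity $\sum_i f(\gamma_i)|\hat p_i|^2=0$ only annihilates the graph Fourier coefficients off the zero set of $f$, so a kernel vector ${\bf p}$ of ${\bf M}$ necessarily lies in the span of the eigenvectors $\{{\bf v}_{\pm i,k}\}$ with ${\bf A}/d$-eigenvalues $\pm\gamma_i$ ($i=1,\dots,T$, $k=1,\dots,m_i$). Writing ${\bf p}=\sum_{i,k}c_{+i,k}{\bf v}_{+i,k}+\sum_{i,k}c_{-i,k}{\bf v}_{-i,k}$ and evaluating the filters on this subspace, the factor of ${\bf H}_{LP_{\vec{\alpha}}}$ with $\beta_n=\gamma_i$ contributes $(\beta_n-\gamma_i)^k=0$ on each ${\bf v}_{-i,k}$ and $(\beta_n+\gamma_i)^k=(2\gamma_i)^k\neq0$ (here $\beta_n\neq0$ is needed) on each ${\bf v}_{+i,k}$, and dually for ${\bf H}_{HP_{\vec{\alpha}}}$; consequently ${\bf P}\,{\bf H}_{LP_{\vec{\alpha}}}{\bf p}={\bf 0}_N$ becomes $\sum_{i,k}c_{+i,k}\,g_i\,{\bf v}_{+i,k}(V_{\alpha})={\bf 0}$ with all $g_i\neq0$, so the assumed linear independence of $\{{\bf v}_{+i,k}(V_{\alpha})\}$ --- which already requires at least $\sum_i m_i$ retained low-pass nodes --- forces $c_{+i,k}=0$, while $({\bf I}_N-{\bf P})\,{\bf H}_{HP_{\vec{\alpha}}}{\bf p}={\bf 0}_N$ forces $c_{-i,k}=0$ through the linear independence of $\{{\bf v}_{-i,k}(V_{\alpha}^{\complement})\}$, giving ${\bf p}={\bf 0}_N$. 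I expect the bookkeeping in this last step to be the main obstacle: one must check that the surviving spectral coefficients $g_i$ of each filter on the relevant eigenspace are genuinely nonzero --- which is precisely where $\beta_n\neq0$ and $|\beta_n|\neq|\gamma_j|$ for the non-degenerate indices $j$ enter --- and one must match the indexing of the eigenvectors $\{{\bf v}_{\pm i,k}\}$ to the annihilating factors correctly, especially when $-\gamma_i$ is itself an eigenvalue so that the low- and high-pass constraints have to be read on complementary index sets.
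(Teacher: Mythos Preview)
Your proof is correct and follows essentially the same route as the paper: both reduce invertibility to the spectral identity $\sum_i f(\gamma_i)\,|\hat p_i|^2=0$ with $f(\gamma_i)=\prod_n(\beta_n^2-\gamma_i^2)^k$, and then handle the degenerate eigenvalues by substitution back into the filterbank equation. The one presentational difference is that you reach the identity via the disjoint-support observation $\langle {\bf H}_{LP_{\vec{\alpha}}}{\bf p},{\bf H}_{HP_{\vec{\alpha}}}{\bf p}\rangle=0$ and direct diagonalisation of ${\bf H}_{LP_{\vec{\alpha}}}{\bf H}_{HP_{\vec{\alpha}}}$, whereas the paper splits the filters into their even and odd parts in ${\bf A}/d$, rearranges, and equates $l_2$-norms; since $(E_i+O_i)(E_i-O_i)=E_i^2-O_i^2$ these are literally the same computation, and your packaging is arguably cleaner. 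Your treatment of the degenerate case matches the paper's, including the caveat you flag: the nonvanishing of the surviving spectral coefficients $g_i$ (hence $\beta_n\neq0$) and the need to read the linear-independence conditions on $V_{\alpha}$ and $V_{\alpha}^{\complement}$ separately when $-\gamma_i$ is also an eigenvalue.
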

\noindent \textit{Proof.} See Appendix $A.2$. \\
\\
The essential property of the above transform, which is captured in the proof, is that invertibility is governed by the parameters $\beta_n$. In particular, in the case where the chosen $\beta_n$ coincide with the magnitude(s) of certain eigenvalues $\{\gamma_i\}_i$ of normalized adjacency matrix $\frac{{\bf A}}{d}$, a suitable downsampling pattern needs to be selected such that the eigenvectors $\{{\bf v}_{i, k}\}_{i,k}$ associated with $\{\gamma_i\}_i$ remain linearly independent after downsampling. 
It is easily inferred that this can become challenging for certain graph topologies as well as when the number $T$ of such parameters $\beta_n$ is large. In particular, circulant adjacency matrices may exhibit large eigenvalue multiplicities as a result of increased graph connectivity. For instance, the normalized adjacency matrix of an unweighted complete (and hence circulant) graph of dimension $N$ has $\gamma_i=-\frac{1}{d}$ of multiplicity $N-1$ and $\gamma_{max}=1$, which is simple. In this case, the transform is invertible for $\beta=1$, but not for $\beta=-\frac{1}{d}$, when downsampling is conducted w.r.t. $s=1\in S$. This may be remedied i.a. by introducing distinct edge weights. \\
We further note that the graph filter powers $k$ may also be chosen to differ for each unique factor, provided that the invertibility conditions remain satisfied. Moreover, Thm. \ref{thm32} equivalently applies to real exponential polynomial signals with e-degree parameterization of the form $\tilde{d}_{i\alpha}=\sum_{k=1}^M 2d_k \cosh (\alpha k)$, $\cosh(x)\in\lbrack 1\enskip \infty )\enskip \forall x$. In particular, for this instance the invertibility conditions simplify and take the form of those in Thm. \ref{thm31} due to the property $\left|\frac{\tilde{d}_{i\alpha}}{d}\right|\geq 1\geq |\gamma_j|,\enskip j\in \lbrack 0\enskip N-1\rbrack$, facilitating a more convenient signal analysis.

Further, one deduces that the \textit{HGESWT} of Thm. \ref{thm32} converges to the \textit{HGSWT} of Thm. \ref{thm31} \footnote{It should be noted that invertibility of the wavelet transforms in Thms \ref{thm31} and \ref{thm32} does not require ${\bf A}$ to be a circulant matrix. In particular, for the former, it is sufficient that $\frac{{\bf A}}{d}$ be replaced by the normalized adjacency matrix $\hat{{\bf A}}={\bf D}^{-1/2}{\bf A}{\bf D}^{-1/2}$ of an undirected connected graph \cite{chung}, and for the latter, its eigenbasis ${\bf V}$ is subject to similar constraints, if chosen parameters $\beta_n$ coincide with the eigenvalues of $\hat{{\bf A}}$ (see Appendix $A.2$). However, (higher-order) vanishing moments are lost and downsampling becomes less intuitive and/or accurate when applying these transforms to non-circulant graphs.} for $\alpha\rightarrow 0$ and $\beta=\frac{\tilde{d}_{\alpha}}{d}\rightarrow 1$, in which case the conditions on invertibility are relaxed, and, as a consequence of this structural similarity, one can detect similar reproduction properties for circulant bipartite graphs as in Cor. \ref{cor31}:
\begin{cor}\label{cor32}
Let $G=(V,E)$ be an undirected, bipartite circulant graph with adjacency matrix ${\bf A}$ of bandwidth $M$, and e-degree $\tilde{d}_{\alpha}$. Then the low-pass filter ${\bf H}_{LP_{\alpha}}$ of Eq. (\ref{eq:t32}) reproduces complex exponential polynomial graph signals ${\bf y}$ with exponent $\pm i\alpha$, up to a border effect determined by the bandwidth $Mk$ of ${\bf H}_{LP_{\alpha}}$, provided $2Mk<<N$.
\end{cor}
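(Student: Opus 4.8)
The plan is to mirror the argument of Corollary \ref{cor31} at the level of the representer polynomial of the e-spline low-pass filter, replacing the factor $(z+1)$ that appeared there with the exponential factor $(1-e^{i\alpha}z^{-1})$ coming from the e-graph Laplacian in Lemma \ref{lem32}. First I would write out the representer polynomial of ${\bf H}_{LP_{\alpha}}$ (the single-parameter case $T=1$, $\vec{\alpha}=\alpha$) explicitly. By Definition of the e-graph Laplacian and the computation already carried out in the proof of Lemma \ref{lem32}, the polynomial associated with $\tilde{d}_{\alpha}{\bf I}_N + {\bf A}$ is
\[
\sum_{j=1}^{M} 2d_j\cos(\alpha j) + d_j(z^j+z^{-j}) = \sum_{j=1}^{M} d_j\,(1+e^{i\alpha j}z^{j})(1+e^{-i\alpha j}z^{j})\,z^{-j},
\]
so that $H_{LP_{\alpha}}(z) = \frac{1}{(2d)^k}\Bigl(\sum_{j=1}^{M} d_j (1+e^{i\alpha j}z^{j})(1+e^{-i\alpha j}z^{j})z^{-j}\Bigr)^{k}$.

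Next I would isolate the exponential roots. Since $G$ is bipartite circulant, its generating set contains only odd $s_k$, hence every $d_j$ that is nonzero has odd index $j$. For odd $j$, the polynomial $1+e^{i\alpha j}z^{j}$ has $z=-e^{-i\alpha}$ among its roots: indeed $(-e^{-i\alpha})^j = -e^{-i\alpha j}$ because $j$ is odd, so $1 + e^{i\alpha j}(-e^{-i\alpha})^j = 1 - 1 = 0$. Therefore $(1+e^{i\alpha}z)$ divides $1+e^{i\alpha j}z^j$ for every nonzero term, and likewise $(1+e^{-i\alpha}z)$ divides $1+e^{-i\alpha j}z^j$, using the identity $z^n - a^n = (z-a)(z^{n-1}+\cdots+a^{n-1})$ as in Lemma \ref{lem31}. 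Factoring these out of the whole sum, the bracketed polynomial is divisible by $(1+e^{i\alpha}z)(1+e^{-i\alpha}z)$, and raising to the $k$-th power shows $H_{LP_{\alpha}}(z)$ is divisible by $(1+e^{i\alpha}z)^k(1+e^{-i\alpha}z)^k$, equivalently $(1-e^{i\alpha}(-z))^k(1-e^{-i\alpha}(-z))^k$. By the exponential Strang–Fix conditions \cite{espline,strang}, having a $k$-th order zero at the reciprocal exponential frequency is exactly what is needed for ${\bf H}_{LP_{\alpha}}$ to reproduce complex exponential polynomials $y(t)=p(t)e^{\pm i\alpha t}$ with $\deg(p)\le k-1$; the border effect is confined to the first and last $Mk$ samples since $H_{LP_{\alpha}}(z)$ has bandwidth $Mk$, which is harmless when $2Mk\ll N$.

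The main obstacle I anticipate is not the factorization itself but making the reproduction claim precise in the finite, periodic (circulant) setting: the classical Strang–Fix theory is stated on $\mathbb{Z}$, so one must argue that the cyclic convolution by the first row of ${\bf H}_{LP_{\alpha}}$ acts, away from a boundary band of width $Mk$, exactly as the corresponding FIR filter on $\mathbb{Z}$, and that on the interior the exponential-polynomial reproduction identity therefore transfers verbatim. This is the same "up to a border effect'' caveat already handled in Lemma \ref{lem31} and Corollary \ref{cor31}, so I would simply invoke that reasoning. A minor additional point worth a sentence: one should note $\tilde d_\alpha \neq 0$ is not needed here — even if $\tilde d_\alpha = 0$ the factorization above goes through — but the normalization by $d$ rather than $\tilde d_\alpha$ keeps the filter well-defined, matching the statement of Theorem \ref{thm32}. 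Finally I would remark that, exactly as in the passage following Corollary \ref{cor31} via $H_{LP_{-z}} = H_{HP_{z}}$, the companion high-pass filter ${\bf H}_{HP_{\alpha}}$ then annihilates the same class, recovering Lemma \ref{lem32} for $k=1$ and extending it to degree $k-1$ exponential polynomials.
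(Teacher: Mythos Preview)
Your proposal is correct and follows essentially the same route as the paper: write the representer polynomial of ${\bf H}_{LP_{\alpha}}$ as $\bigl(\frac{1}{2d}\sum_{j} d_j(1+e^{i\alpha j}z^j)(1+e^{-i\alpha j}z^j)z^{-j}\bigr)^k$, use bipartiteness to restrict to odd $j$, observe that $(1+e^{\mp i\alpha}z^{-1})$ divides each $(1+e^{\pm i\alpha j}z^j)$ for odd $j$, and invoke the generalized Strang--Fix conditions. Your write-up is in fact more detailed than the paper's (you verify the root explicitly and discuss the transfer from $\mathbb{Z}$ to the periodic setting), but the argument is the same.
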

\begin{proof} The representer polynomial $H_{LP_{\alpha}}(z)$ is of the form
\[H_{LP_{\alpha}}(z)=\frac{1}{(2d)^k}(d_M z^{-M}+...+d_1 z^{-1}+\sum_{j=1}^M 2 d_j \cos(\alpha j)+d_1 z^{1}+...+d_M z^{M})^k\]\[=\left(\frac{1}{2d}\sum_{j=1}^M d_j (1+e^{i\alpha j}z^j )(1+e^{-i\alpha j}z^j )(z^{-j})\right)^k,\enskip j\in2\mathbb{Z}^{+}+1,\]
where $(1+e^{\mp i\alpha }z^{-1})$ is a factor of $(1+e^{\pm i\alpha j}z^j)$ if $j$ is odd, i.e. the elements in the generating set of $G$ are odd. This filter therefore satisfies the generalized Strang-Fix conditions for the reproduction of exponentials \cite{esplinewav}. 
\end{proof} 
\noindent At last, we deduce the following property pertaining to the low-pass filter, which we will leverage for further graph wavelet constructions:
\begin{cor}\label{cor33}
Let $G=(V,E)$ be an undirected, circulant graph with adjacency matrix ${\bf A}$ and degree $d=\sum_{j=1}^M 2 d_j$ per node with symmetric weights $d_{j}=A_{i,(j+i)_N}$. Then the low-pass filter ${\bf H}_{LP_{\vec{\alpha}}}$ of Eq. (\ref{eq:t32}) is invertible unless $(i)$ $G$ is bipartite while $\beta_n$ satisfies $|\beta_n|= |\gamma_i|$ or $(ii)$ $\beta_n= -\gamma_i,\enskip i\in\lbrack 0\enskip N-1\rbrack$. \end{cor}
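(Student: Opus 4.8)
The plan is to work directly from the circulant structure. Since $G$ is circulant, ${\bf A}$ — and hence $\frac{{\bf A}}{d}$ — is a circulant matrix, diagonalised by the DFT matrix (equivalently the GFT of $G$), with eigenvalues $\{\gamma_i\}_{i=0}^{N-1}$. The operator ${\bf H}_{LP_{\vec{\alpha}}}=\prod_{n=1}^T\frac{1}{2^k}\bigl(\beta_n{\bf I}_N+\frac{{\bf A}}{d}\bigr)^k$ is a polynomial in $\frac{{\bf A}}{d}$, so it is itself circulant, shares the same eigenbasis, and has eigenvalues $\mu_i=\frac{1}{2^{kT}}\prod_{n=1}^T(\beta_n+\gamma_i)^k$ for $i=0,\dots,N-1$. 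The first step is simply to record this spectral description.

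Next I would use that a (circulant, or indeed any) matrix is invertible if and only if none of its eigenvalues vanishes. Hence ${\bf H}_{LP_{\vec{\alpha}}}$ is singular precisely when $\mu_i=0$ for some $i$, i.e. when $\beta_n=-\gamma_i$ for some $n\in\{1,\dots,T\}$ and some $i\in\{0,\dots,N-1\}$ — which is exactly condition $(ii)$. This settles the equivalence between non-invertibility and $(ii)$, and in particular recovers, for $T=1$, $\beta_1=1$, $k=1$, the known fact that the (normalised) signless Laplacian ${\bf I}_N+\frac{{\bf A}}{d}$ is singular iff $\gamma_i=-1$ for some $i$, i.e. iff $G$ is bipartite.

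It then remains to fold in condition $(i)$. Here I would invoke the spectral symmetry of bipartite graphs (cf. \cite{chung} and the discussion following Cor. \ref{cor31}): if $G$ is bipartite, the spectrum of ${\bf A}$, and therefore the set $\{\gamma_i\}$, is symmetric about $0$. Consequently, if $|\beta_n|=|\gamma_i|$ for some $n$ and $i$, then either $\beta_n=-\gamma_i$ (condition $(ii)$ outright) or $\beta_n=\gamma_i$, in which case $-\gamma_i$ is also an eigenvalue, say $-\gamma_i=\gamma_{i'}$, so that $\beta_n=-\gamma_{i'}$ — again condition $(ii)$. Thus, whenever $G$ is bipartite, condition $(i)$ implies condition $(ii)$; the disjunction "$(i)$ or $(ii)$" therefore collapses to "$(ii)$", and by the previous step invertibility fails exactly on this set, proving the claim.

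I do not expect a real obstacle: the statement is essentially a diagonalisation bookkeeping exercise. The only point needing a little care is the logical relationship between $(i)$ and $(ii)$ — condition $(i)$ is phrased with $|\beta_n|=|\gamma_i|$, which also captures the case $\beta_n=+\gamma_i$, and one must observe that this is not a genuinely new source of singularity but is absorbed into $(ii)$ through the mirror symmetry of the bipartite spectrum. Listing $(i)$ separately thus serves to highlight that, for bipartite graphs, the entire magnitude band $|\beta_n|=|\gamma_i|$ is forbidden, whereas for general circulant graphs only $\beta_n=-\gamma_i$ is.
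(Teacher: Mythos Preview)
Your argument is correct and in fact more direct than the paper's. The paper proceeds in the style of its proofs of Theorems \ref{thm31} and \ref{thm32}: it assumes a nonzero ${\bf z}={\bf V}{\bf r}$ in the nullspace of the single-factor filter $\frac{1}{2}\bigl(\beta{\bf I}_N+\frac{{\bf A}}{d}\bigr)$, takes norms to obtain $\sum_i r(i)^2(\beta+\gamma_i)^2=0$, and from this reads off $\beta=-\gamma_i$ as the only obstruction; for the bipartite case it then explicitly writes out the candidate kernel vector in terms of the paired eigenvectors for $\pm\gamma_i$ and observes that one summand always survives. Your route bypasses the nullspace/norm machinery entirely by noting that ${\bf H}_{LP_{\vec{\alpha}}}$, as a polynomial in the circulant $\frac{{\bf A}}{d}$, is itself circulant with eigenvalues $\mu_i=2^{-kT}\prod_n(\beta_n+\gamma_i)^k$, so invertibility is equivalent to $\beta_n\neq -\gamma_i$ for all $n,i$; the bipartite condition $(i)$ is then absorbed into $(ii)$ via the mirror symmetry of the spectrum. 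The paper's norm approach is really tailored to the filterbank setting where a nondiagonal ${\bf K}$ prevents direct diagonalisation; for the low-pass filter alone your eigenvalue bookkeeping is the natural and shorter argument, and it also handles general $k$ and $T$ in one stroke rather than by extension from the base case.
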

\noindent \textit{Proof.} See Appendix $A.3$.\\
\\
Prior to broadening the range of graph wavelet constructions on the basis of the aforementioned families, we conduct further analysis of the characteristic structure of the latter. Here, we focus on a bipartite circulant graph scenario, as this case is particularly relevant due to its dual vanishing moment property. Contrary to standard biorthogonal wavelet filterbanks, the proposed graph spline wavelet constructions exhibit well-defined analysis filters, while their corresponding synthesis filters lack a concrete characterization. Let the general analysis matrix of a graph wavelet transform be denoted by \[{\bf W}=\begin{bmatrix} {\bf \Psi}_{\downarrow 2}{\bf H}_{LP}\\{\bf \Phi}_{\downarrow 2}{\bf H}_{HP}\end{bmatrix},\] for downsampling matrices ${\bf \Psi}_{\downarrow 2},{\bf \Phi}_{\downarrow 2}\in\mathbb{R}^{N/2\times N}$, which respectively retain even and odd numbered nodes, and graph-based low-and high-pass filters ${\bf H}_{LP},{\bf H}_{HP}$, whose respective representer polynomials are expressed by $H_{LP}(z)$, $H_{HP}(z)$. Given that ${\bf W}$ is invertible, and contains two sets of basis functions, their corresponding duals, which we denote by $\tilde{H}_{LP}(z), \tilde{H}_{HP}(z)$ exist in ${\bf W}^{-1}=\begin{bmatrix} ({\bf \Psi}_{\downarrow 2}\tilde{{\bf H}}_{LP})^T& ({\bf \Phi}_{\downarrow 2}\tilde{{\bf H}}_{HP})^T\end{bmatrix}$, by definition of a biorthogonal system \cite{biorref}. In general, the following relations hold for $i,j\in\{LP,HP\}$
\[H_i(z)\tilde{H_i}(z)+H_i(-z)\tilde{H_i}(-z)=2\]
\[H_i(z)\tilde{H_j}(z)+H_i(-z)\tilde{H_j}(-z)=0,\enskip i\neq j\]
where $H_{HP}(z)=z H_{LP}(-z)$ is established in proposed constructions.\footnote{Here, in an abuse of notation, we incorporate a shift $z$, equivalently to the traditional $z$-transform, to signify that odd-numbered rows retain the high-pass component.} By substitution into the above, we observe that $\tilde{H}_{HP}(z) =z^{-1}\tilde{H}_{LP}(-z)$ must equivalently hold for the dual pair. This can be proved directly through the simple inversion of the analysis modulation matrix (as is done for classical perfect reconstruction filterbanks, see e.g. \cite{strangbook}), which we omit for brevity, and further reveals that the derived synthesis filters are rational functions whose zeros coincide with those of the analysis filters, i.e. $\tilde{H}_{HP}(z)$ and ${H}_{HP}(z)$ have the same vanishing moments.

\subsection{Splines on Graphs}
In the classical domain of signal processing, the $B$-spline of degree zero $\beta^0_{+}$, representing the box-function,
\[\beta^0_{+}(x)=\left\{
                \begin{array}{ll} 1, \enskip x\in\lbrack 0,1)\\
0, \enskip\text{otherwise}\end{array}
              \right.\]
is defined through the action of discrete (finite difference) operator $\Delta_{+}\{\cdot\}$, with $z$-transform $(1-z^{-1})$, on the step function $x^0_{+}$ such that $\beta_{+}^0(x)=\Delta_{+}x^0_{+}$; here $x^0_{+}=D^{-1}\{\delta(x)\}$ is the Green's function of continuous first-order differential operator $D\{\cdot\}$ \cite{splines}. A spline of degree $n$ is then obtained through $(n+1)$-fold convolution of the box function $\beta_{+}^{(0,...,0)}=\beta^n_{+}(x)=\beta^0_{+}*\beta^0_{+}*...*\beta_{+}^0(x)$, and similarly constructed through the higher-order operator $\Delta_{+}^{n+1}\{\cdot\}$, with $z$-transform $(1-z^{-1})^{n+1}$, such that $\beta^n_{+}(x)=\frac{\Delta_{+}^{n+1} x^n_{+}}{n!}$, where $x^n_{+}$ is the one-sided power function \cite{splines}. \\
The connection between the continuous and discrete time domain is established via the identity\footnote{Here, $S'$ denotes Schwartz's class of tempered distributions.}: $\forall f\in S',\enskip \Delta^m_{+} \{f\}=\beta^{m-1}_{+}*D^m \{f\}$, and more generally via $\Delta_{+}^{\alpha}\{f\}=\beta_{+}^{\alpha}*(D-\alpha I)\{f\}$ for operator $\Delta_{+}^{\alpha}(z)=(1-e^{\alpha}z^{-1})$
and exponential spline $\beta^{\alpha}_{+}$, with $\alpha\in\mathbb{C}$. In particular, $\Delta_{+}^{\alpha}\{\cdot\}$ can be regarded as a discrete approximation of the continuous $(D-\alpha I)\{\cdot\}$ (\cite{esplines2}, \cite{splines}, \cite{espline}). Such (exponential) polynomial splines, as solutions of certain variational problems, form a subset of the more generalized variational splines (\cite{varsplines}, \cite{var2}). \\
\\
The graph Laplacian matrix can be viewed as an approximation of the continuous Laplacian operator $-\nabla^2$ via the discrete Laplacian, which motivates the consideration of splines on graphs, nevertheless, a brute-force generalization of standard definitions is primarily hindered by the fact that the former is singular. In \cite{Pesenson}, variational splines on graphs are  determined as the Green's functions of the approximate graph differential operator $({\bf \mathcal{L}}+\epsilon{\bf I}_N)^{t}$ for small $\epsilon>0$, $t\in\mathbb{R}^{+}$, which minimize the Sobolev norm, of the form $({\bf \mathcal{L}}+\epsilon{\bf I}_N)^{-t}{\bf e}_i$ for normalized graph Laplacian ${\bf \mathcal{L}}$ and elementary basis vector ${\bf e}_i\in\mathbb{R}^N$ with $e_i(i)=1$ and $e_i(k)=0$ for $k\neq i$. Furthermore, Chung et al. \cite{green1} define the Green's function of a connected graph, without a direct reference to splines, as $G=\sum_{\lambda_j> 0}\frac{1}{\lambda_j}{\bf u}_j {\bf u}_j^H$,  with normalized graph Laplacian eigenvectors ${\bf u}_j$ and associated eigenvalues $\lambda_j$, and propose closed-form expressions for elementary cases, including the simple cycle graph, which are further extended to Cartesian graph products \cite{green2}.\\ 
\\
Spline wavelet transforms in the Euclidean domain are commonly characterized by (dual) scaling functions which are (combinations of) polynomial splines, with the Cohen-Daubechies-Feauveau wavelet as a prominent example \cite{splines}. In order to provide a more intuitive link between the spline-like properties of our graph wavelet functions and the traditional B-spline, we consider the case of a graph signal residing on the vertices of a simple cycle graph, which we denote with $G_{S_1=(1)}$, as the least connected example of a circulant graph. This graph-representation can be regarded as an analogy to a periodic signal in the discrete domain, where existing edges indicate the sequence of sample values \cite{shu}.\\ 
The rows and columns of the low-pass filter matrix in Eq. (\ref{eq:t31}) of the \textit{HGSWT}, given by 
\[{\bf H}_{LP}:=\scalemath{0.75}{\begin{bmatrix}
0.5 & 0.25 & 0& \cdots     &0& 0.25  \\
 0.25 &  0.5 &0.25&0 & \cdots&0  \\
 &   &\ddots & \ddots  &&  \\
 &  & & &&\\
 &  & & &&\\
 &   & &  \ddots &\ddots&  \\
&  & & &&\\
0.25&  0 &  \cdots        &  0&0.25&  0.5
\end{bmatrix}}^k,\]	
produce traditional higher-order splines via convolution of the discrete linear spline $\tilde{\beta}_{+}^{(0,0)}(t)$ (for $k=1$) with itself, thus creating the notion of a spline-wavelet filterbank. The corresponding (high-pass) graph Laplacian in Eq. (\ref{eq:t231}), not only provides the stencil approximation of the second order differential operator for certain types of graphs such as lattices \cite{Ekambaram3}, but in the case of a simple cycle, and symmetric circulant graphs by extension, gains the actual vanishing moment property. 
It therefore appears that this spline property can be directly extended to bipartite circulant graphs, whose associated \textit{HGSWT} filters retain both the reproduction and annihilation property of traditional spline-wavelets, as noted in Cor. \ref{cor31}.
\begin{figure}[tbp]
	\centering
	{\includegraphics[width=3in]{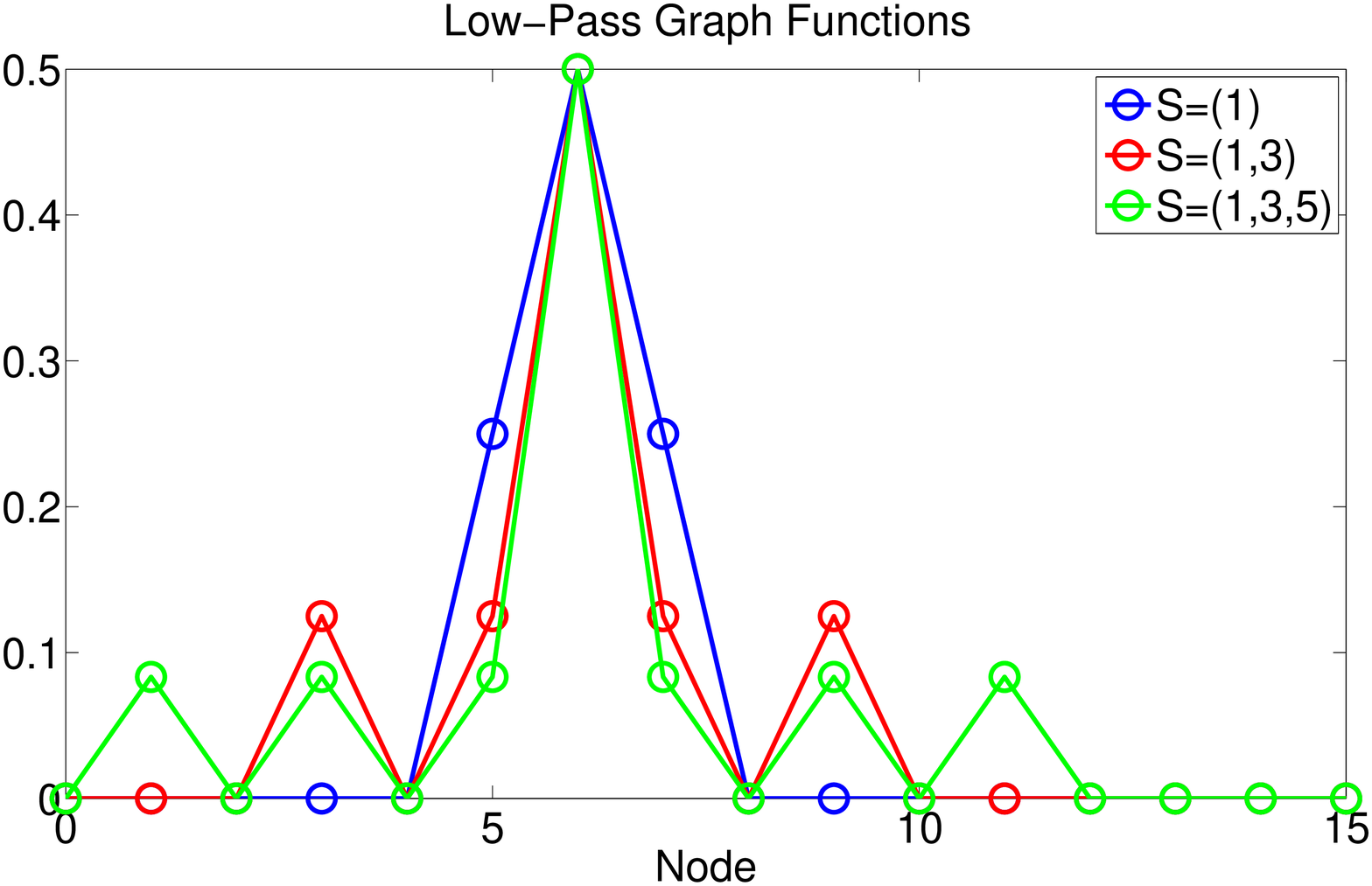}}%
	{\includegraphics[width=3in]{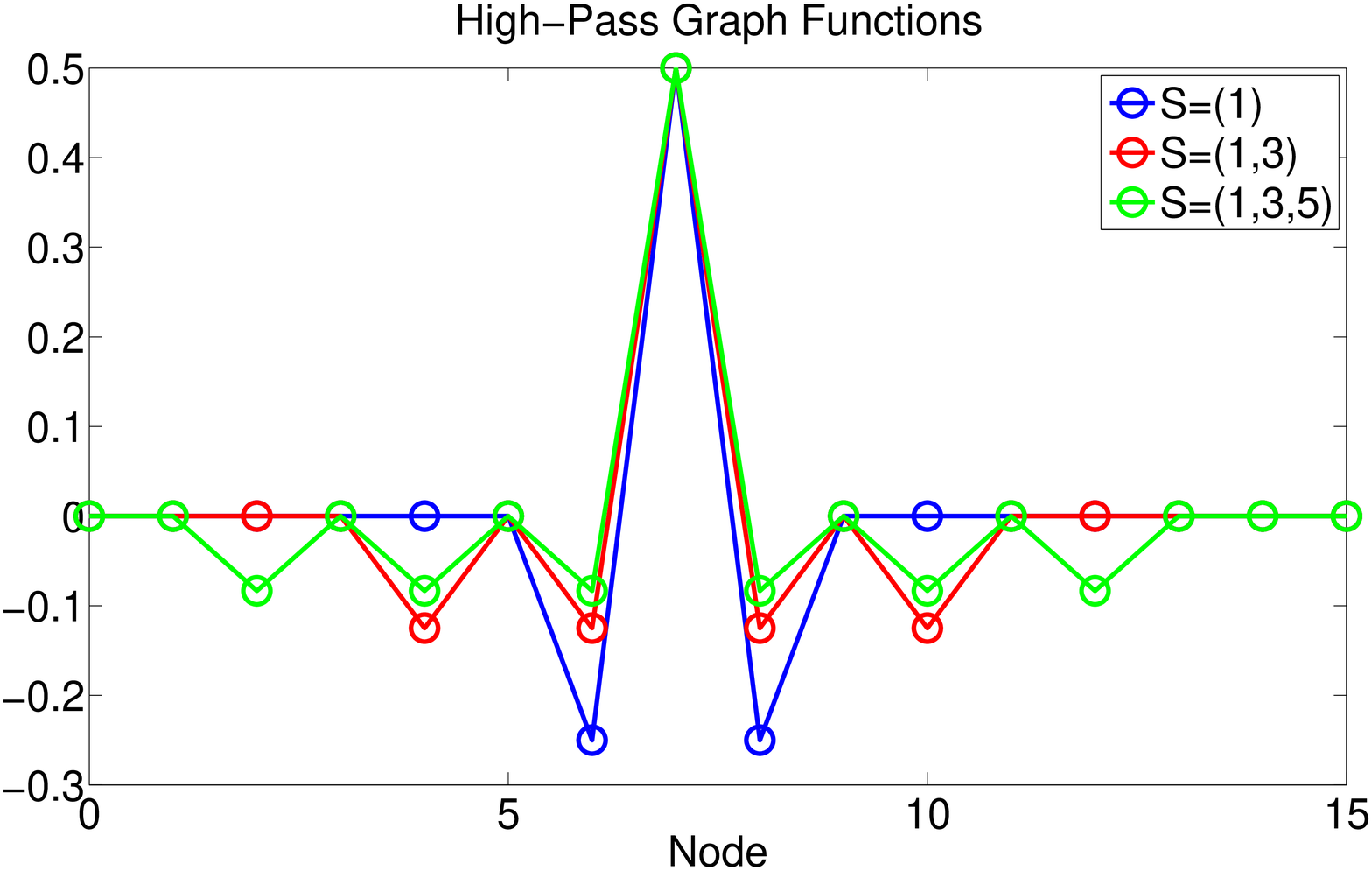}}%
		\caption{The \textit{HGSWT} filter functions at $k=1$ for different bipartite circulant graphs, $N=16$.}
\end{figure}
\noindent Similarly, the rows and columns of the low-pass filter matrix ${\bf H}_{LP_{\alpha}}$ in Eq. (\ref{eq:t32}) of the \textit{HGESWT} (at $k=1$) describe a second-order e-spline, arising from two convolved complex conjugate first-order e-splines
\[{\bf H}_{LP_{\alpha}}:=\scalemath{0.75}{\begin{bmatrix}
	 0.5\cos(\alpha) & 0.25 & 0& \cdots     &0& 0.25  \\
 0.25 &  0.5\cos(\alpha) &0.25&0 & \cdots&0  \\
 &   &\ddots & \ddots  &&  \\
 &  & & &&\\
 &  & & &&\\
 &   & &  \ddots &\ddots&  \\
&  & & &&\\
0.25&  0 &  \cdots        &  0&0.25&  0.5 \cos(\alpha)
\end{bmatrix}}.\]
By considering powers of ${\bf H}_{LP_{\alpha}}$, we obtain the polynomial e-spline basis functions, while the multiplication by low-pass filters of different parameters $\alpha_n$, as in Thm \ref{thm32},  results in convolved heterogeneous e-spline basis functions. From Cor. \ref{cor32}, we gather that the exponential polynomial reproductive properties can be similarly extended to bipartite circulant graphs.
\begin{figure}[tbp]
	\centering
	{\includegraphics[width=3in]{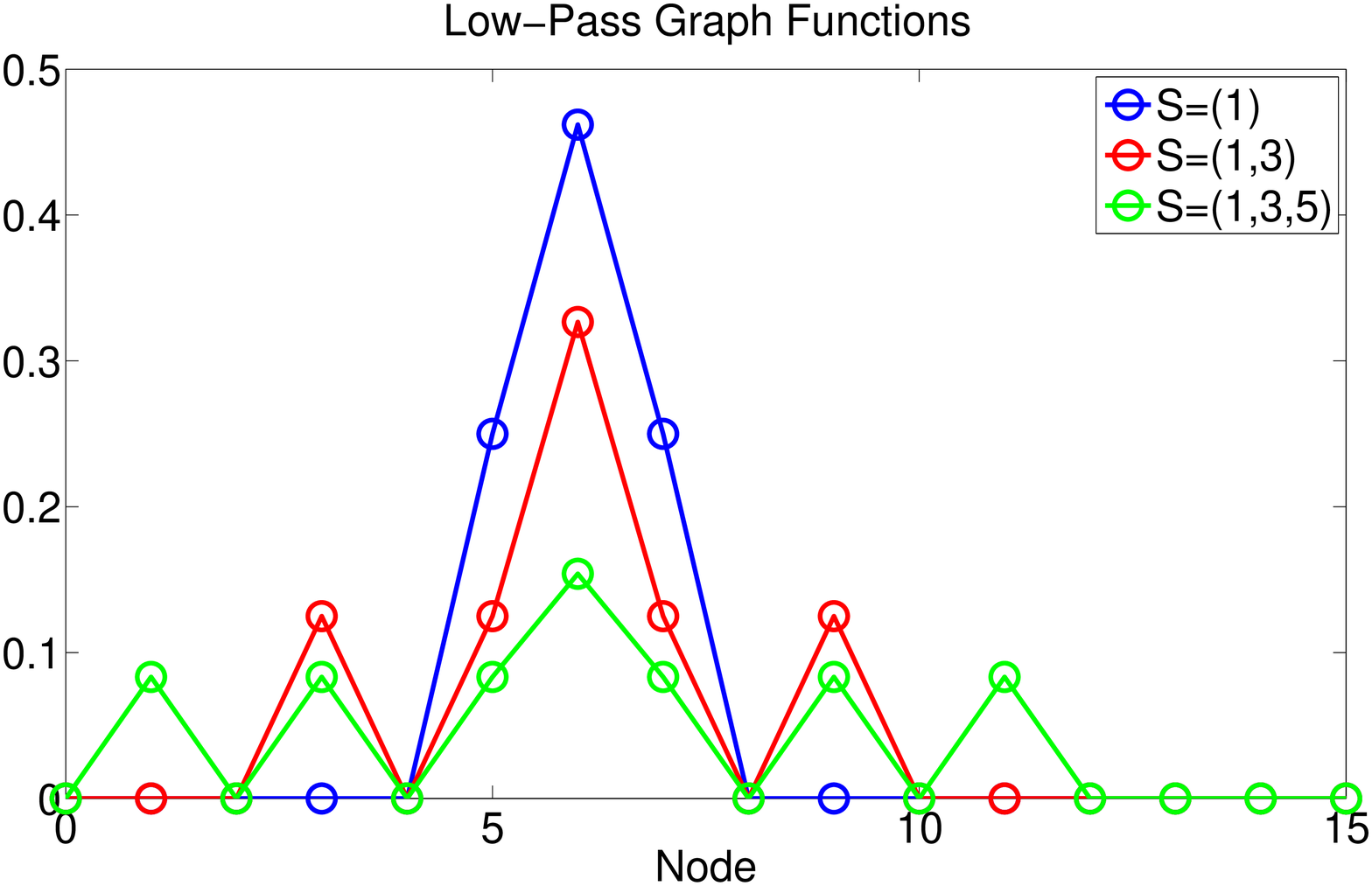}}%
	{\includegraphics[width=3in]{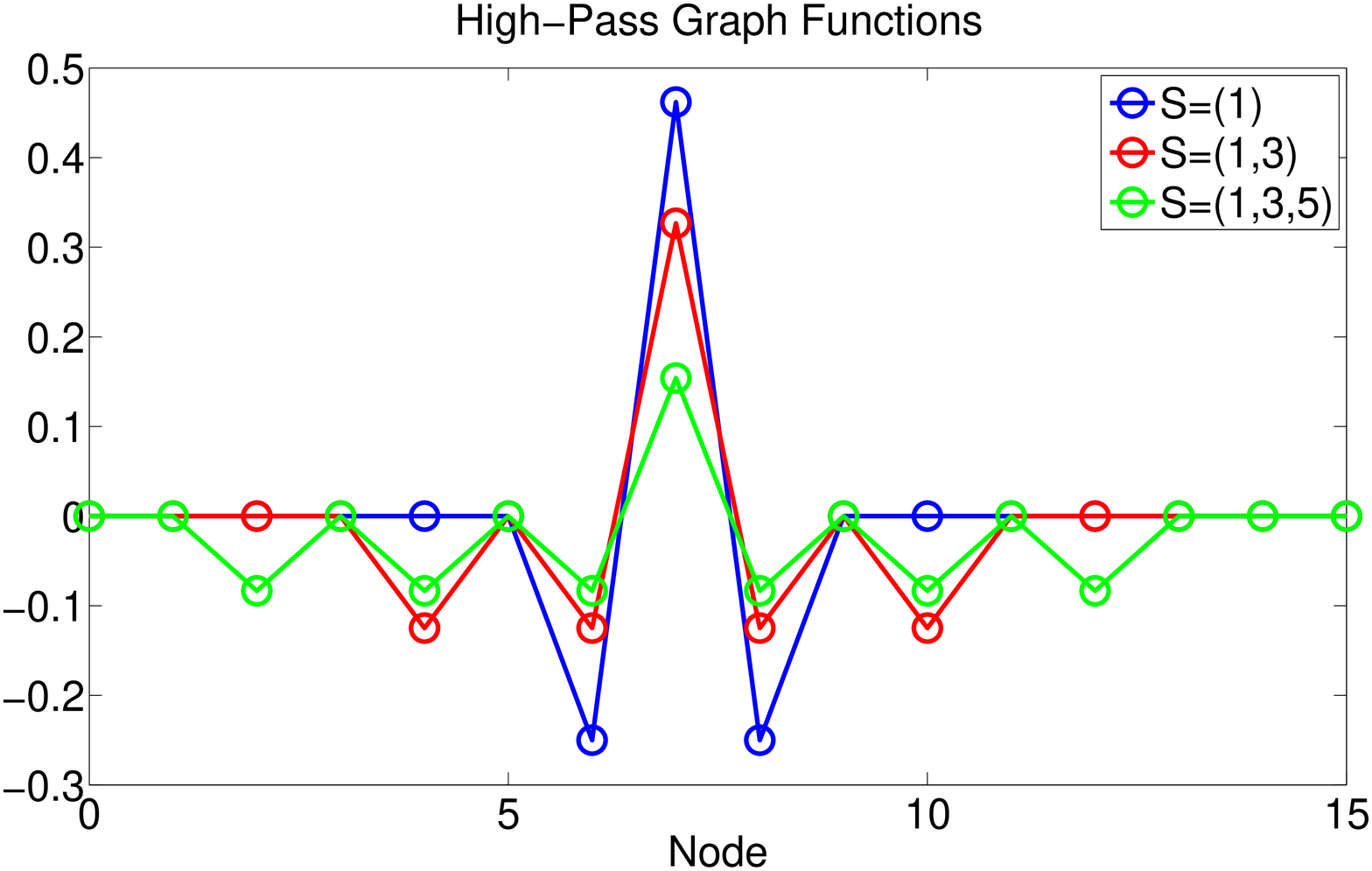}}%
		\caption{The HGESWT filter functions ($k=1$) for different bipartite circulant graphs at $\alpha=\frac{2\pi}{N}$, $N=16$.}
\end{figure}

\begin{table}[h]
\begin{center}
\scalebox{0.7}{
\begin{tabular}{ l c c || c c r }

\hline
  e-Spline & Continuous Operator & Order & Graph (e-)Spline &  Matrix Operator & Order \\ \hline
  $\beta^{(0,0)}(t)$ & $D^2\{\}=\frac{d^2}{dt^2}$& 2 & $(2d{\bf I}_N-\tilde{{\bf L}}_{0}){\bf e}_i$ &$\tilde{{\bf L}}_{0}$ & 2\\
  
  $\beta^{(0,..,0)}(t)$ & $D^{2n}\{\}=\frac{d^{2n}}{dt^{2n}}$& 2n & $(2d{\bf I}_N-\tilde{{\bf L}}_{0})^n{\bf e}_i$ &$\tilde{{\bf L}}^n_{0}$ & 2n\\
  
  $\beta^{(i\alpha ,-i\alpha )}(t)$ & $(D-i \alpha  I)*(D+i\alpha I)$\{\}& 2 & $(2\tilde{d}_{\alpha}{\bf I}_N-\tilde{{\bf L}}_{\alpha}){\bf e}_i$ &$\tilde{{\bf L}}_{\alpha}$ & 2\\
  
  $\beta^{(i\alpha,-i\alpha,...,i\alpha,-i\alpha)}(t)$ &  $(D-i\alpha I)^n*(D+i\alpha I)^n$\{\}& 2n & $(2\tilde{d}_{\alpha}{\bf I}_N-\tilde{{\bf L}}_{\alpha})^n{\bf e}_i$ &$\tilde{{\bf L}}^n_{\alpha}$ & 2n\\
  
  $\beta^{(i\alpha_1,-i\alpha_1,...,i\alpha_m,-i\alpha_m)}(t)$ &  $\prod_{t=1}^m(D-i\alpha_t I)^n*(D+i\alpha_t I)^n$\{\}& 2mn & $\prod_{t=1}^m (2\tilde{d}_{\alpha_t}{\bf I}_N-\tilde{{\bf L}}_{\alpha_t})^n{\bf e}_i$ &$\prod_{t=1}^m \tilde{{\bf L}}^n_{\alpha_t}$ & 2mn\\
\end{tabular}
}\end{center}
\caption{Comparison between the continuous e-Spline and Graph e-Spline Definitions}
\end{table}
In particular, for a bipartite circulant graph, we may interpret the resulting graph spline-like function, which we denote as column $i$ of the generalized low-pass operator $(\tilde{d}_{\alpha}{\bf I}_N+{\bf A}){\bf e}_i=(2\tilde{d}_{\alpha}{\bf I}_N-\tilde{{\bf L}}_{\alpha}){\bf e}_i$ for e-degree $\tilde{d}_{\alpha}=\sum_{k=1}^M 2 d_k\cos(\alpha k)$, as a convolution of the discrete (e-)spline $\tilde{\beta}_{+}^{(i\alpha,-i\alpha)}(t)$ of order $2$ with a function $\phi_G(t)$, which depends on the connectivity of the graph at hand. Thereby, we introduce a link to graphs and make the notion of a `graph-spline' which converges to the `classical' discrete spline as $G_S\rightarrow G_{S_1=(1)}$ for an arbitrary circulant bipartite graph $G_S$ with generating set $S$, more concrete. Figures $3$ and $4$ compare the low-and high-pass functions of the \textit{HGSWT} and \textit{HGESWT} respectively for different bipartite graph examples which correspond to second-order graph (e-)splines, with the traditional (e-)spline represented through the simple cycle.
\\
Our motivation for the use of the spline-terminology for functions $(2\tilde{d}_{\alpha}{\bf I}_N-\tilde{{\bf L}}_{\alpha}){\bf e}_i$ originates from its reproduction properties for (exponential) polynomials and hence structural similarity with its classical counterparts, as well as its definition through a suitable differential operator, i.e. the parameterised graph Laplacian $\tilde{{\bf L}}_{\alpha}$; however, it should be clarified that these do not constitute Green's functions of $\tilde{{\bf L}}_{\alpha}$. 
In comparison, the variational graph splines in \cite{Pesenson} of the form $({\bf \mathcal{L}}+\epsilon{\bf I}_N)^{-t}{\bf e}_i,\enskip t>0$ inherit only approximate properties for (exponential) polynomial reproduction when ${\bf \mathcal{L}}$ is circulant, and contrary to the proposed spline constructions, the former are neither well-characterized on the graph nor compactly supported, and therefore of lesser interest as basis functions for graph wavelets. \\
\\
\noindent We summarize and compare the classes of spline-like functions on bipartite circulant graphs and their classical continuous counterparts in Table $1$ in relation to order, with the symmetrization $\beta^n(x)=\beta^n_{+}\left(x+\frac{n+1}{2}\right)$.
\subsubsection{The directed graph spline}
The collective of results pertaining to vanishing moments of graph operators can be extended to the case when ${\bf A}$ is the adjacency matrix of a directed circulant graph $\vec{G}_S$, and the corresponding graph Laplacian is replaced by a first order (normalized) difference operator of the form ${\bf S}={\bf I}_N-\frac{{\bf A}}{d}$, similarly defined as in \cite{moura}. Let edge $(i,(i+s_k)_N)$ in $\vec{G}_S$ be directed from node $i$ to $(i+s_k)_N$, for $s_k\in S$; as a result of degree-regularity (i.e. the in-and out-degrees of each node are the same) and circularity, ${\bf A}$ maintains the DFT-matrix as its basis. The representer polynomial of operator ${\bf S}$ then possesses one vanishing moment, which can be generalized to higher order $k$, while the degree-parameterised $\tilde{{\bf S}}_{\pm\alpha}=\frac{\tilde{d}_{\pm\alpha}}{d}{\bf I}_N-\frac{{\bf A}}{d}$, featuring the, now complex, e-degree $\tilde{d}_{\pm\alpha}=\sum_{k=1}^M d_k e^{\pm i\alpha k}$, per node, possesses one vanishing exponential moment, i.e. $\tilde{{\bf S}}_{\pm\alpha}$ respectively annihilate exponential graph signals with exponent $\pm i\alpha$. 
Generalizations also apply to the reproduction properties of low-pass filters of the form ${\bf H}_{LP_{{\pm\alpha}}}=\frac{1}{2^k}\left(\frac{\tilde{d}_{\pm\alpha}}{d}{\bf I}_N+\frac{{\bf A}}{d}\right)^k$ in the bipartite case, whereby (linear combinations of) the rows of ${\bf H}_{LP_{{\pm\alpha}}}$ (at $k=1$) reproduce exponentials with reversed exponent $\mp i\alpha$ (and vice versa for the columns). \\
The invertibility of the graph spline wavelet filterbank construction in Thm \ref{thm31} remains intact for directed graphs at $k=1$, as by the Perron Frobenius Thm. \cite{frob} for nonnegative matrices, $\frac{{\bf A}}{d}$ maintains an (albeit complex) spectrum with $|\gamma_i|< \gamma_{max}$ and $\gamma_{max}=1$ of multiplicity $1$ corresponding to eigenvector ${\bf 1}_N$; here ${\bf A}$ is required to be primitive, i.e. ${\bf A}^k>0$ for some $k\in\mathbb{N}$ to ensure $|\gamma_i|<\gamma_{max}$ \cite{frob}. Otherwise, invertibility of the transform for all remaining graph cases depends on the downsampling pattern and requires that the `downsampled' eigenvectors associated with $\gamma_i$ and  $-\gamma_i$ (for $|\gamma_i|=1$) form linearly independent sets respectively. 
Similarly, the proof of the graph e-spline wavelet transform in Thm \ref{thm32} may be extended to accommodate directed graphs under further restrictions. \\
\\
The study of the directed graph case is covered to a lesser extent in spectral graph theory, however, \cite{chungdir} defines the combinatorial graph Laplacian of a directed graph as a symmetrization via the probability transition matrix ${\bf P}$. When the graph is directed, strongly connected and circulant, this would correspond to $\vec{{\bf \mathcal{L}}}={\bf I}_N-\frac{{\bf A}+{\bf A}^H}{2d}$, which is equivalent to the normalized graph Laplacian of its undirected counterpart. 
This definition gives rise to interesting generalizations, where the undirected normalized e-graph Laplacian can be expressed via the decomposition \[\tilde{{\bf \mathcal{L}}}_{\alpha}=\frac{\tilde{d}_{\alpha}}{2d}{\bf I}_N-\frac{{\bf A}+{\bf A}^H}{2d}=\frac{\tilde{{\bf S}}_{\alpha}+\tilde{{\bf S}}_{\alpha}^H}{2}=\frac{\tilde{{\bf S}}_{-\alpha}+\tilde{{\bf S}}^H_{-\alpha}}{2},\] with $\tilde{d}_{\alpha}=\tilde{d}_{+\alpha}+\tilde{d}_{-\alpha}$. In other words, the annihilation property of the e-graph Laplacian is preserved in the case of a directed circulant graph through the Hermitian transpose, where $\tilde{d}_{+\alpha}$ can be simultaneously interpreted as the degree which annihilates complex exponentials with exponent $+i\alpha$ on the graph of ${\bf A}$ and $-i\alpha$ on the graph of ${\bf A}^H$ (and vice versa for $\tilde{d}_{-\alpha}$). This ties in with our previous discussion on graph spline similarities and analogies, as in the undirected case, the (e-)graph Laplacian operator is a graph extension of a traditional second order derivative operator, thereby giving rise to graph spline-like functions and associated wavelets in degree steps of $2$, suggesting that the directed first-order graph difference operator $\tilde{{\bf S}}_{\alpha}$  provides an extension to the traditional first-order differential operator. For the directed cycle, we therefore ascertain a comprehensive analogy with the traditional spline and e-spline definitions. \\

As we are primarily interested in real valued node degrees and (symmetric) graph filters, we will continue to focus on the undirected case. 

\subsection{Complementary Graph (E-)Spline Wavelets}
The introduced wavelet transforms can annihilate (complex exponential) polynomial graph signals in the high-pass branch, yet do not reproduce such in the low-pass branch, unless the graph at hand is also bipartite, as shown in Cors. \ref{cor31} and \ref{cor32}. In addition, it becomes apparent that while the proposed filterbanks are well-defined in the analysis domain, they lack a straightforward synthesis representation. At last, we point to the fact that both low-and high-pass filters have compact support of the same length $2Mk+1$, based on the given ${\bf A}$, while the support of their corresponding synthesis filters is comparatively larger (exponentially decaying).\\

In light of this, we develop a new class of graph wavelet filterbanks on circulant graphs, by making use of traditional spectral factorization techniques ordinarily employed for the creation of biorthogonal perfect reconstruction filterbanks (also used in \cite{ortega3} to create bipartite spectral graph filters), in order to satisfy the additional desired properties. Here, within a generalized approach, one can specifically tailor the design of analysis and synthesis filters to incorporate variable reproduction and annihilation properties, while ensuring localization and compact support in the vertex domain. \\
In particular, since the filtering operation of a discrete-time signal in traditional signal processing can be defined as the matrix-vector product between a circulant matrix and the given signal-vector, spectral factorisation in the $z$-domain is directly applicable to circulant graphs, establishing a convenient analogy to the graph domain. Thus, we can achieve that the filter matrices in the analysis and synthesis branch are of finite and `balanced' bandwidth for $2Mk<N$. We further note that this type of filterbank, contrary to the preceding, facilitates only the standard alternating downsampling pattern on circulant graphs, where every other node is skipped.\\
\\
Let us consider the simple spline case first. Given analysis high-pass filter $H_{HP}(z)=\frac{l(z)^k}{(2d)^k}$ with $2k$ vanishing moments, we can determine the synthesis lowpass filter as $\tilde{H}_{LP}(z)=H_{HP}(-z)$ and thus, the analysis lowpass filter $H_{LP}(z)$ via the biorthogonality relations of a traditional filterbank \cite{biorref}, with $P(z)=H_{LP}(z)\tilde{H}_{LP}(z)$ subject to the constraint of the half-band condition\footnote{Since the set up of our filterbank is such that even-numbered nodes retain the low-pass and odd-numbered nodes the high-pass component, we are technically considering orthogonality between the shifted $z H_{HP}(z)$ and its dual $\tilde{H}_{LP}(z)=-z^{-1}((-z)H_{HP}(-z))$, but omit this notation for simplicity.} $P(z)+P(-z)=2$.
By requiring that the resulting filter $H_{LP}(z)=\sum_{i=0}^T r_i (z^i +z^{-i})$ is symmetric (on an undirected graph), we obtain the equality
\begin{equation} \label{eq:pz} P(z)=1+\sum_{i=0}^L p_{2i+1} (z^{2i+1}+z^{-(2i+1)})=\frac{1}{(2d)^k}\left(d-\sum_{i=1}^M (-1)^{i} d_i (z^i+z^{-i})\right)^k\left(\sum_{i=0}^T r_i (z^i +z^{-i})\right)\end{equation}
where $P(z)$ is a polynomial of odd powers. In addition, we may further impose the restriction that the analysis and synthesis filters have an equal number of vanishing moments $2k$, by setting $H_{LP}(z)=(z+1)^{k}(z^{-1}+1)^{k}R(z)$, where $R(z)$ is the polynomial to be determined. \\ For $k=1$, we require the highest degree of each side of Eq. (\ref{eq:pz}) to be $2L+1=M+T$, and consider the $L+1$ constraints $p_{2n}=0$, $n=1,...,L$, and $p_0=1$, and $T+1$ unknowns $r_i$, $i=0,...,T$. Thus, to obtain a unique solution to the resulting linear system, we require $L=T=\frac{M+T-1}{2}$, or $T=M-1$. For a higher-order filterbank with $k>1$, the constraints change as follows: $T=L=\frac{M k+T-1}{2}$, or $T=M k-1$. If we further impose that both synthesis and analysis filters have an equal number of vanishing moments, we need to include the additional factor $(z+1)^k(z^{-1}+1)^k$ for $H_{HP}(z)=\frac{1}{(2d)^k}l(z)^k $, and require $T=M k+k-1$. \\
The necessary existence of a complementary analysis low-pass filter for a given high-pass filter of graph $G$, follows from the B\'{e}zout theorem:
\begin{thmm}[B\'ezout \cite{esplinewav}] 
Given $C(z)\in\mathbb{R}[z]$, there exists a polynomial $D(z)\in\mathbb{R}[z]$ such that \[C(z)D(z)+C(-z)D(-z)=2\] if and only if $C(z)$ has neither zero as a root, nor a pair of opposite roots. In this case, there exists a unique polynomial $D_0(z)\in\mathbb{R}[z]$ satisfying the above and such that $deg D_0(z)\leq C(z)-1$. The set of all polynomials $D(z)\in\mathbb{R}[z]$ that satisfy the above is \[\{D_0(z)+z\lambda (z^2)C(-z),\quad\lambda(z)\in\mathbb{R}[z]\}\]
\end{thmm}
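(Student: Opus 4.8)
The plan is to treat this as the classical polynomial Bézout identity, the only novelty being the extra rigidity introduced by the substitution $z\mapsto-z$. First I would dispose of the necessity of the hypotheses by evaluation. If $C(z)D(z)+C(-z)D(-z)=2$ holds identically in $\mathbb{R}[z]$, hence in $\mathbb{C}[z]$, then setting $z=0$ gives $2C(0)D(0)=2$, so $C(0)\neq0$; and for any (real or complex) root $r$ of $C$, setting $z=r$ gives $C(-r)D(-r)=2$, so $C(-r)\neq0$, i.e. $-r$ is not a root. Thus $C$ has neither zero nor a pair of opposite roots.

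For sufficiency, the key observation is that these hypotheses are exactly equivalent to $C(z)$ and $C(-z)$ being coprime in $\mathbb{R}[z]$: a common root $r$ of $C(z)$ and $C(-z)$ forces $C(r)=C(-r)=0$, so either $r=0$ (a zero root) or $\{r,-r\}$ is a pair of opposite roots, and conversely coprimality excludes both (and coprimality over $\mathbb{R}$ and over $\mathbb{C}$ agree, the gcd being invariant under field extension). Then I would invoke the extended Euclidean algorithm to obtain $U(z),W(z)\in\mathbb{R}[z]$ with $C(z)U(z)+C(-z)W(z)=1$ and, by the standard remainder bookkeeping, $\deg U<\deg C$ and $\deg W<\deg C$. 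Substituting $z\mapsto-z$ gives $C(-z)U(-z)+C(z)W(-z)=1$; adding the two identities,
\[
C(z)\bigl(U(z)+W(-z)\bigr)+C(-z)\bigl(W(z)+U(-z)\bigr)=2 .
\]
Setting $D_0(z):=U(z)+W(-z)$, one checks $D_0(-z)=U(-z)+W(z)$, which is precisely the bracket multiplying $C(-z)$; hence $D_0$ solves the equation and $\deg D_0\le\deg C-1$.

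For uniqueness and the description of the full solution set, let $D,D'$ both satisfy the identity and put $\Delta:=D-D'$, so that $C(z)\Delta(z)=-C(-z)\Delta(-z)$. Coprimality of $C(z)$ and $C(-z)$ forces $C(z)\mid\Delta(-z)$, say $\Delta(-z)=C(z)\mu(z)$, equivalently $\Delta(z)=C(-z)\mu(-z)$; substituting back and cancelling the nonzero factor $C(z)C(-z)$ yields $\mu(-z)=-\mu(z)$, so $\mu$ is odd, $\mu(z)=z\lambda(z^2)$ for some $\lambda\in\mathbb{R}[z]$, and therefore $\Delta(z)=z\lambda(z^2)C(-z)$ up to sign. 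This gives the stated family $\{D_0(z)+z\lambda(z^2)C(-z):\lambda\in\mathbb{R}[z]\}$, and if two solutions both have degree $\le\deg C-1$, their difference has degree $\le\deg C-1<\deg C+1\le\deg\bigl(z\lambda(z^2)C(-z)\bigr)$ whenever $\lambda\neq0$, forcing $\lambda=0$ and pinning down $D_0$.

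The main obstacle I anticipate is bookkeeping rather than conceptual: making the equivalence ``no zero root and no opposite pair'' $\Leftrightarrow$ ``$\gcd(C(z),C(-z))=1$'' fully rigorous across $\mathbb{R}$ and $\mathbb{C}$ (invoking field-extension invariance of the gcd), and tracking the degree bounds from the Euclidean step carefully enough that $\deg D_0\le\deg C-1$ survives the symmetrization. Everything else reduces to the division algorithm in $\mathbb{R}[z]$.
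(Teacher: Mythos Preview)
Your argument is correct and complete: the necessity by evaluation, the identification of the hypotheses with $\gcd(C(z),C(-z))=1$, the symmetrized Euclidean step producing $D_0$ with the right degree, and the odd-polynomial parametrization of the homogeneous solutions all go through as written. One minor point: the standard Euclidean bounds give $\deg U<\deg C(-z)=\deg C$ and $\deg W<\deg C$, so $\deg D_0\le\deg C-1$ follows immediately without any further ``survival'' concern after symmetrization.

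However, there is nothing to compare against: the paper does not prove this statement. It is quoted verbatim as a known result from the cited reference \cite{esplinewav} and used as a black box to guarantee existence of the complementary analysis low-pass filter; the paper's only contribution around it is the one-line verification that the specific $C(z)=H_{HP}(-z)$ arising from a connected circulant graph with $s=1\in S$ satisfies the hypotheses (nonzero constant term since $d>0$, and no opposite-root pairs since the generating set is not all-even). Your write-up therefore supplies a self-contained proof where the paper simply invokes the literature.
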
 
\noindent In our case, we observe that $C(z)=H_{HP}(-z)$ cannot have a zero root since $d=\sum_{i=1}^M 2d_i> 0$ (as we only consider nonnegative weights $d_i\geq 0$). Furthermore, $C(z)$ contains pairs of opposing roots such that $H_{HP}(-z)=H_{HP}(z)$ if the generating set $S$ of the graph at hand contains only even elements; however, as we assume that $G$ is connected with $s=1\in S$, this cannot occur in our framework. 
We can equivalently resort to spectral factorization for bipartite graphs, however, since both $P(z)$ and $H_{HP}(z)$ are odd degree polynomials, $R(z)$ is required to be of higher degree $T$ than the remaining factor in $P(z)$ in order to produce a non-trivial solution. This gives rise to an underdetermined linear system, which can be uniquely solved by imposing additional constraints on the coefficients $r_i$ (such as roots at $z=-1$). The proposed biorthogonal graph wavelet constructions for circulant graphs are captured in the following theorem:
\begin{thm}\label{thm33}
Given the undirected, and connected circulant graph $G=(V,E)$ of dimension $N$, with adjacency matrix ${\bf A}$ and degree $d$ per node, we define the higher-order `complementary' graph-spline wavelet transform (HCGSWT) via the set of analysis filters:
\begin{equation} {\bf H}_{LP,an}\stackrel{(*)}{=}{\bf C}\bar{{\bf H}}_{LP}=\frac{1}{2^k}{\bf C}\left({\bf I}_N+\frac{{\bf A}}{d}\right)^k\end{equation}
\begin{equation}{\bf H}_{HP,an}=\frac{1}{2^k}\left({\bf I}_N-\frac{{\bf A}}{d}\right)^k\end{equation}
and the set of synthesis filters:
\begin{equation}{\bf H}_{LP,syn}=c_1{\bf H}_{HP,an} \circ {\bf \mathit{I}}_{HP}\end{equation}
\begin{equation}{\bf H}_{HP,syn}=c_2{\bf H}_{LP,an} \circ {\bf \mathit{I}}_{LP}\end{equation}
where ${\bf H}_{LP,an}$ is the solution to the system from Eq. $(7)$ under specified constraints, with coefficient matrix ${\bf C}$ arising from the relation ${\bf H}_{LP,an}\bar{{\bf H}}_{LP}^{-1}$ where applicable (see Cor. \ref{cor33}). Here, $\circ$ is the Hadamard product, $c_i,i\in\{1,2\}$ are normalization coefficients, and ${\bf \mathit{I}}_{LP/HP}$ circulant indicator matrices with first row of the form $\lbrack 1 \enskip-1\enskip 1\enskip -1\enskip...\rbrack$.
\end{thm}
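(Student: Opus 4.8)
The plan is to carry out the construction in the representer-polynomial ($z$) domain, where filtering on a circulant graph is ordinary polynomial multiplication, and then transfer the resulting identities back to circulant matrices. \textbf{Step 1 (existence via B\'ezout).} Take $C(z)=H_{HP}(-z)=\frac{1}{(2d)^k}l(-z)^k$ as the prescribed synthesis low-pass filter. By B\'ezout's theorem above, a complementary analysis low-pass $H_{LP}(z)$ with $P(z)=H_{LP}(z)C(z)$ satisfying the half-band condition $P(z)+P(-z)=2$ exists iff $C(z)$ has neither $0$ as a root nor a pair of opposite roots. The first fails only if $d=\sum_{i=1}^M 2d_i=0$, which is excluded since weights are nonnegative and $G$ is connected; the second would force the exponents in $l(z)$ to lie on a proper sublattice, i.e. the generating set $S$ to consist only of even elements, which is excluded by the standing assumption $s=1\in S$. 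Hence the minimal-degree solution $H_{LP}(z)$ exists, with $\deg H_{LP}\le \deg H_{HP}-1$, and is unique up to the B\'ezout family $\{H_{LP,0}(z)+z\,\lambda(z^2)C(-z)\}$.

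\textbf{Step 2 (the linear system and degree count).} Expanding $H_{LP}(z)=\sum_{i=0}^T r_i(z^i+z^{-i})$ and imposing symmetry turns Eq. (\ref{eq:pz}) into a polynomial identity whose right-hand side has top degree $Mk+T$; requiring $P(z)$ to be half-band, i.e. $p_0=1$ and $p_{2n}=0$ for $n=1,\dots,L$ with $2L+1=Mk+T$, yields $L+1$ linear constraints in the $T+1$ unknowns $r_i$. Matching counts forces $T=Mk-1$; if one additionally demands that the analysis filter carry the full $2k$ vanishing moments by writing $H_{LP}(z)=(1+z)^k(1+z^{-1})^k R(z)$, the same count gives $T=Mk+k-1$. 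The B\'ezout guarantee of Step 1 is precisely the statement that this square system is nonsingular, so ${\bf H}_{LP,an}$, read off as the circulant matrix with representer polynomial $H_{LP}(z)$, is well defined and of finite (``balanced'') bandwidth $Mk+T$.

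\textbf{Step 3 (matrix form and perfect reconstruction).} Since the map from circulant matrices to representer polynomials is a ring isomorphism (both families being diagonalised by the DFT), $H_{LP}(z)$ corresponds to a circulant matrix that factors as ${\bf C}\bar{{\bf H}}_{LP}$ with $\bar{{\bf H}}_{LP}=\frac{1}{2^k}\left({\bf I}_N+\frac{{\bf A}}{d}\right)^k$; by Cor. \ref{cor33}, $\bar{{\bf H}}_{LP}$ is invertible under the stated hypotheses, so ${\bf C}={\bf H}_{LP,an}\bar{{\bf H}}_{LP}^{-1}$ is a genuine circulant matrix, establishing relation $(*)$. Perfect reconstruction then follows from the classical two-channel argument: because odd-numbered nodes retain the high-pass component, the relevant analysis modulation matrix has rows $(H_{LP}(z),\,H_{LP}(-z))$ and $(zH_{HP}(z),\,-zH_{HP}(-z))$, with determinant $-z\big(P(z)+P(-z)\big)=-2z\neq 0$; inverting it produces synthesis filters $\tilde{H}_{LP}(z)=H_{HP}(-z)=C(z)$ and $\tilde{H}_{HP}(z)=z^{-1}H_{LP}(-z)$, which in the vertex domain are exactly ${\bf H}_{HP,an}\circ{\bf \mathit{I}}_{HP}$ and ${\bf H}_{LP,an}\circ{\bf \mathit{I}}_{LP}$ up to the scalars $c_1,c_2$, the Hadamard products with the alternating-sign indicator matrices implementing the substitution $z\mapsto -z$ on circulant matrices. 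For vanishing moments, $H_{HP}(z)=\frac{l(z)^k}{(2d)^k}$ keeps its $2k$ zeros at $z=1$ from Lemma \ref{lem31}, while $\tilde{H}_{HP}(z)=z^{-1}H_{LP}(-z)$ inherits the zeros of $H_{LP}$ at $z=-1$, which in the bipartite setting of Cor. \ref{cor31} — or whenever roots at $z=-1$ are imposed on $R(z)$ — delivers the complementary reproduction property, so that $\tilde{H}_{HP}$ and $H_{HP}$ share their vanishing moments.

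\textbf{Step 4 (bipartite case and the main difficulty).} When $G$ is bipartite both $P(z)$ and $H_{HP}(z)$ have odd degree, so the naive count leaves the system underdetermined; one pins down a unique $R(z)$ by appending extra constraints (e.g. prescribing roots at $z=-1$), after which the argument of Steps 3 carries over verbatim. The step I expect to require the most care is the bookkeeping around the shift-by-$z$ convention flagged in the footnote to Eq. (\ref{eq:pz}): one must keep the modulation matrix, the half-band normalisation, and the Hadamard products with ${\bf \mathit{I}}_{LP/HP}$ mutually consistent, and must verify that although ${\bf C}$ is in general not itself banded, ${\bf H}_{LP,an}$ still has the advertised finite bandwidth because it is defined \emph{directly} as the circulant matrix of the computed polynomial $H_{LP}(z)$, rather than as a product of a banded matrix with a dense one.
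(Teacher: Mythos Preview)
Your proposal is correct and follows exactly the route the paper takes: the paper's proof is literally ``Follows from above discussion,'' and that discussion consists of the B\'ezout existence argument (with the same two obstruction checks $d\neq 0$ and $s=1\in S$), the degree count yielding $T=Mk-1$ (resp.\ $T=Mk+k-1$), the identification of ${\bf C}$ via Cor.~\ref{cor33}, and the bipartite caveat. Your Step~3 makes explicit the modulation-matrix inversion and the role of the Hadamard product with ${\bf \mathit{I}}_{LP/HP}$ as the $z\mapsto -z$ map on circulants, which the paper only alludes to in the paragraph preceding Sect.~3.3; this is added clarity rather than a different argument.
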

\noindent \textit{Proof.} Follows from above discussion.\\
\\
As a result of spectral factorization, the shape and vertex spread of ${\bf H}_{LP,an}$ does not coincide exactly with the adjacency matrix of the graph (and its powers), but rather encompasses a subset $S_i\subseteq \textit{N}(i,\tilde{k})$ of vertices, per node $i$ within its $\tilde{k}$-hop local neighborhood, whereby $\tilde{k}$ depends on the initial constraints we impose on $H_{LP,an}(z)$.\\
We therefore establish a structural link to the analysis branch of the \textit{HGSWT} in Thm \ref{thm31}, in particular, to the higher-order low-pass graph filter $\bar{{\bf H}}_{LP}$ of Eq. (\ref{eq:t31}), which is based on the adjacency matrix, via the graph filter given by symmetric circulant coefficient matrix ${\bf C}$ in $(*)$;  here, ${\bf C}$ can be determined via matrix inversion of $\bar{{\bf H}}_{LP}$ (when $G$ is non-bipartite, see Cor. \ref{cor33} for $\alpha=0$).\\
Moreover, we note that given initial graph signal ${\bf p}\in\mathbb{R}^N$ and letting $\tilde{{\bf p}}\in\mathbb{R}^N$ denote the graph signal in the GWT domain, as in Sect. $3.1$, the synthesis stage can be expressed as follows:
\[\left(\frac{1}{2}\left(({\bf I}_N+{\bf K}){\bf H}_{LP,syn}\right)^T+\frac{1}{2}\left(({\bf I}_N-{\bf K}){\bf H}_{HP,syn}\right)^T\right)\tilde{{\bf p}}={\bf p},\]
for $K_{i,i}=1$ at even-numbered positions and $K_{i,i}=-1$ otherwise.
\\
\\
We proceed analogously for the graph e-spline case. According to (\cite{esplinewav}, Thm. $1$), for a scaling filter $H_j(z)$ at level $j$ to reproduce a function of the form $P(t) e^{\gamma_m t}$, where $deg P(t) \leq(L_m -1)$ and  $L_m$ is the multiplicity of $\gamma_m$, it is necessary and sufficient that the former be divisible by the term $R_{2^j \vec{\gamma}}(z)=\prod_{m=1}^M(1+e^{2^j\gamma_m} z^{-1})$, $\forall j\leq j_0-1$, with $\vec{\gamma}=(\gamma_1,...,\gamma_M)^T\in\mathbb{C}^M$, i.e. satisfying the generalized Strang-Fix conditions for suitable $\vec{\gamma}$. Here, $H_j(z)$ must not contain roots of opposite sign.\\
\\
Mirroring the constructions in Thm. \ref{thm33} and given analysis high-pass filter $H_{HP_{\alpha}}(z)=\frac{\tilde{l}_{\alpha}(z)}{2d}$ with $2$ vanishing exponential moments, we determine the analysis lowpass filter $H_{LP_{\alpha}}(z)$, which can be expressed as an extension of Eq. (\ref{eq:t32}) via a coefficient matrix ${\bf C}$ (subject to constraints, see Cor. \ref{cor33}). By imposing the constraints of B\'{e}zout's Thm. \cite{esplinewav}, and setting $P(z)=H_{LP_{\alpha}}(z)H_{HP_{\alpha}}(-z)$, we arrive at an equality of the form
\begin{equation} \label{eq:ep} P(z)=1+\sum_{i=0}^L p_{2i+1} (z^{2i+1}+z^{-(2i+1)})=\frac{1}{2d}\left(\tilde{d}_{\alpha}-\sum_{i=1}^M (-1)^{i} d_i (z^i+z^{-i})\right)\left(\sum_{i=0}^T r_i (z^i +z^{-i})\right)\end{equation}
and solve the emerging linear system in a similar fashion as discussed for Eq. (\ref{eq:pz}) for unknown symmetric coefficients $r_i$ of $H_{LP_{\alpha}}(z)$.\\
Moreover, for the analysis and synthesis filters to have (an equal number of) vanishing moments, we require $H_{LP_{\alpha}}(z)=(z+ e^{i\alpha })(1+ e^{-i\alpha }z^{-1}) R(z)$, where $R(z)$ is the polynomial to be determined. This scheme can be generalised to higher order for $\vec{\alpha}=(\alpha_1,...,\alpha_T)$ such that $H_{HP_{\vec{\alpha}}}(z)=\prod_{n=1}^T\frac{\tilde{l}_{\alpha_n}(z)^k}{(2d)^k}$, and $H_{LP_{\vec{\alpha}}}(z)=\prod_{n=1}^T(z+2\cos(\alpha_n)+z^{-1})^k R(z)$, however, a solution $R(z)$ exists only if the remainder term in $P(z)$ does not contain zero and/or opposing roots \cite{esplinewav}. \\
The possibility of a multiresolution representation of the filterbank, is conditional upon the existence of real-valued filters, which maintain their reproduction/annihilation properties up to a certain level $j\leq J-1$. In the classical domain, the filters of a non-stationary biorthogonal exponential wavelet filterbank with exponent $\vec{\alpha}=(\alpha_1,...,\alpha_T)$ (as described in \cite{esplinewav}) do not contain roots of opposite sign nor the zero root at level $j$, as long as there are no distinct $\alpha, \alpha'$ in $\vec{\alpha}$ that satisfy $2^j(\alpha-\alpha')=i(2k+1)\pi$, for some $j\leq J-1$ and $k\in\mathbb{Z}$. Otherwise, a multilevel representation is only possible up to a finite level $J-1$, when this condition ceases to be fulfilled. This result becomes particularly relevant when considering e-spline graph wavelet filterbanks of higher order. For instance, opposing roots occur at $j=0$ for the case $e^{-i\alpha}=-e^{i\alpha}$, at $\alpha=\pi/2, 3\pi/2$, and we cannot create filterbanks for these parameters. 
The discussed approach gives rise to the following filterbank:
\begin{thm}\label{thm34}
Given the undirected, and connected circulant graph $G=(V,E)$ of dimension $N$, with adjacency matrix ${\bf A}$ and degree $d$ per node, we define the higher-order `complementary' graph e-spline wavelet transform (HCGESWT) via the set of analysis filters:
\begin{equation} {\bf H}_{LP_{\vec{\alpha}},an}\stackrel{(*)}{=}{\bf C}\bar{{\bf H}}_{LP_{\vec{\alpha}}}={\bf C}\prod_{n=1}^T\frac{1}{2^k}\left(\beta_n {\bf I}_N+\frac{{\bf A}}{d}\right)^k\end{equation}
\begin{equation}{\bf H}_{HP_{\vec{\alpha}},an}=\prod_{n=1}^T\frac{1}{2^k}\left(\beta_n{\bf I}_N-\frac{{\bf A}}{d}\right)^k\end{equation}
and the set of synthesis filters:
\begin{equation}{\bf H}_{LP_{\vec{\alpha}},syn}=c_1{\bf H}_{HP_{\vec{\alpha}},an} \circ {\bf \mathit{I}}_{HP}\end{equation}
\begin{equation}{\bf H}_{HP_{\vec{\alpha}},syn}=c_2{\bf H}_{LP_{\vec{\alpha}},an} \circ {\bf \mathit{I}}_{LP}\end{equation}
where ${\bf H}_{LP_{\vec{\alpha}},an}$ is the solution to the system from Eq. (\ref{eq:ep}) for $\vec{\alpha}$ under specified constraints, with coefficient matrix ${\bf C}$ arising from the relation ${\bf H}_{LP_{\vec{\alpha}},an}{\bf \bar{H}}_{LP_{\vec{\alpha}}}^{-1}$ where applicable (see Cor. \ref{cor33}). Here, $c_i,i\in\{1,2\}$ are normalization coefficients, and ${\bf \mathit{I}}_{LP/HP}$ are circulant indicator matrices with first row of the form $\lbrack 1 \enskip-1\enskip 1\enskip -1\enskip...\rbrack$.
\end{thm}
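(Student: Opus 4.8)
The plan is to mirror, step for step, the construction already carried out for the \textit{HCGSWT} in Theorem~\ref{thm33}, simply replacing the Laplacian-power representer polynomial $l(z)^k$ by the product $\prod_{n=1}^{T}\tilde{l}_{\alpha_n}(z)^k$ of e-graph Laplacian powers. First I would fix the analysis high-pass filter as $H_{HP_{\vec\alpha},an}(z)=\prod_{n=1}^{T}\tilde{l}_{\alpha_n}(z)^k/(2d)^k$, which by Lemma~\ref{lem32} and the factorisation used in the proof of Theorem~\ref{thm32} carries $2k$ vanishing exponential moments at each exponent $\pm i\alpha_n$. Setting the synthesis low-pass to $\tilde{H}_{LP_{\vec\alpha}}(z)=H_{HP_{\vec\alpha},an}(-z)$ forces the perfect-reconstruction product $P(z)=H_{LP_{\vec\alpha},an}(z)\,H_{HP_{\vec\alpha},an}(-z)$ to obey the half-band identity $P(z)+P(-z)=2$; writing $H_{LP_{\vec\alpha},an}(z)=\sum_{i=0}^{T}r_i(z^i+z^{-i})$ and expanding this identity in the unknowns $r_i$ produces exactly the linear system of Eq.~(\ref{eq:ep}). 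Matching the highest degrees on both sides --- so that $T=Mk-1$, or $T=Mk+k-1$ once one also imposes the reproducing factor $\prod_{n}(z+e^{i\alpha_n})^k(1+e^{-i\alpha_n}z^{-1})^k$ that makes analysis and synthesis share the same exponential vanishing moments, or a strictly larger $T$ with auxiliary constraints such as extra roots at $z=-1$ in the bipartite case --- makes the system square and uniquely solvable, exactly as for Eq.~(\ref{eq:pz}).

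The one genuinely new ingredient is B\'ezout's theorem \cite{esplinewav}, which guarantees the existence of this complementary $H_{LP_{\vec\alpha},an}(z)$: I must check that $C(z):=H_{HP_{\vec\alpha},an}(-z)$ has neither $0$ as a root nor a pair of opposite roots. The former holds because, after multiplying through by $z^{MTk}$, the constant coefficient of $\prod_{n}(z^{M}\tilde{l}_{\alpha_n}(z))^{k}$ equals $(-d_M)^{Tk}\neq 0$ (here $d_M>0$ since $M$ is the bandwidth of ${\bf A}$), so $0$ is never a root. The latter is equivalent to $H_{HP_{\vec\alpha},an}(z)$ and $H_{HP_{\vec\alpha},an}(-z)$ sharing a root; as recorded in the discussion preceding the theorem, this is excluded for a connected $G$, where $s=1\in S$ prevents the generating set from being purely even, provided one avoids the degenerate parameters with $e^{2i\alpha_n}=-1$, i.e.\ $\alpha_n\in\{\tfrac{\pi}{2},\tfrac{3\pi}{2}\}$. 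Granting this, B\'ezout supplies $H_{LP_{\vec\alpha},an}(z)$ (unique once the constraint count of Eq.~(\ref{eq:ep}) is fixed), and by Corollary~\ref{cor33} the coefficient matrix ${\bf C}={\bf H}_{LP_{\vec\alpha},an}\bar{{\bf H}}_{LP_{\vec\alpha}}^{-1}$ is well defined, circulant and symmetric whenever $\bar{{\bf H}}_{LP_{\vec\alpha}}$ is invertible; in the exceptional bipartite or $\beta_n=-\gamma_i$ cases ${\bf C}$ is retained only as a formal device recording the factorisation of $H_{LP_{\vec\alpha},an}$.

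It then remains to read off the synthesis filters: the biorthogonality relations give $\tilde{H}_{LP_{\vec\alpha}}(z)=c_1 H_{HP_{\vec\alpha},an}(-z)$ and $\tilde{H}_{HP_{\vec\alpha}}(z)$ proportional to a shift of $H_{LP_{\vec\alpha},an}(-z)$, and since the substitution $z\mapsto -z$ on a circulant representer polynomial is precisely the entrywise product with an alternating-sign indicator matrix, these equal $c_1{\bf H}_{HP_{\vec\alpha},an}\circ{\bf \mathit{I}}_{HP}$ and $c_2{\bf H}_{LP_{\vec\alpha},an}\circ{\bf \mathit{I}}_{LP}$, as claimed. Perfect reconstruction then follows by verifying that the $2\times2$ analysis/synthesis modulation matrix equals $2{\bf I}$, which is routine exactly as for classical biorthogonal filterbanks \cite{biorref} and as in Theorem~\ref{thm33}; the annihilation property is inherited verbatim from Theorem~\ref{thm32}, whose high-pass filter is reproduced here, while the synthesis high-pass, being a modulation of $H_{LP_{\vec\alpha},an}$, inherits exponential vanishing moments at $\pm i\alpha_n$ whenever the optional reproducing factor is included. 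I expect the principal obstacle to be organisational rather than conceptual: tracking the degree count and the number of free coefficients $r_i$ simultaneously across the three regimes (plain, equal-vanishing-moments, bipartite) for a general parameter vector $\vec\alpha$, and stating cleanly which triples $(\vec\alpha,k,G)$ make the B\'ezout hypothesis and the invertibility of $\bar{{\bf H}}_{LP_{\vec\alpha}}$ hold at once.
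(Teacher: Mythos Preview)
Your proposal is correct and follows essentially the same approach as the paper: the paper does not give a separate formal proof of Theorem~\ref{thm34} but rather derives it in the discussion immediately preceding the statement, by mirroring the construction of Theorem~\ref{thm33} with $l(z)^k$ replaced by $\prod_n \tilde{l}_{\alpha_n}(z)^k$, invoking B\'ezout's theorem (subject to the no-opposite-roots caveat, in particular excluding $\alpha_n\in\{\pi/2,3\pi/2\}$), and solving the linear system of Eq.~(\ref{eq:ep}). Your write-up is in fact more explicit than the paper's on some points (e.g.\ the zero-root check via the $(-d_M)^{Tk}$ coefficient), but the logical skeleton is identical.
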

This coincides with the previous \textit{HCGSWT} in the simple spline case for $\vec{\alpha}={\bf 0}$.
\section{Graph Products and Approximations: A Multidimensional Extension}
In order to facilitate a generalization of our developed framework to arbitrary graphs, we require a means to compute circulant graph approximations to existing structures in a given network.\\
We have previously resorted to employing an adjacency matrix approximation scheme (\cite{globalsip},\cite{spie}), which determines the nearest circulant graph approximation $\tilde{G}$ to the given graph $G$ with adjacency matrix ${\bf A}\in\mathbb{R}^{N\times N}$ by minimizing the error norm $\min _{\tilde{{\bf A}}\in C_N}||{\bf A}^P-\tilde{{\bf A}}||_F$ over the space $C_N$ of all $N\times N$ circulant matrices. If $G$ is sparse or a posteriori sparsified by removing edges of small weight, the approximation can be subjected to a prior node relabelling $P$ based on the RCM-algorithm \cite{rcm} in order to minimize the bandwidth of ${\bf A}$. This facilitates a restructuring such that ${\bf A}^P$ is (locally) closer to circulant (sub-)structures and hence reduces the number of complementary edges in $\tilde{{\bf A}}$. The closed-form solution is therefore obtained as \[\tilde{{\bf A}}=\sum_{i=0}^{N-1}\frac{1}{N}\langle {\bf A}^P,{\bf \Pi}^i\rangle_F{\bf \Pi}^i,\]
for circulant permutation matrix ${\bf \Pi}$ with first row $\lbrack 0\enskip 1\enskip 0...\rbrack$, and the graph signals residing on $G$ can then be analyzed with respect to $\tilde{G}$. For an arbitrary graph $G$ featuring communities, we propose to perform graph partitioning (e.g. the normalized graph cut \cite{cut}) and compute the nearest circulant structures to the arising subgraphs $\{G_i\}_{i}$, so as to ultimately conduct wavelet analysis on the latter with respect to the partitioned subgraph signals $\{{\bf x}_i\}_i$. Fig. $5$ shows the resulting graph approximations and multiscale representation via the \textit{HGSWT} (at $k=1$, no reconnection) for a data-driven graph with weights \[w_{i,j}=e^{-\frac{d(x_i,x_j)^2}{\sigma^2}},\quad d(x,y)=|x-y|,\enskip i,j=0,...,N-1\]
with $\sigma$ as $10\%$ of the total range of $d(x,y)$, and random graph signal ${\bf x}$, prepared using \cite{toolbox}. The obtained representation is highly sparse as a consequence of the breadth-first traversal of the RCM algorithm, whereby ${\bf x}^P$ has reduced total variation $||{\bf x}^P||_{TV}$ 
\cite{spie}, and simultaneously ${\bf A}^P$ (and by extension $\tilde{{\bf A}}$) is of minimum bandwidth.\\
\begin{figure}[htb]
\centering
\begin{minipage}{3.5in}
	{\includegraphics[width=1.7in]{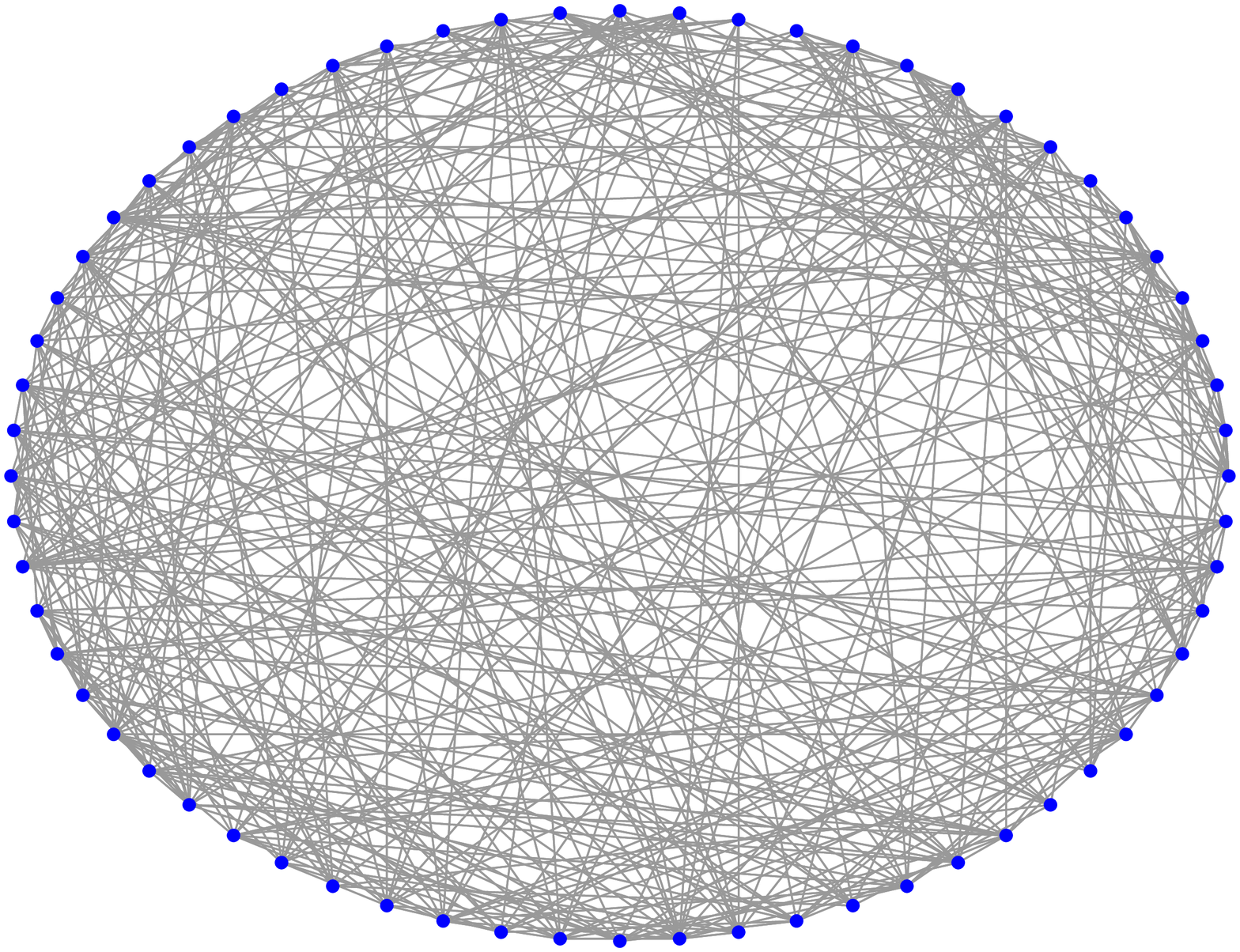}}%
	{\includegraphics[width=1.7in]{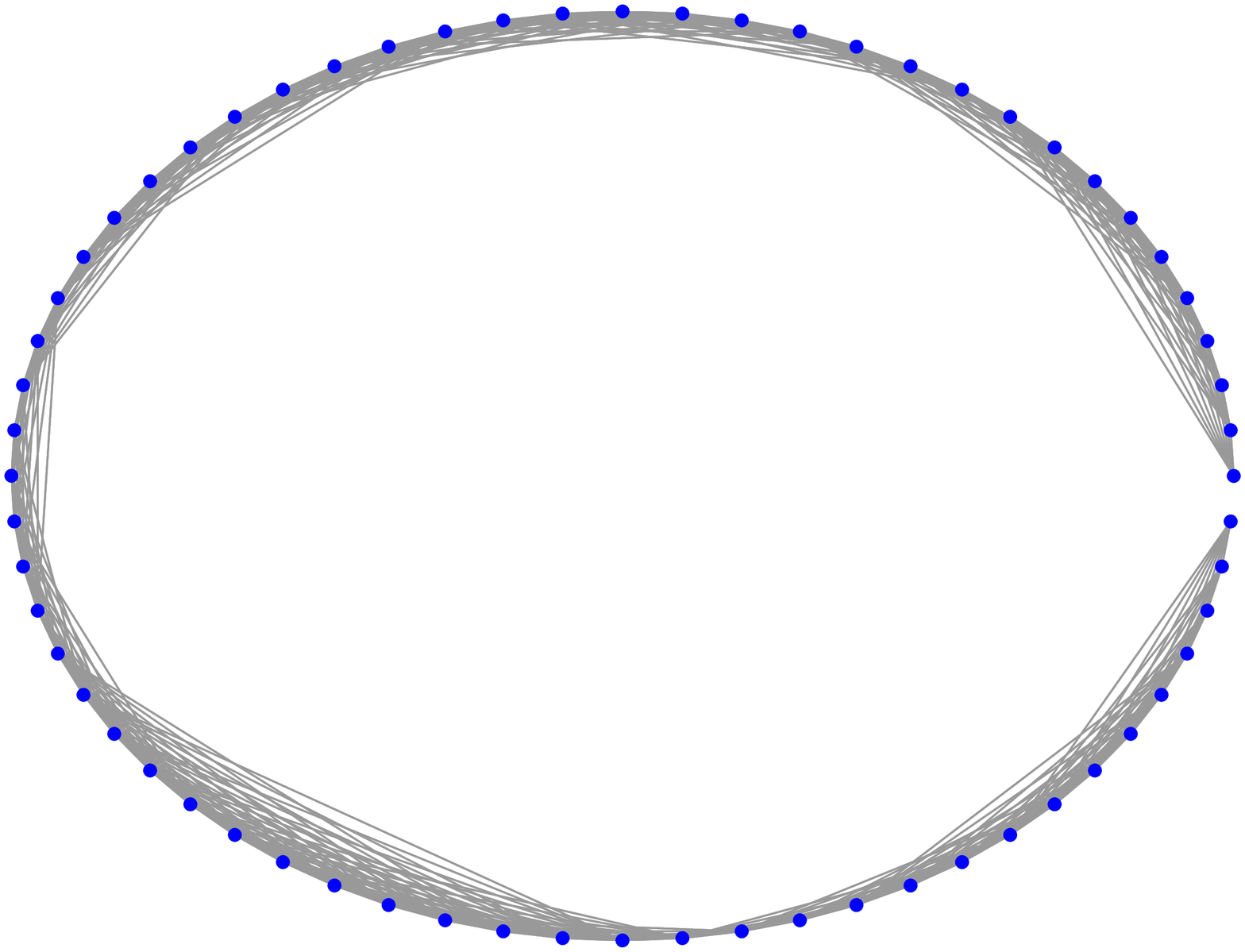}}\\
	\centering 
	{\includegraphics[width=2.2in]{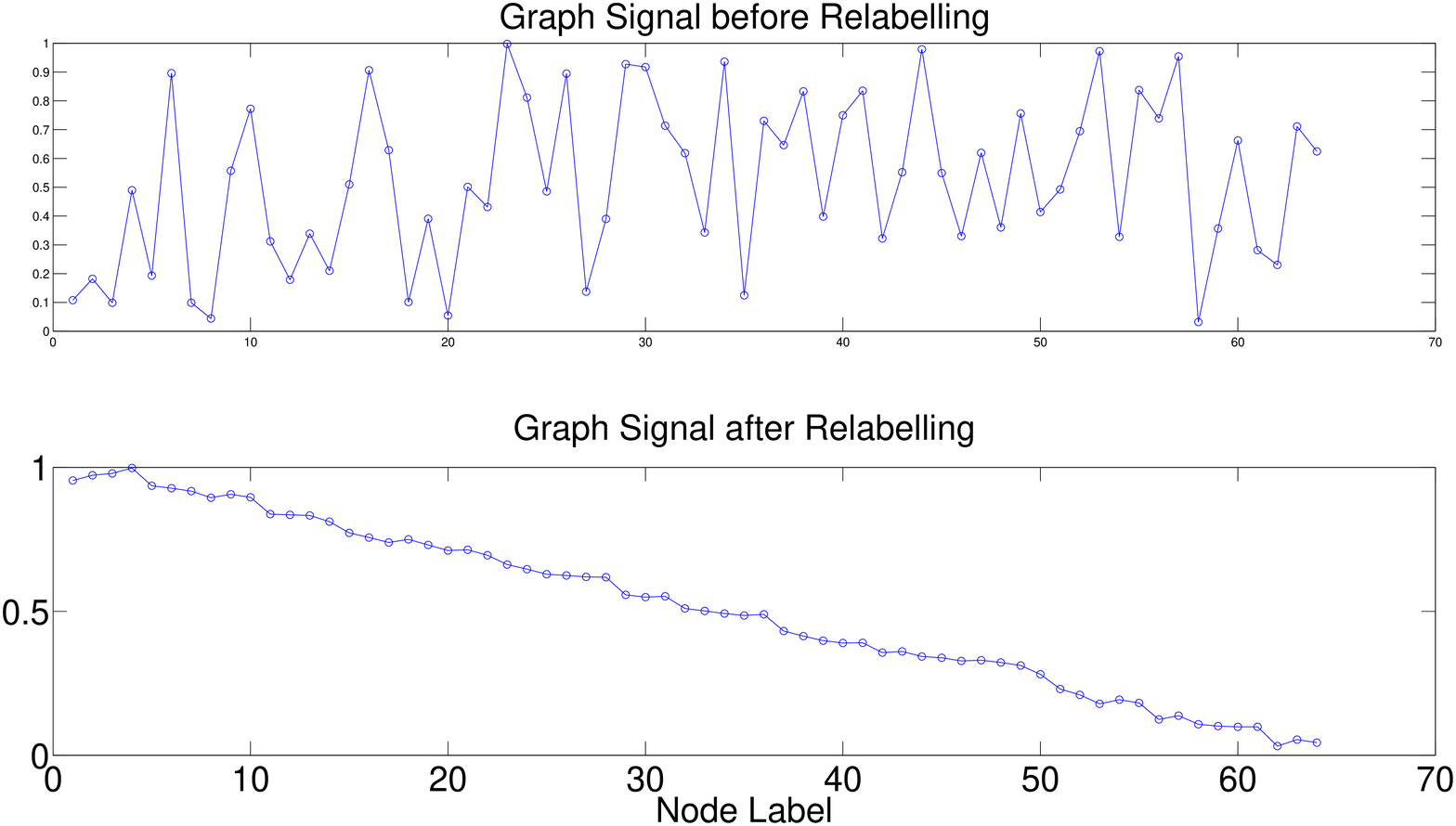}}
	\end{minipage}%
	\begin{minipage}{3.5in}
	{\includegraphics[width=3.1in]{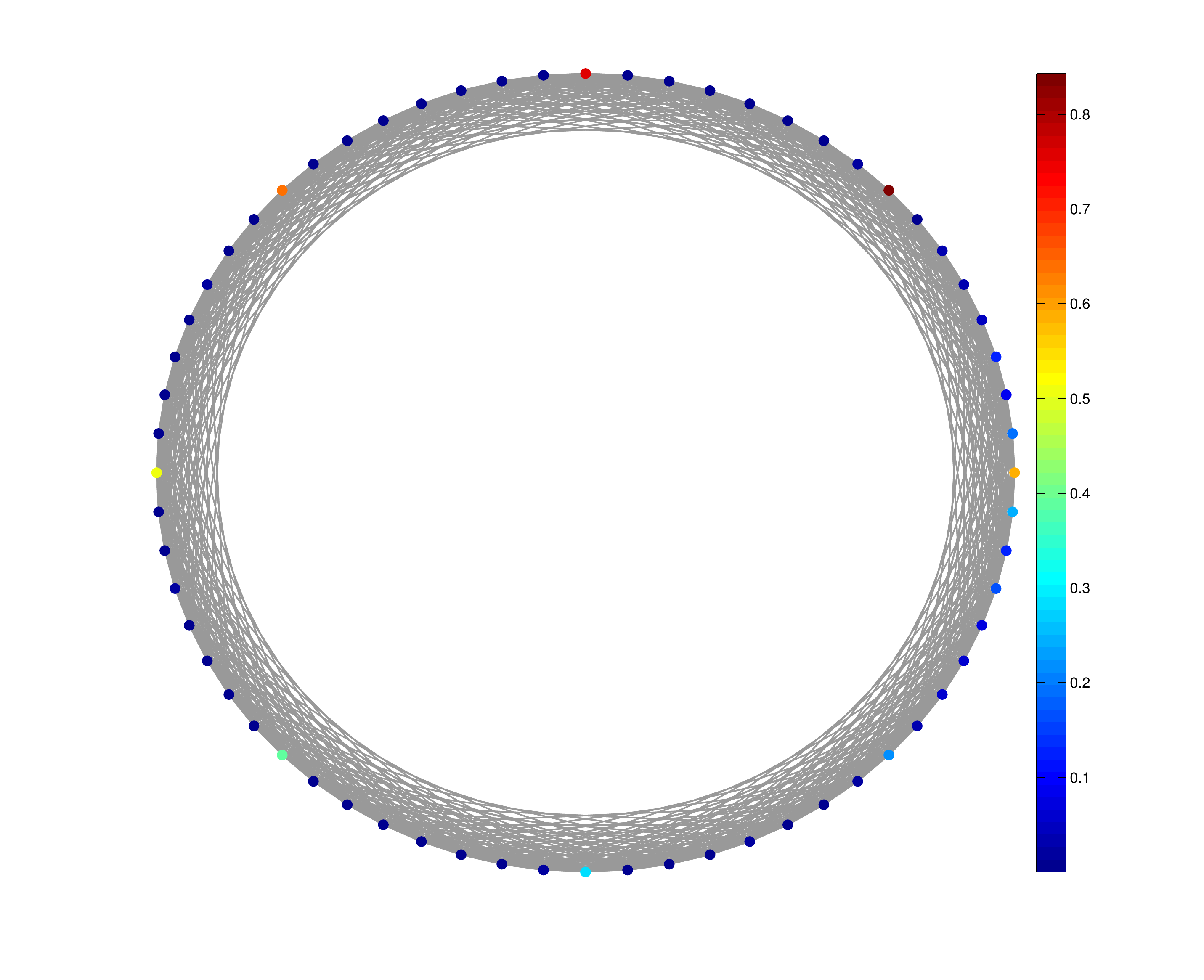}}
	
	\end{minipage}

		\caption{Original $G$ $(N=64)$ after thresholding of weights (top left), after RCM relabelling (top middle), multiscale \textit{HGSWT} representation (at $k=1$) of ${\bf x}$ (in magnitude) on $\tilde{G}$ for 3 levels (right), signal ${\bf x}$ before/after relabelling (bottom).}
\end{figure}
\\
As part of a more generalized motivation which facilitates the multi-(and lower-)dimensional processing and representation of signals on graphs, we wish to explore alternative approximation schemes for circulant graphs, and identify graph product approximations as a promising venue. In particular, given an arbitrary undirected graph, we consider its approximation as the graph product of circulant graphs.
\\
Graph products \cite{handbook} have been studied and applied in a variety of contexts for purposes of i.a. modelling realistic networks, and/or rendering matrix operations computionally efficient (\cite{kronapprox}, \cite{kronnetwork}, \cite{pits}). Their relevance for GSP was first discussed in \cite{bigdata} as a means of modelling and representation of complex data as graph signals defined on product graphs, with the potential of promoting a more efficient implementation of graph operations, such as graph filtering. Our motivation for considering graph products is twofold: 
$(1)$ we require a scheme which can decompose arbitrary graphs into circulant graphs, so as to facilitate the processing of graph signals with respect to the circulant approximations, 
and $(2)$ we wish to conduct operations in lower dimensional settings to increase efficiency. 
\subsection{Graph Products of Circulants}
The product $\diamond$ of two graphs $G_1=(V(G_1),E(G_1))$ and $G_2=(V(G_2),E(G_2))$, also referred to as factors, with respective adjacency matrices ${\bf A}_1\in\mathbb{R}^{N_1\times N_1}$ and ${\bf A}_2\in\mathbb{R}^{N_2\times N_2}$, is formed by letting the Cartesian product $V(G)=V(G_1)\times V(G_2)$ denote the new vertex set of the resulting graph $G$, and defining the new edge relations $E(G)$ according to the characteristic adjacency rules of the product operation, resulting in adjacency matrix ${\bf A}_{\diamond}\in\mathbb{R}^{N_1N_2\times N_1N_2}$. We identify four main graph products of interest:
\begin{itemize}
\item Kronecker product $G_1\otimes G_2$: ${\bf A}_{\otimes}={\bf A}_1\otimes {\bf A}_2$
\item Cartesian product $G_1\times G_2$: ${\bf A}_{\times}={\bf A}_1\times {\bf A}_2={\bf A}_1\otimes {\bf I}_{N_2} +{\bf I}_{N_1}\otimes {\bf A}_2$
\item Strong product $G_1\boxtimes G_2$: ${\bf A}_{\boxtimes}={\bf A}_1\boxtimes {\bf A}_2={\bf A}_1\otimes {\bf A}_2+{\bf A}_1\times {\bf A}_2$
\item Lexicographic product $G_1\lbrack G_2\rbrack$: ${\bf A}_{\lbrack \enskip \rbrack}={\bf A}_1\lbrack{\bf A}_2\rbrack={\bf A}_1\otimes {\bf J}_{N_2} +{\bf I}_{N_1}\otimes {\bf A}_2$
\end{itemize}
\noindent where ${\bf J}_{N_2}={\bf 1}_{N_2} {\bf 1}_{N_2}^T$. In particular, we note that the lexicographic product can be regarded as a variation of the Cartesian product, yet contrary to the others, it is not commutative for unlabelled graphs \cite{handbook}. In addition, the adjacency matrices ${\bf A}_{\diamond}$ for the first three products possess the same eigenbasis ${\bf V}={\bf V}_1\otimes {\bf V}_2$, for decompositions ${\bf A}_1={\bf V}_1{\bf \Gamma}_1{\bf V}_1^{H}$ and ${\bf A}_2={\bf V}_2{\bf \Gamma}_2{\bf V}_2^{H}$, and eigenvalues of the form ${\bf \Gamma}_{\diamond}={\bf \Gamma}_1\diamond {\bf \Gamma}_2$ (\cite{handbook}). \\
\\
Graph products have been employed to model realistic networks, due to their ability to capture present regularities such as patterns and recursive community growth \cite{kronnetwork}, and can thus provide suitable approximations to networks with inherent substructures, such as social networks consisting of similarly structured communities or time-evolving sensor networks \cite{bigdata}. For our ensuing analysis, we consider a scheme for arbitrary graphs which imposes the desired constraint of circularity on the individual factors \footnote{On a related note, \cite{kronapprox} identifies Kronecker product approximation as a means to facilitate efficiency for large structured least-squares problems in image restoration, which coincidentally marks an area where circulant approximations are used extensively.}.
\subsubsection{The Kronecker product approximation}
Given an arbitrary graph $G$ with adjacency matrix ${{\bf A}}$, we resort to a result from matrix theory \cite{loankron} which facilitates the approximate Kronecker product decomposition ${{\bf A}}\approx{\bf A}_1\otimes {\bf A}_2$ into circulant (adjacency) matrices ${\bf A}_i$ of suitably chosen dimension $N_i$, by solving the convex optimization problem 
\[\min_{{\bf C}_1^T vec({\bf A}_1)=0, {\bf C}_2^T vec({\bf A}_2)=0} ||{\bf A}-{\bf A}_1\otimes {\bf A}_2||_F\]
subject to linear constraints in the form of structured, rectangular matrices ${\bf C}_i$ with entries $\{0,1,-1\}$, which impose circularity on ${\bf A}_i$ via column-stacking operator $vec$. 
It can be shown that closed-form solutions $vec({\bf A}_i), i=1,2$ are obtained by solving a reduced unconstrained problem, after expressing the above as a rank-1 approximation problem (see \cite{loankron} for details).
In addition, we may also impose symmetry and bandedness (\cite{loankron}, \cite{pits}).\\
\subsubsection{Exact graph products}
Conversely, the general graph product of circulants gives rise to block-circulant structures (or sums thereof), and an example can be seen in Fig. $6$.
\begin{figure}[htb]\centering
\begin{minipage}{6in}
  \centering
  \raisebox{-0.5\height}{\includegraphics[height=1.2in]{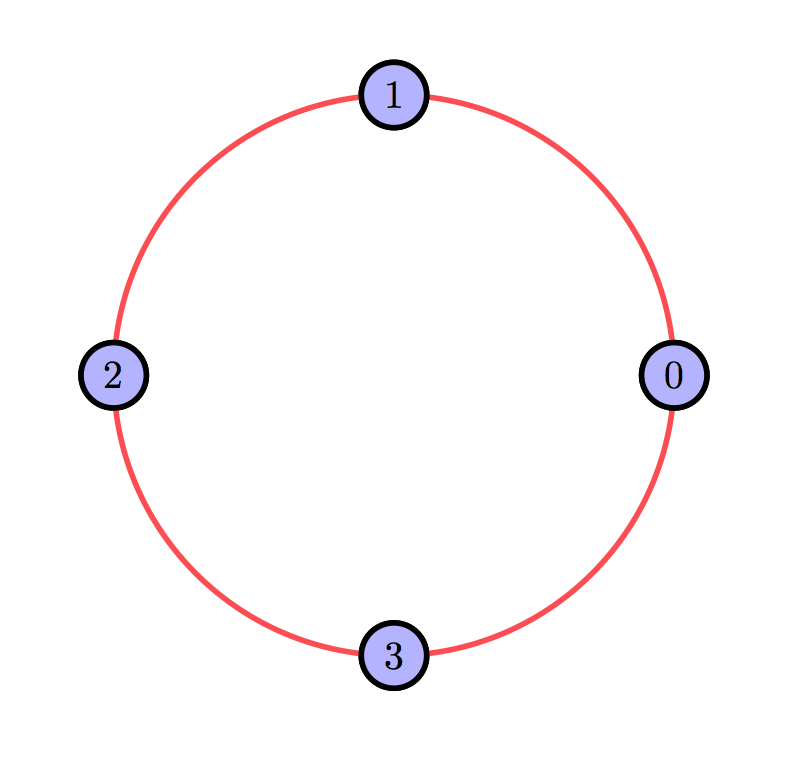}}
  \hspace*{.02in}
  ${\bf \times}$
  \hspace*{.06in}
 \raisebox{-0.4\height}{\includegraphics[height=1.2in]{circ2.png}}
 \hspace*{.03in}
 {\bf =}
 \hspace*{.03in}
  \raisebox{-0.5\height}{\includegraphics[height=2.3in]{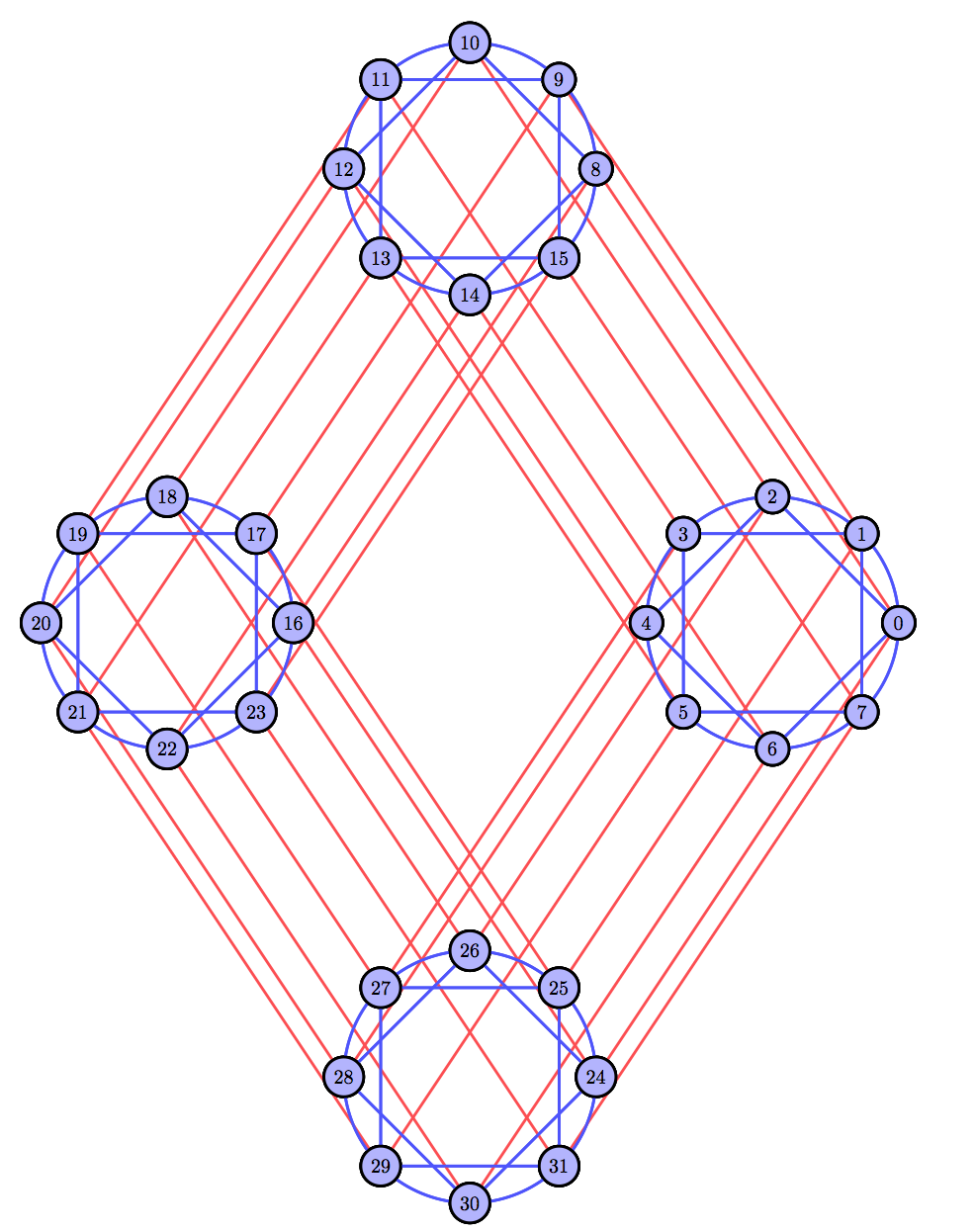}}
\end{minipage}
\caption{Graph Cartesian Product of two unweighted circulant graphs}
\label{fig:res}
\end{figure}
\\
There exists a subset of circulant graphs which can be represented as the graph products of circulant factors; while these cases are marginal, they similarly motivate decompositions for lower-dimensional processing.\\ 
Circulant graphs are not generally closed under graph products, with the exception of the lexicographic product \cite{lex}. In particular, for two circulant graphs $G_1=C_{N_1,S_1}$ and $G_2=C_{N_2,S_2}$ of respective dimensions $N_1$ and $N_2$ and with generating sets $S_1$ and $S_2$, the product $G_1\lbrack G_2\rbrack$ is isomorphic to the circulant graph $C_{N_1 N_2,S}$ with generating set $S=\left(\cup^{\lfloor\frac{N_2-1}{2}\rfloor}_{t=0} tN_1+S_1\right)\cup \left(\cup^{\lfloor\frac{N_2}{2}\rfloor}_{t=1} tN_1-S_1\right)\cup N_1 S_2$ \cite{lex}. Here, $C_{N_1N_2,S}$ is connected with $1\in S$ only if $G_1$ is connected with $1\in S_1$. The adjacency matrix ${\bf A}_{\lbrack\enskip\rbrack}$ is not circulant, but its isomorphism $\tilde{{\bf A}}_{\lbrack \enskip\rbrack}={\bf P}{\bf A}_{\lbrack\enskip \rbrack} {\bf P}^T$ is, where permutation matrix ${\bf P}$ performs the relabelling $\{0,...,N_1 N_2 -1\}\rightarrow \{0:N_2:N_1N_2-1, 1:N_2:N_1N_2-1,...,N_2-1:N_2:N_1N_2-1\}$ such that each product node $(g_{1,j},g_{2,k})\in V(G)$ is labelled as $g_{1,j}+N_1g_{2,k}$, for $g_{i,j}\in V(G_i)$.\\
Special cases of other graph products that are circulant with circulant factors are discussed in \cite{tensor}.

\subsection{Multi-dimensional Wavelet Analysis on Product Graphs}
In the following, we explore how the theory on circulant graph wavelet analysis can be extended to product graphs. Here, we operate under the assumption that the decomposition (and decomposition type) of an arbitrary graph into circulants is either known (exact or approximate), or unknown, in which case we can always resort to a Kronecker product approximation.
Before we can proceed, we need to define the graph Laplacian ${\bf L}_{\diamond}$ of product graphs as a relevant high-pass filter; its interpretation as an extension of the circulant graph Laplacian high-pass filter to higher dimensions, with associated property preservations, will be revisited in Sect 4.3. We note that the formation of ${\bf L}_{\diamond}$ is not a reflection of the adjacency matrix relations, except in the case of the Cartesian product (\cite{laplaceproduct1}, \cite{laplaceproduct2}):
\begin{itemize}
\item Kronecker product: ${\bf L}_{\otimes}={\bf D}_1\otimes {\bf D}_2-{\bf A}_1\otimes {\bf A}_2={\bf L}_1\otimes {\bf D}_2+{\bf D}_1\otimes {\bf L}_2-{\bf L}_1\otimes {\bf L}_2$
\item Cartesian product: ${\bf L}_{\times}={\bf D}_1\times {\bf D}_2-{\bf A}_1\times {\bf A}_2={\bf L}_1\otimes {\bf I}_{N_2} +{\bf I}_{N_1}\otimes {\bf L}_2$
\item Strong product: ${\bf L}_{\boxtimes}={\bf D}_1\boxtimes {\bf D}_2-{\bf A}_1\boxtimes {\bf A}_2={\bf L}_{\otimes}+{\bf L}_{\times}$
\item Lexicographic product: ${\bf L}_{\lbrack \enskip \rbrack}={\bf D}_1\lbrack{\bf D}_2\rbrack-{\bf A}_1\lbrack{\bf A}_2\rbrack={\bf I}_{N_1}\otimes {\bf L}_2+{\bf L}_1\otimes {\bf J}_{N_2}+{\bf D}_1\otimes (N_2 {\bf I}_{N_2}-{\bf J}_{N_2})$
\end{itemize}
For connected, regular graph factors $G_i$, equivalent relations between eigenbases ${\bf U}$ of ${\bf L}_{\diamond}={\bf U}{\bf \Lambda}_{\diamond}{\bf U}^H$ and ${\bf U}_i$ of ${\bf L}_i={\bf U}_i{\bf \Lambda}_i{\bf U}_i^{H}$ as for the adjacency matrices hold, with ${\bf U}={\bf U}_1\otimes {\bf U}_2$ and ${\bf \Lambda}_{\diamond}={\bf D}_{\diamond}- {\bf \Gamma}_{\diamond}$  \cite{laplaceproduct1}. These relations are further preserved for the Cartesian product when the $G_i$ are generic, while a nearer characterization for the remaining cases is still subject to investigation (\cite{laplaceproduct2}, \cite{laplaceproduct1}). The eigenvectors of ${\bf L}_{\lbrack\enskip\rbrack}$ are given by ${\bf U}_{\lbrack\enskip \rbrack}=\lbrack \{{\bf u}_{1,i}\otimes {\bf 1}_{N_2} \}_{i=1}^{N_1}|\{{\bf e}_i\otimes {\bf u}_{2,j}\}_{i=1,j=2}^{i=N_1,j=N_2}\rbrack$, when $G_1$ is connected \cite{laplaceproduct2}.\\
Therefore, under the first three products, the special case of two circulant graph factors, for which ${\bf U}={\bf V}$ can be represented as the 2D DFT matrix, reveals that each graph Laplacian eigenvector ${\bf u}_j$ of ${\bf L}_{\diamond}$ is the Kronecker product of the graph Laplacian eigenvectors of its factor graphs. This insight motivates the following graph signal definition for GSP:
\begin{defe}\label{def41}
Any graph signal ${\bf x}\in\mathbb{R}^N$, with $N=N_1 N_2$, can be decomposed as ${\bf x}=\sum_{s=1}^k{\bf x}_{s,1}\otimes {\bf x}_{s,2}=vec_r\{\sum_{s=1}^k{\bf x}_{s,1}{\bf x}_{s,2}^T\}$, where $vec_r\{\}$ indicates the row-stacking operation, or, equivalently, $\sum_{s=1}^k{\bf x}_{s,1}{\bf x}_{s,2}^T$ has rank $k$ with ${\bf x}_{s,i}\in\mathbb{R}^{N_i}$. For ${\bf x}$ residing on the vertices of an arbitrary undirected graph $G$, which admits the graph product decomposition of type $\diamond$, such that $G_{\diamond}=G_1 \diamond G_2$ and $|V(G_i)|=N_i$, we can redefine and process ${\bf x}$ as the graph signal tensors ${\bf x}_{s,i}$ on $G_i$.
\end{defe}
While ${\bf x}$ does not generally lie in the graph Laplacian eigenspace of the underlying graph, and alternative decompositions are possible, we adopt the above perspective as a promising interpretation of component-wise processing of graph signals defined on product graphs. We inspect the case ${\bf x}={\bf x}_1\otimes {\bf x}_2$ (for rank $k=1$) more closely in Sect. $4.3$, as it facilitates concrete claims on the smoothness and sparsity relations between a signal and its tensor components on a graph.

\subsubsection{Separable vs Non-separable Wavelet Analysis}
Let a graph signal ${\bf x}$ reside on the vertices of an arbitrary, undirected graph $G$ with graph product decomposition $G_{\diamond}=G_1\diamond G_2$, which can be exact or approximate, such that the factors $G_i$ are circulant with adjacency matrices ${\bf A}_i\in\mathbb{R}^{N_i\times N_i},i=1,2$ and connected with $s=1\in S_i,\enskip i=1,2$. We propose a non-separable and a separable wavelet transform on $G_{\diamond}$: the former operates on the product graph directly, while the latter operates on each factor graph independently, thereby omitting the inter-connections between the two factors arising through the graph product operation.\\
\\
We define the \textit{non-separable} graph wavelet transform on $G_{\diamond}$ with (symmetric) adjacency matrix ${\bf A}_{\diamond}$ as 
\[{\bf H}_{\diamond}=\frac{1}{2}({\bf I}_N+{\bf K})\prod_{n=1}^T \frac{1}{2^{k}}\left(\beta_{\diamond,n}{\bf I}_N +\frac{{\bf A}_{\diamond}}{d}\right)^k+\frac{1}{2}({\bf I}_N-{\bf K})\prod_{n=1}^T\frac{1}{2^{k}}\left(\beta_{\diamond,n}{\bf I}_N -\frac{{\bf A}_{\diamond}}{d}\right)^k,\enskip k\in\mathbb{N}.\]
If $\beta_{\diamond,n}=1,\forall n$, this is verifiably invertible for any downsampling pattern ${\bf K}$ as long as at least one low-pass component is retained, as a generalization of the circulant \textit{HGSWT} construction in Thm \ref{thm31}. Here, the fundamental properties which ensure this extension are that $G_{\diamond}$ is undirected, regular and connected \cite{brouwer12}, i.e. the spectrum of $\frac{{\bf A}_{\diamond}}{d}$ is such that $|\gamma_{\diamond, i}|\leq \gamma_{\diamond, max}=1$, with $\frac{{\bf A}_{\diamond}}{d}{\bf 1}_{N_1 N_2}={\bf 1}_{N_1 N_2}$ and $\gamma_{\diamond, max}$ of multiplicity $1$. It is known that $G_{\diamond}$ is connected under the Cartesian product for connected $G_i$ and under the Kronecker product, if in addition at least one $G_i$ is non-bipartite \cite{prod2}. 

Wavelet constructions with exponential degree parameters of the type of Thm. \ref{thm32} can be similarly extended to the first three product graph types under equivalent restrictions for parameters $|\beta_{\diamond, n}|\neq|\gamma_{\diamond, i}|$, i.e. conditions $(i)$-$(ii)$ of Thm. \ref{thm32} apply to the above. In particular, when $\beta_{\diamond, n}$ is of the form of an eigenvalue $\gamma_{\diamond, i}$ of $\frac{{\bf A}_{\diamond}}{d}$, invertibility of the transform is similarly conditional on the downsampling pattern and linear independence of the associated sampled eigenvectors.
Here, the multi-dimensional e-degree matrix $\beta_{\diamond, n}{\bf I}_N=\frac{\tilde{{\bf D}}_{\alpha_1} \diamond \tilde{{\bf D}}_{\alpha_2}}{d}=\frac{(\tilde{d}_{\alpha_1}{\bf I}_{N_1})\diamond (\tilde{d}_{\alpha_2}{\bf I}_{N_2})}{d}$ of the graph filters (transforms) may be tailored to the analysis of multi-dimensional smooth signals residing on the vertices of the graph product $G_{\diamond}=G_1\diamond G_2$, whose circulant (e-graph Laplacian) factors are respectively parameterized by e-degrees (or eigenvalues) $\tilde{d}_{\alpha_1}$ and $\tilde{d}_{\alpha_2}$; the relations between the graph Laplacian eigenvalues on each factor are specifically derived in the following Sect. $4.3$ as measures of signal smoothness on product graphs. 
As such, the non-separable GWT on a product graph of circulants constitutes a generalization of the \textit{HGESWT} to higher dimensional graphs, which incorporates the parameterization desirable for the analysis of the signal partitions (tensor factors ${\bf x}_i$ residing on $G_i$) of the multi-dimensional signal ${\bf x}$.

While the graph product $G_{\diamond}$ of circulants is not LSI as such, it is invariant with respect to shifts on its factors, i.e. matrix ${\bf P}_{\otimes}={\bf P}_{N_1}\otimes{\bf P}_{N_2}$, for circulant permutation matrices ${\bf P}_{N_i}$, commutes with filters on ${\bf A}_{\diamond}$. Therefore, we can conduct multiresolution analysis with respect to product graphs by performing downsampling and graph coarsening operations on the individual factors, with one level corresponding to operating on either $G_i$. 
For instance, given the product graph $G_{\diamond}$ of Fig. $6$, one may choose to downsample by $2$ on factor $G_2$ with respect to $s=1\in S_2$ and define the associated downsampling matrix ${\bf K}_2$; this creates the downsampling pattern ${\bf K}={\bf I}_{N_1}\otimes {\bf K}_2$ on $G_{\diamond}$, which skips every other node within each block of $G_2$ in $G_{\diamond}$, and one may subsequently redefine the sampled low-pass output on $G_1\diamond\tilde{G}_2$, where $\tilde{G}_2$ corresponds to the coarsened version of $G_2$ (see Fig. $7$).
\begin{figure}[htb]\centering
\begin{minipage}{6in}
  \centering
   \raisebox{-0.5\height}{\includegraphics[height=1.8in]{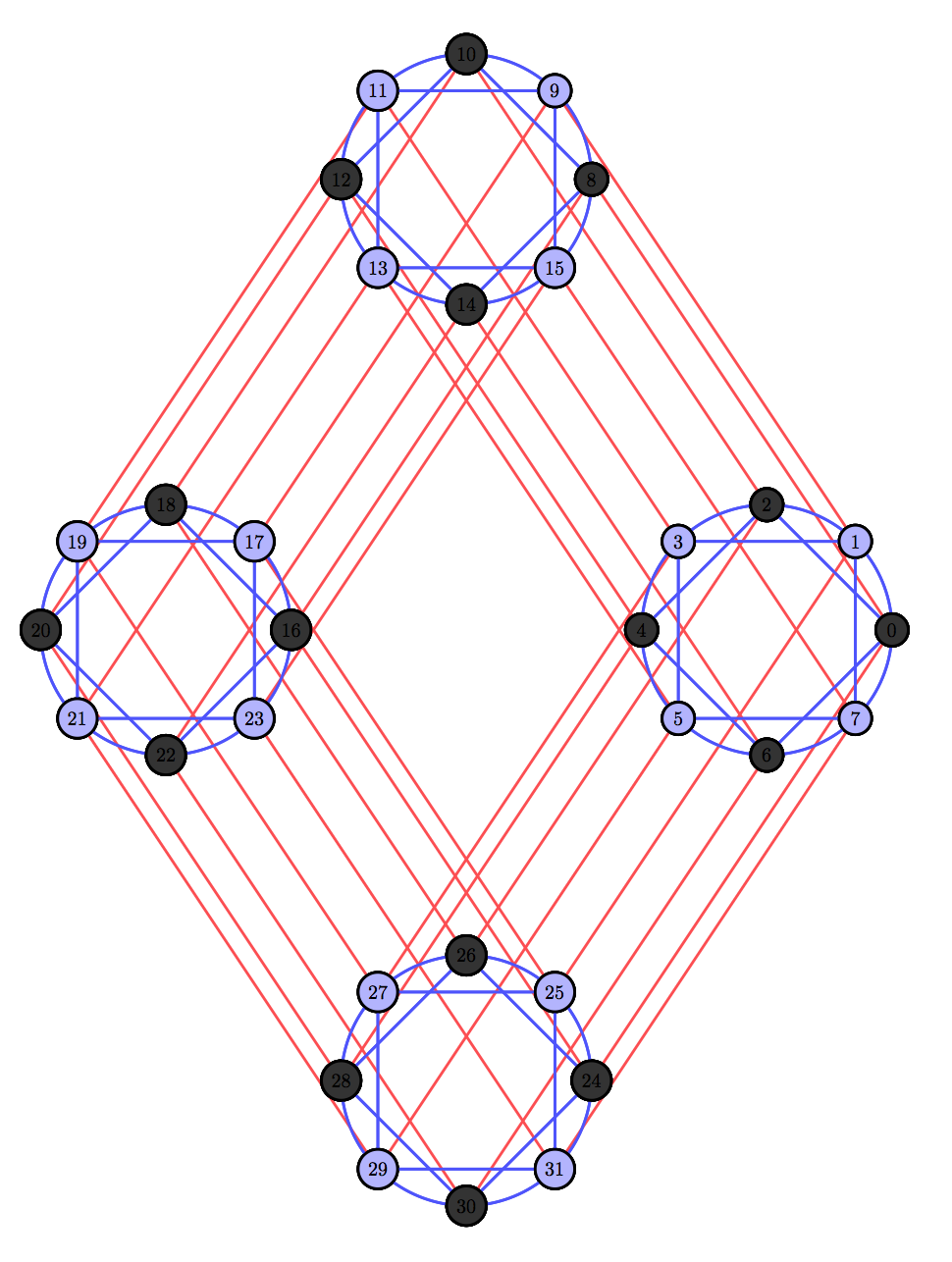}}
   \hspace*{.03in}
 ${\bf \rightarrow}$
 \hspace*{.03in}
  \raisebox{-0.5\height}{\includegraphics[height=1.8in]{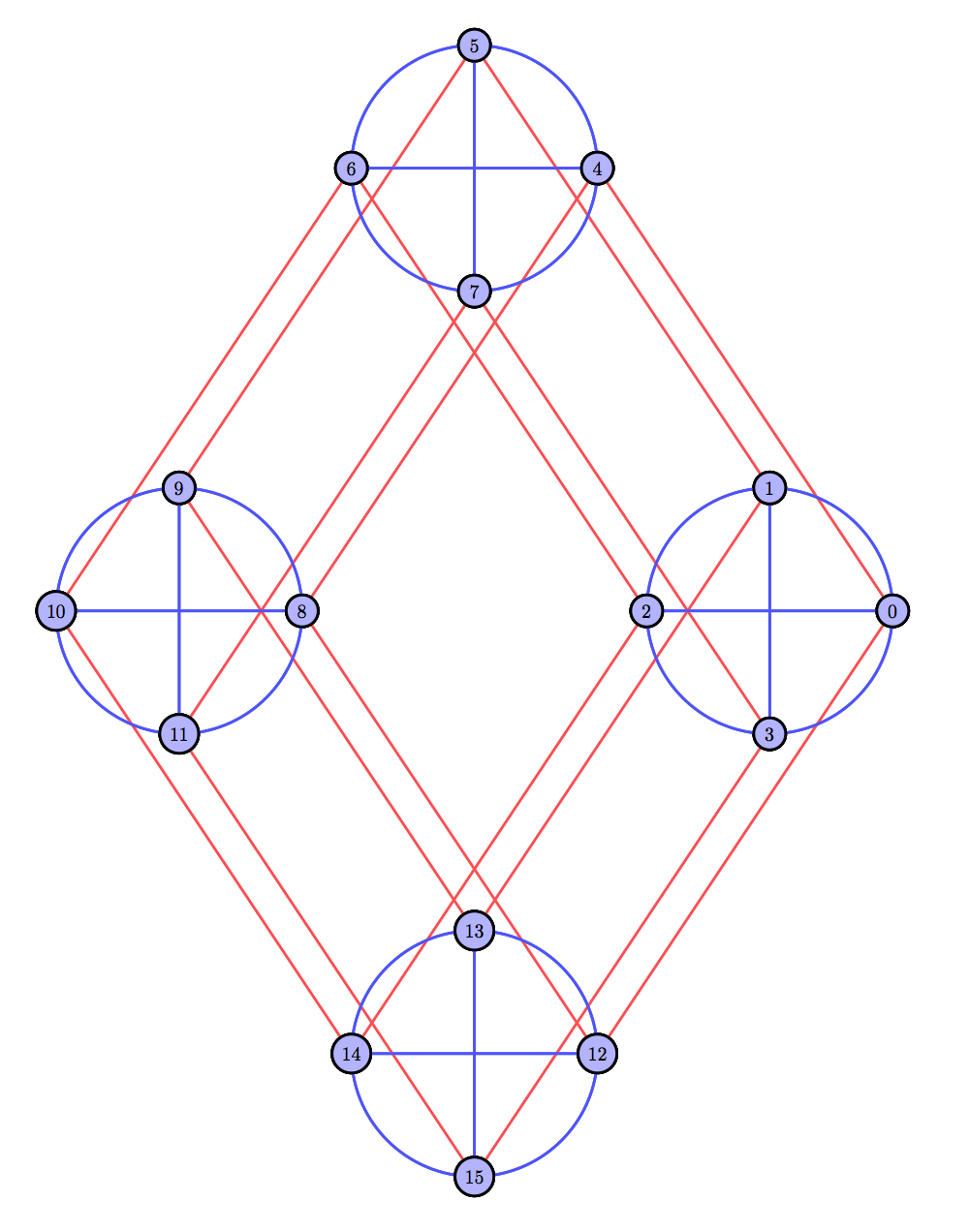}}
 \hspace*{.03in}
 {\bf =}
 \hspace*{.03in}
  \raisebox{-0.5\height}{\includegraphics[height=0.7in]{circ1.png}}
  \hspace*{.02in}
  ${\bf \times}$
  \hspace*{.02in}
 \raisebox{-0.4\height}{\includegraphics[height=0.7in]{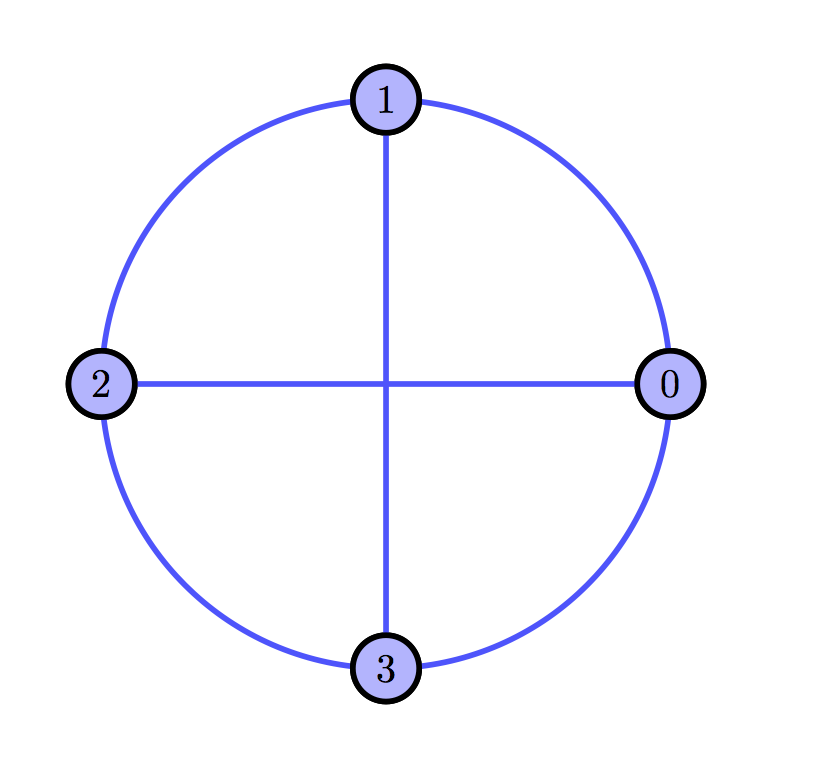}}
 \end{minipage}
\caption{Graph Downsampling and Coarsening of $G_{\diamond}$ in Fig $6$ on $G_2$ w.r.t. $s=1\in S_2$ with coarsened $\tilde{G}_2$.}
\label{fig:res}
\end{figure}
\\
Furthermore, we propose the \textit{separable} graph wavelet transform on $G_{\diamond}$ as an alternative construction, which is applied with respect to the individual graph factors. We denote by ${\bf W}_i$ the graph (e-)spline wavelet transform constructed in the vertex domain of circulant graph factor $G_i$ \[{\bf W}_i=\begin{bmatrix} {\bf \Psi}_{\downarrow 2}{\bf H}_{LP_{\vec{\alpha}}}\\{\bf \Phi}_{\downarrow 2}{\bf H}_{HP_{\vec{\alpha}}}\end{bmatrix},\] with downsampling matrices ${\bf \Psi}_{\downarrow 2}, {\bf \Phi}_{\downarrow 2}$ and ${\bf H}_{LP_{\vec{\alpha}}/HP_{\vec{\alpha}}}$ as defined in Sect. $3$. Then $({\bf W}_{1}\otimes {\bf W}_{2})$ represents the separable transform to process ${\bf x}$ with respect to $G_1$ and $G_2$, which entails the analysis of $N_1$ graph signal partitions $\{x((0:N_2-1)+(t-1)*N_2)\}_{t=1}^{N_1}$ on $G_2$, and subsequent $N_2$ partitions $\{w(t:N_2:N_1N_2-1)\}_{t=0}^{N_2-1}$ on $G_1$, with ${\bf w}=({\bf I}_{N_1}\otimes {\bf W}_2){\bf x}$. For partition $\bar{{\bf x}}_{i}\in\mathbb{R}^{N_i}$ on $G_i$, let $\bar{{\bf w}}_i={\bf P}_{N_i} {\bf W}_i \bar{{\bf x}}_i$ be the graph wavelet domain representation on the same graph, subject to the (node relabelling) permutation ${\bf P}_{N_i}$; here, the respective low-and high-pass values of $\bar{{\bf w}}_i$ can be subsequently assigned to suitable coarsened versions of $G_i$. Further, the recombination $\bar{{\bf w}}=\bar{{\bf w}}_1\otimes \bar{{\bf w}}_2$ gives rise to the graph signal $\bar{{\bf w}}$ on $G$.  
For a multiscale representation, the proposed scheme can be generalized to accommodate iterations on the low-pass branch, by defining the multilevel transform \[{\bf W}_i^{(j)}=\begin{bmatrix} {\bf W}_i^{j}&\\& {\bf I}_{N_i-\frac{N_i}{2^{j}}}\end{bmatrix}\dots {\bf W}_i^0\] and iterative permutation matrix \[{\bf P}_{N_i}^{(j)}={\bf P}_i^0\dots \begin{bmatrix} {\bf P}_i^{j}&\\& {\bf I}_{N_i-\frac{N_i}{2^{j}}}\end{bmatrix}\] at levels $j\leq J-1$,
giving \[{\bf w}=({\bf P}^{(J-1)}_{N_1}\otimes {\bf P}^{(J-1)}_{N_2})({\bf W}^{(J-1)}_{1}\otimes {\bf W}^{(J-1)}_{2}){\bf x}={\bf P}^{(J-1)}_{N_1 N_2}({\bf W}^{(J-1)}_{1}\otimes {\bf W}^{(J-1)}_{2}){\bf x},\] where $(\tilde{{\bf W}}_{1}\otimes \tilde{{\bf W}}_{2})=({\bf W}^{(j)}_{1}\otimes {\bf W}^{(j)}_{2})$ represents the introduced graph product transform at level $j$. The transform is invertible with inverse $(\tilde{{\bf W}}_1^{-1}\otimes \tilde{{\bf W}}^{-1}_2)$, depending on the invertibility of its circulant sub-wavelet transforms ${\bf W}^j_i$. \\
Traditionally, the application of a 2D discrete wavelet transform on an (image) matrix ${\bf X}$ can be expressed as ${\bf w}_{2D}= {\bf W}_1 {\bf X}{\bf W}_2^T$, whose row-vectorized form is given by $({\bf W}_1\otimes{\bf W}_2){\bf x}$. It becomes evident that this constitutes an analogy to one level of the proposed transform, where in the traditional domain we have ${\bf W}_2={\bf W}_1$, while in the graph domain the ${\bf W}_i$'s generally differ, as they are not defined on the same graph. This elucidates that our derived scheme can be regarded as the equivalent of operating on a graph signal (or vectorized image) with respect to confined direction (rows and columns), which are in this case dictated by the factors in the chosen graph decomposition. In contrast to a graph wavelet analysis of ${\bf x}$ on $G_{\diamond}$ via a suitable transform, we can therefore regard the analysis of partitions $\bar{{\bf x}}_i$ on $G_i$ and subsequent reassignment to the vertices of $G_{\diamond}$ as a two-dimensional extension of the former. 

\subsection{Smoothness and Sparsity on Product Graphs}
For the remainder of this discussion, following Def. \ref{def41} we are primarily interested in the analysis of multi-dimensional graph signals which admit the rank-1 decomposition ${\bf x}={\bf x}_1\otimes {\bf x}_2$ into smooth signal tensors ${\bf x}_i$ for a maximally sparse representation.\\
We begin by investigating how the smoothness of graph signal ${\bf x}$ with respect to $G_{\diamond}$ is related to the smoothness of the subgraph signals ${\bf x}_i$ with respect to $G_i$; here, we resort to the graph Laplacian quadratic form $S_2({\bf x})={\bf x}^T{\bf L}{\bf x}$ as a measure \cite{shu}. We denote the individual smoothness measures with $S_{\diamond}={\bf x}^T {\bf L}_{\diamond} {\bf x}$ for $G_{\diamond}$, and $S_i={\bf x}_i^T {\bf L}_i {\bf x}_i$ for factors $G_i$, and assuming degree regularity, simplify ${\bf D}_i=d_i{\bf I}_{N_i}$, which gives rise to the following relations:
\begin{itemize}
\item $S_{\otimes}=d_2 S_1 ||{\bf x}_2||_2^2+d_1 S_2 ||{\bf x}_1||_2^2-S_1 S_2$
\item $S_{\times}=S_1 ||{\bf x}_2||_2^2+S_2 ||{\bf x}_1||_2^2$
\item $S_{\boxtimes}=(1+d_2) S_1 ||{\bf x}_2||_2^2+(1+d_1) S_2 ||{\bf x}_1||_2^2-S_1 S_2$
\item $S_{\lbrack \enskip\rbrack}=||{\bf x}_1||_2^2 S_2+S_1 c_2^2+d_1||{\bf x}_1||_2^2({N_2} ||{\bf x}_2||_2^2-c_2^2)$, \end{itemize} with constant $c_2=\sum_{i=0}^{N_2-1} x_{2}(i)$. 
The total smoothness $S_{\diamond}$ is composed of the weighted sub-measures $S_i$, whose individual contribution is scaled by parameters pertaining to the corresponding subgraph signal energy and node degree of the opposing factor graph. One easily deduces that if for a chosen decomposition $G_{\diamond}=G_1\diamond G_2$ with factors $G_i$, the measures $S_i$ are small, i.e. the sub-signal tensors ${\bf x}_i$ are smooth with respect to factors $G_i$, then ${\bf x}$ is also relatively smooth on $G_{\diamond}$ with small $S_{\diamond}$, subject to a scaling. 
When ${\bf x}$ is a properly normalized eigenvector of ${\bf L}_{\diamond}$, the derived smoothness measures $S_{\diamond}$ constitute the corresponding eigenvalues, expressed in terms of the eigenvalues $S_i$ of factors ${\bf L}_i$. The above relations continue to hold for the symmetric normalized graph Laplacian matrices of non-regular graphs.\\
Furthermore, we consider the signal ${\bf L}_{\diamond}{\bf x}$ and analyze its sparsity $||{\bf L}_{\diamond}{\bf x}||_0$, following the interpretation of ${\bf L}_{\diamond}$ as a high-pass filter within the non-separable GWT of Sect. $4.2.1$:
\begin{itemize}
\item ${\bf L}_{\otimes}{\bf x}=({\bf L}_1{\bf x}_1)\otimes d_2 {\bf x}_2+d_1 {\bf x}_1\otimes ({\bf L}_2{\bf x}_2)-({\bf L}_1{\bf x}_1)\otimes ({\bf L}_2{\bf x}_2)$
\item ${\bf L}_{\times}{\bf x}=({\bf L}_1{\bf x}_1)\otimes {\bf x}_2+{\bf x}_1 \otimes({\bf L}_2{\bf x}_2)$
\item ${\bf L}_{\boxtimes}{\bf x}=({\bf L}_1{\bf x}_1)\otimes d_2 {\bf x}_2+d_1 {\bf x}_1\otimes ({\bf L}_2{\bf x}_2)-({\bf L}_1{\bf x}_1)\otimes ({\bf L}_2{\bf x}_2)+({\bf L}_1{\bf x}_1)\otimes {\bf x}_2+{\bf x}_1 \otimes({\bf L}_2{\bf x}_2)$
 \item ${\bf L}_{\lbrack \enskip \rbrack}{\bf x}={\bf x}_1\otimes ({\bf L}_2{\bf x}_2)+({\bf L}_1{\bf x}_1)\otimes c_2 {\bf 1}_{N_2}+d_1 {\bf x}_1\otimes ({N_2}{\bf x}_2-c_2 {\bf 1}_{N_2})$
\end{itemize}
It becomes evident that for constant ${\bf x}_i$ (and hence ${\bf x}$) such that ${\bf L}_i {\bf x}_i={\bf 0}_{N_i}$, we have ${\bf L}_{\diamond}{\bf x}=0$, preserving the nullspace. When ${\bf x}_i$ are linear polynomials and $G_i$ banded circulant graphs, ${\bf L}_i{\bf x}_i$ are sparse, which is not necessarily true for ${\bf L}_{\diamond}{\bf x}$ under any product operation.\\
Replacing ${\bf D}_i$ by diagonal e-degree matrix $\tilde{{\bf D}}_{i,\alpha_k}$ for exponential parameter $\alpha_k$, gives rise to equivalent relations for $\tilde{{\bf L}}_{i,\alpha_k}$ on the graph factors.
Hence, for periodic complex exponential graph signals ${\bf x}_i$ parameterised by $\alpha_k=\frac{2\pi k}{N_i},\enskip k\in\lbrack 0\enskip N_i-1\rbrack$ with resulting multi-dimensional complex exponential ${\bf x}$, we obtain ${\bf L}_{\alpha_{k1}\diamond\alpha_{k2}}{\bf x}=0$, with exception of the lexicographic product, for which this holds only if ${\bf x}_2$ is an all-constant vector, as previously evidenced by its eigenspace property.\\
Overall, this demonstrates that the vanishing moment properties of circulant (e-)graph Laplacians are to some extent preserved within the graph product, yet sparsity is reduced as a result of the newly arising interconnections between the factors. Due to the fact that the above relations cannot be generalized to powers of the graph Laplacian matrix, comparable property preservations are not extended to higher order (exponential) polynomial graph signals, which have a sparse representation with respect to ${\bf L}_i^k$. This suggests that by performing a separable signal (wavelet) analysis with respect to inherent circulant substructures, we generally gain a sparser representation as measured via the graph Laplacian and its powers; nevertheless, for signals that lie in the eigenspace of the given graph product, the annihilation property is also preserved at higher operator powers. \\
\\
In light of this, we compare the sparsity of representation attained via the proposed graph wavelet transforms, and discover that the separable approach, apart from preserving higher-order annihilation properties, can induce more sparsity. Here, it should be noted that we consider two levels of the 1-D graph wavelet transform (or alternatively, downsampling on both factors, with respect to ${\bf K}={\bf I}_{{N_1}}\otimes {\bf K}_2$ followed by ${\bf K}={\bf K}_{1}\otimes {\bf I}_{N_2/2}$), as comparable to one level of the 2-D transform, yet similarly as in the traditional domain, there is no direct equivalence between the two. \\
\\
{\bf Example 1:} Given graph signal ${\bf x}={\bf x}_1\otimes {\bf x}_2$ on $G=G_1\otimes G_2$, where $G_i$ are circulant and banded of bandwidth $M_i$ and ${\bf x}_i\in\mathbb{R}^{N_i}$ are linear polynomial, let ${\bf W}_i\in\mathbb{R}^{N_i\times N_i}$ and ${\bf W}_{\otimes}\in\mathbb{R}^{N_1 N_2\times N_1 N_2}$ represent first-order graph-spline wavelet transforms on factors $G_i$  and $G$ respectively. Here, we downsample w.r.t. $s=1\in S_i$ on each $G_i$ and reconnect nodes such that generating sets $S_i$ are preserved. Hence, separable representation ${\bf w}={\bf w}_1\otimes {\bf w}_2$ has $K=\frac{3}{4}N_1 N_2 -\frac{1}{2}(2M_1 M_2+M_1 N_2+M_2 N_1)$ zero entries, whereas non-separable ${\bf w}_{\otimes}={\bf W}_{\otimes}{\bf x}$ has a total of $K_{\otimes}=K_{\otimes,1}+K_{\otimes,2}$ zeros, with $K_{\otimes, 1}=\frac{1}{2} N_1 N_2 -(M_1 N_2+M_2 N_1-2 M_2 M_1)$ and $K_{\otimes,2}=\frac{1}{4} N_1 N_2-(\frac{3}{2}N_1 M_2+M_1 N_2-6 M_1 M_2)$ zeros at levels $1$ and $2$ respectively. We have $K>K_{\otimes}$ for $4M_i<N_i$, and $K>K_{\otimes}=K_{\otimes, 1}$ with $K_{\otimes, 2}=0$ for $N_i/4\leq M_i< N_i/2$, which implies that the separable approach induces a sparser representation at any bandwidth $M_i<N_i/2$. Note that the sparsity of ${\bf w}_{\diamond}$ is the same under any of the first three graph products.
\\
\\
{\bf Example 2:} For circulant lexicographic product graph $G=C_{N_1N_2,S}$ and decomposition $G=G_1\lbrack G_2\rbrack$, it can be deduced from the above relations, that we can gain sparsity by conducting the 2-D graph wavelet analysis of ${\bf x}_i$ on $G_i$ as opposed to the 1-D analysis of ${\bf x}$ on $G$, for any choice of compressible ${\bf x}_i$ as long as they do not lie in the eigenspace of the graph Laplacian. Here, the product-related relabelling ${\bf P}{\bf x}={\bf x}_2\otimes {\bf x}_1$, which renders a circulant matrix $\tilde{{\bf A}}_{\lbrack \enskip\rbrack}={\bf P}{\bf A}_{\lbrack\enskip \rbrack} {\bf P}^T$, corresponds to a simple stacking of columns instead of rows, thus preserving the tensor product, and associated smoothness properties with respect to the subgraphs.

\section{Illustrative Examples}
In an effort to further exemplify the proposed graph wavelet transforms from Sects. $3$ and $4$, we study their non-linear approximation (NLA) potential for (piecewise) smooth graph signals in two concrete cases of respectively an artificial and a data-driven graph setting.\\ For a given graph signal ${\bf x}\in\mathbb{R}^N$ with graph wavelet representation ${\bf w}\in\mathbb{R}^N$, the
corresponding non-linear approximation within the inverse graph wavelet basis with columns $\tilde{{\bf w}}_k$ is defined as $\tilde{{\bf x}}=\sum_{k\in I_K} w(k) \tilde{{\bf w}}_k$, where $I_K$ denotes the index-set of the $K$-largest magnitude coefficients of ${\bf w}$. \\
\\
Fig. \ref{fig:esplinedouble} illustrates the non-linear approximation performance for a sinusoidal graph signal ${\bf x}$ residing on the vertices of the circulant graph with generating set $S=\{1,2\}$ of cardinality $N=|V|=1024$, by comparing the \textit{HGESWT} (of Thm. \ref{thm32}) and complementary construction \textit{HCGESWT} (of Thm. \ref{thm34}), where the reconstruction error is measured as $SNR=10\log_{10}\frac{||{\bf x}||_2^2}{||\tilde{{\bf x}}-{\bf x}||_2^2}$. In particular, the transforms feature the same analysis high-pass filter, suitably parameterized by $\{\alpha_i\}_{i=1}^2$ to annihilate ${\bf x}$, and associated variable analysis low-pass filter, where the \textit{HCGESWT} is presented in two variations with either dual ($4.4$) or unilateral ($4.0$, on the analysis side) exponential vanishing moments. Further, the graph wavelet atoms (rows) are normalized to unit length and $5$ levels of decomposition considered, where graph coarsening reconnection is conducted by retaining the same generating set $S$. It becomes evident that perfect reconstruction can be attained at a relatively low number of retained wavelet coefficients as a result of the transform annihilation properties. The resulting basis functions of the analysis low-pass filters are further depicted in Fig. \ref{fig:esplinedouble}.
\\
\begin{figure}[htb]
\centering

	{\includegraphics[width=2.9in]{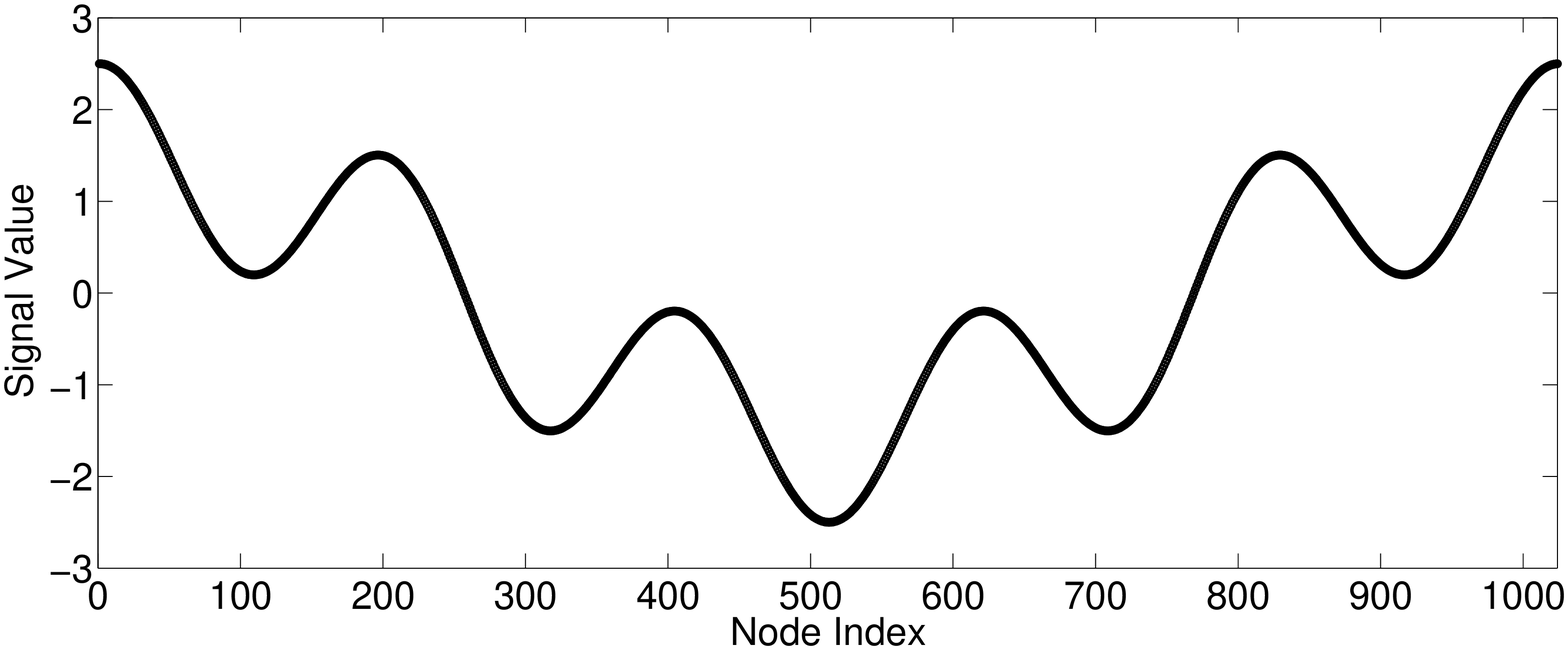}}
	{\includegraphics[width=2.9in]{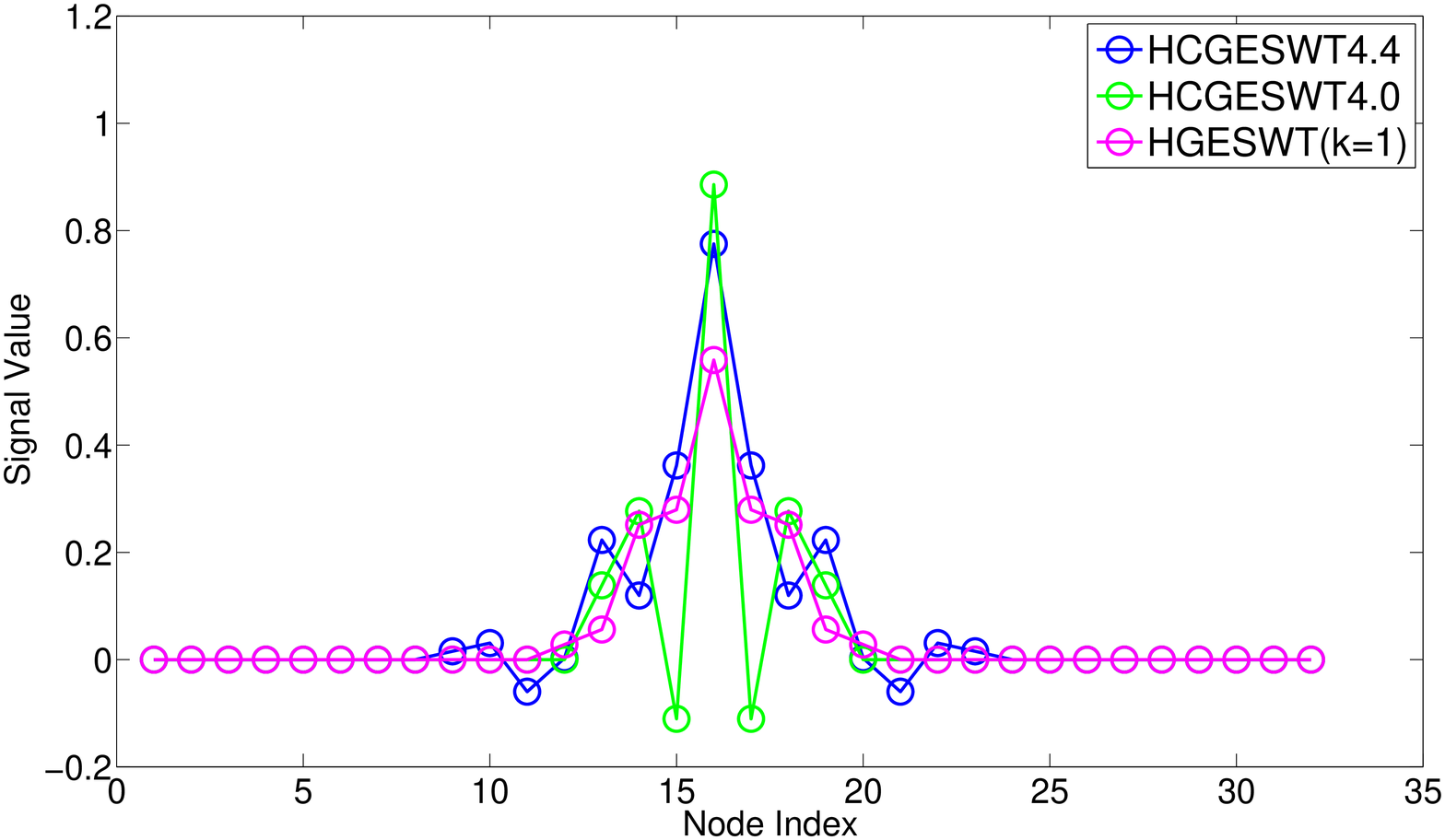}}%
	{\includegraphics[width=2.9in]{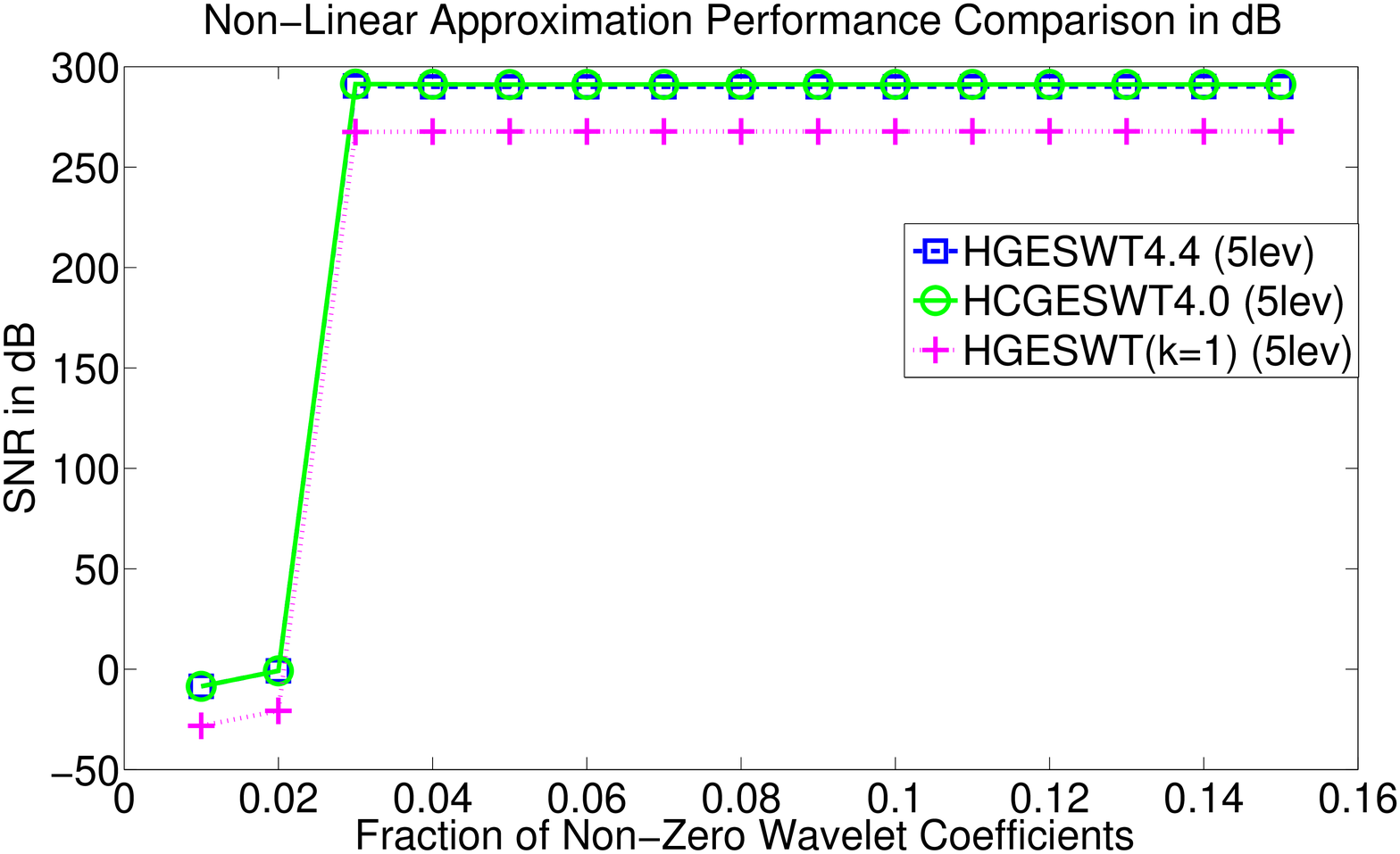}}\\
	
		\caption{Comparison of analysis graph low-pass filters (left) and NLA performance (right) for a sum of two sinusoidals with $(\alpha_1=\frac{2\pi 1}{N}$, $\alpha_2=\frac{2\pi 5}{N})$ (top). \label{fig:esplinedouble}}
\end{figure}\\
\\
Moreover, a data-driven graph example is presented to motivate the application of graph wavelets for sparse image approximation. Given an $N\times N$ image ${\bf I}$, let each pixel be represented by a node, such that connectivity between node pairs $(i,j)$ is established as the distance between their spatial location $p$ and intensity $I$ \begin{equation}\label{eq:bila}w_{i,j}=e^{-\frac{||p_i-p_j||_2^2}{\sigma_p^2}}e^{-\frac{|I(i)-I(j)|^2}{\sigma_I^2}},\enskip i,j\in\{0,...,N^2-1\}\end{equation}
to form the graph $G=(V,E)$ with adjacency matrix ${\bf W}$. In addition, let ${\bf x}\in\mathbb{R}^{N^2}$ denote the vector-stacked image graph signal with intensity value $x(i)=I(i)$ at node $i$. \\
Following a scheme for the analysis of images featuring distinct discontinuities or patterns, derived in our prior work (\cite{globalsip}, \cite{spie}), we proceed to firstly conduct a normalized graph cut on $G$, which segments the image graph into two (or more) regions of homogeneous intensity content by removing edges of minimum total weight \cite{cut}, and to secondly, construct suitable graph wavelets on the nearest circulant graph approximations, as discussed in Sect. $4$, of the resulting partitions $\{G_i\}_i$. Here, we opt for the first-order graph spline wavelet transform of Thm. \ref{thm31} ($k=1$), with downsampling conducted w.r.t $s=1$, and vary its data-localization, i.e. the underlying graph type, for a refined performance.\\
In particular, the GWTs are constructed on circulant approximations of the following graph variations $(i)$ \textit{{\bf nGWT}}: the original complete weighted graph, $(ii)$ \textit{{\bf sparseGWT(bil, RCM)}}: the graph is sparsified by Euclidean distance by discarding connections outside of the pixel grid $||p_i-p_j||_2\leq \sqrt{2}$ and subsequently relabelled by the RCM-algorithm (see also Sect. $4$), $(iii)$ \textit{{\bf sparseGWT(I, sort)}}: the graph is re-weighted as intensity-only, sparsified by a data-dependent intensity threshold $I_T$, and relabelled using signal sorting, and $(iv)$ \textit{{\bf GWT(S=(1), sort)}}: the graph is reduced to the smoothest simple cycle, i.e. the sorted sub-graph signal is projected onto the simple cycle graph. \\
\\
As previously discussed, the purpose of the RCM-algorithm is to provide a relabelling which minimizes the graph bandwidth prior to the approximation operation, and thus maintains compact support.
The underlying motivation for options $(iii)$ and $(iv)$ is that when the graph at hand is based solely on the intensity values of (image) signal partitions $\{{\bf x}_i\}_i$, one can effectively use a simple sorting operation on the latter as the relabelling which simultaneously minimizes the bandwidth of corresponding $G_i$ and total variation of ${\bf x}_i$. In contrast to the spatially localized GWT $(ii)$, the corresponding reordered signal is thus effectively smooth, as visualized in the example of Fig. $5$ in Sect. $4$, which results in a maximally sparse graph wavelet representation. Here, the signal sorting-based relabelling can be considered as an optimized form of the RCM-algorithm, which searches for a level structure (or multidimensional path) to traverse the graph such that the maximum value of the distances $|i-k|$ over all edges $(i,k)$ is minimized \cite{rcm}.\\
For a multiscale representation, downsampling is conducted with respect to the outmost cycle ($s=1$) and no reconnection applied for graph coarsening; if either subgraph is of odd dimension, nearest circulant approximation is alternatively employed to preserve circularity.\\
\\
Consider the example of a real image patch extracted from the $256\times 256$ `cameraman' in Fig. \ref{fig:cameraman1cut} \cite{spie}. We compare the proposed graph wavelet transforms, with normalized rows, to classical 2D wavelet transforms in form of the 2D Haar and linear spline (CDF 5/3) transforms at $5$ levels of decomposition; here, performance is measured as $PSNR=20\log_{10}\left(\frac{N}{||{\bf I}-\tilde{{\bf I}}||_F}\right)$ in dB, with a post-processing of outliers such that $\tilde{I}(i,j)\in\lbrack 0\enskip 1\rbrack$ for reconstructed image $\tilde{{\bf I}}$. 
\begin{figure}
\begin{subfigure}[h]{0.32\textwidth}
{\includegraphics[width=1.6in]{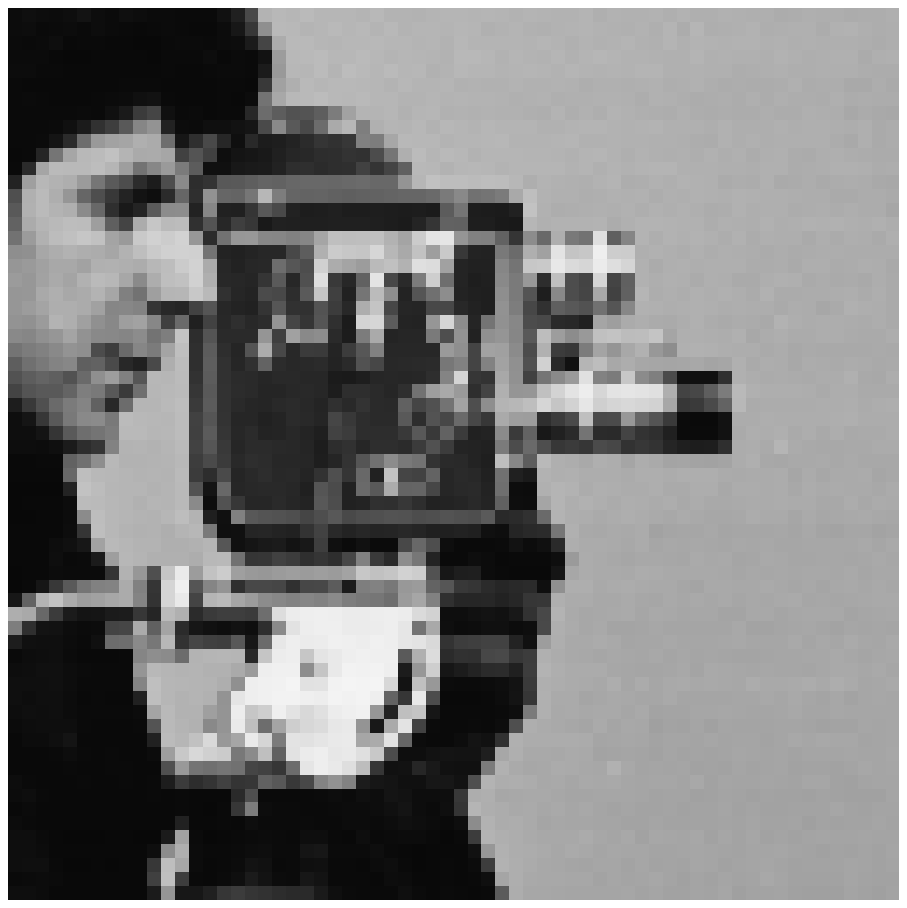}} 
\caption{}
\end{subfigure}
\begin{subfigure}[h]{0.32\textwidth}
 { \includegraphics[width=1.6in]{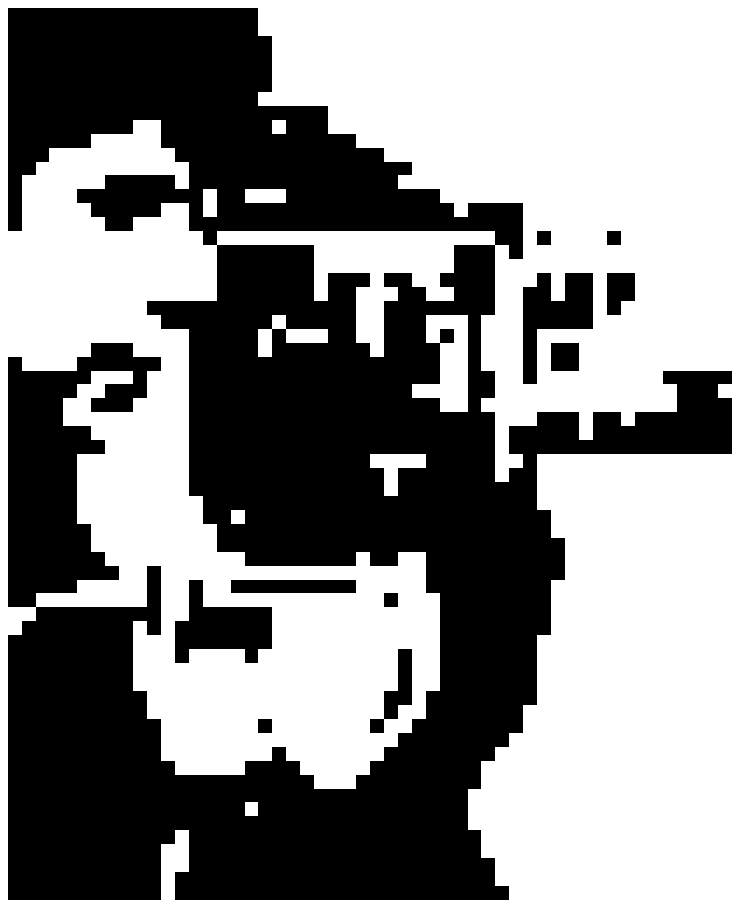}}
 \caption{}
 \end{subfigure}
 \begin{subfigure}[h]{0.32\textwidth}
 {\includegraphics[width=1.6in]{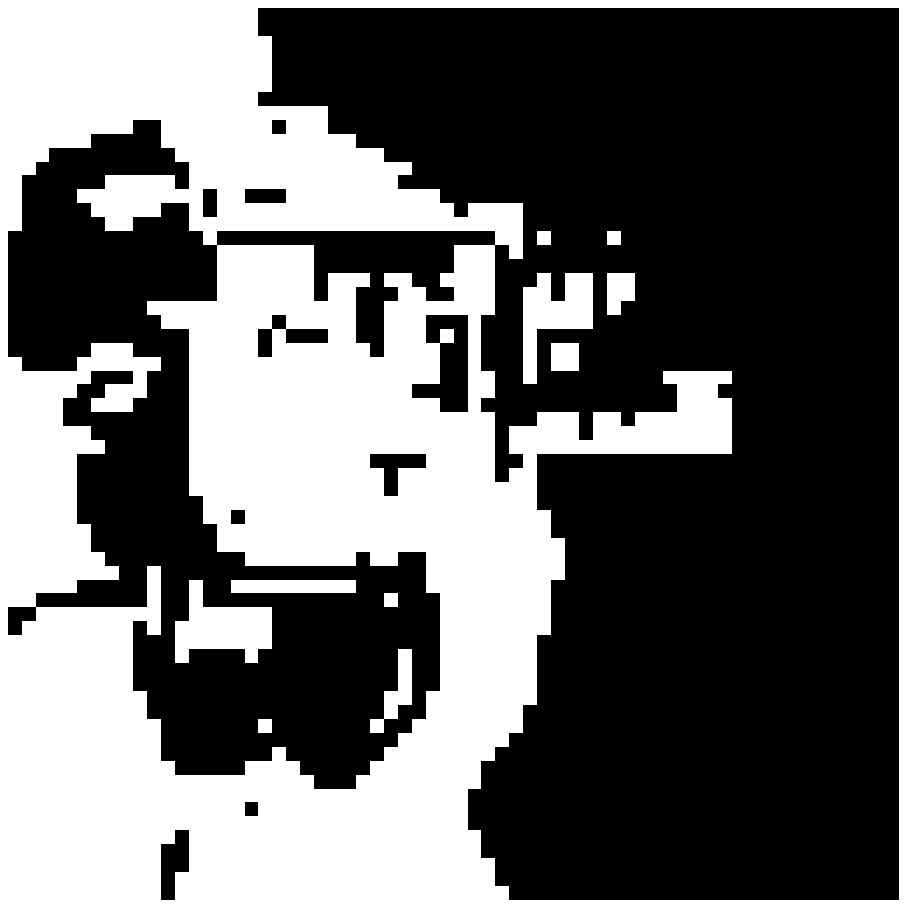}}
\caption{}
 \end{subfigure}
 \\
 \centering
 \begin{subfigure}[h]{0.5\textwidth}
  {\includegraphics[width=3in]{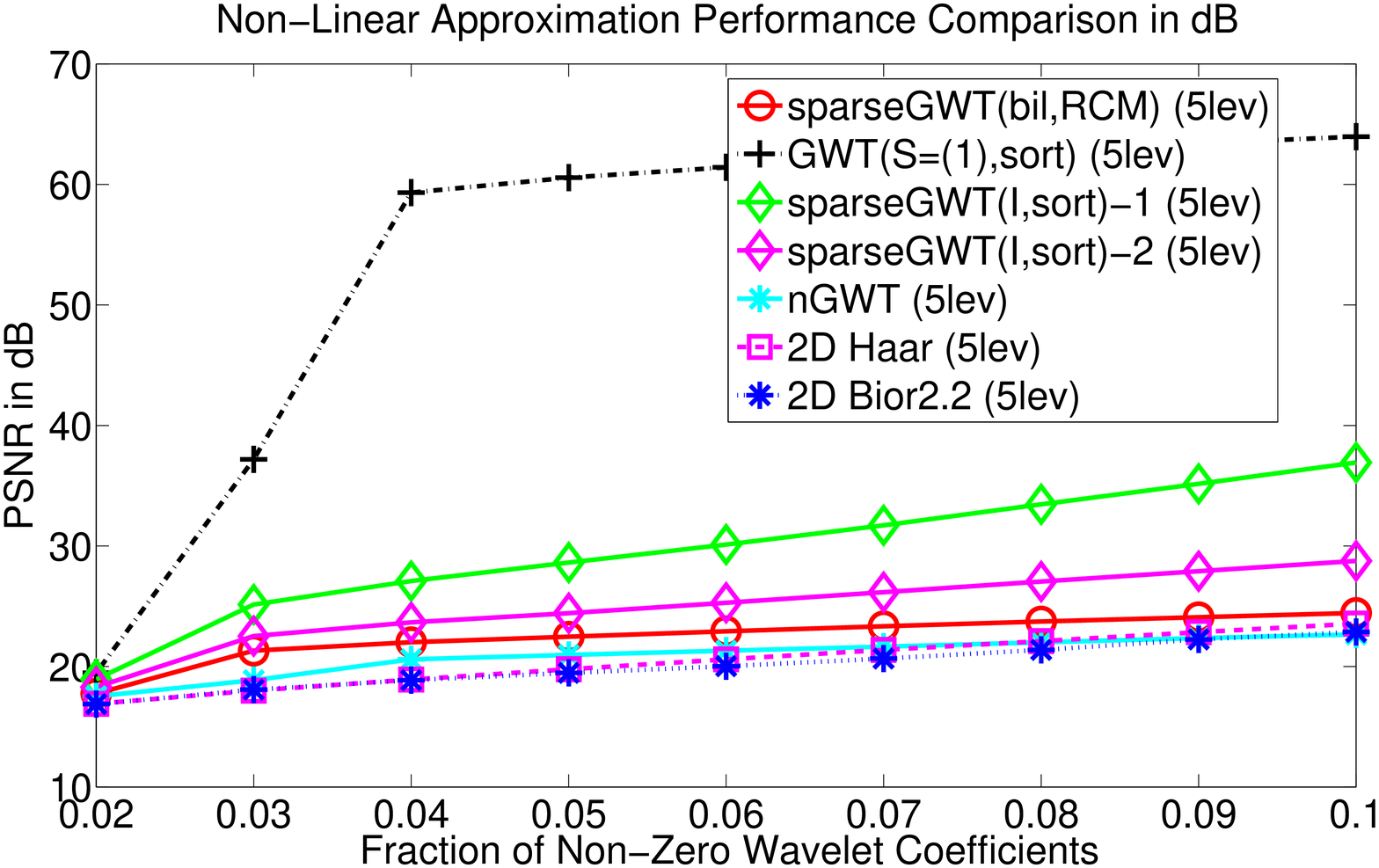}}
  \caption{}
  \end{subfigure}
  \caption{{$(a)$ Original $64\times 64$ image patch, $(b)$-$(c)$ Graph Cut Regions $\&$ $(d)$ Comparison of NLA performance.}\label{fig:cameraman1cut}}
\end{figure}
It becomes evident that the intensity-only based graph wavelet variations outperform the remaining, while all graph-based methods outperform the traditional tensor product wavelets. The effect of the former is further demonstrated through the use of two different intensity thresholds; in particular, the sparser graph with a smaller bandwidth leads to better results (see \textit{sparseGWT(I,sort)-1} in Fig. \ref{fig:cameraman1cut}). Best performance is achieved by a margin when the segmented subgraphs are reduced to their (sparsest) smoothest cycle of bandwidth $1$.

\section{Conclusion}
In this paper, we have introduced novel families of wavelets and associated filterbanks on circulant graphs with vanishing moment properties, which reveal (e-)spline-like functions on graphs, and promote sparse multiscale representations. Moreover, we have discussed generalizations to arbitrary graphs in the form of a multidimensional wavelet analysis scheme based on graph product decomposition, facilitating a sparsity-promoting generalization with the advantage of lower-dimensional processing. In our future work, we wish to further explore the sets of graph signals which can be annihilated with existing and/or evolved graph wavelets as well as refine its extensions and relevance for arbitrary graphs.

 \appendix
 \section{}
\subsection{}
\begin{proof}[Proof of Theorem 3.1]
It is self-evident that since the high-pass filter of Eq. $(4)$ is a power $k$ of the graph Laplacian matrix, whose associated polynomial representation has $2$ vanishing moments, the annihilation property is generalized to higher order of $2k$ vanishing moments; thus we proceed to demonstrate invertibility of the filterbank. The core of the proof follows a similar line of argumentation as the one provided in \cite{Ekambaram3} for $k=1$ with generalizations pertaining to the parameter $k$. For completeness we present the entire proof here. \\
Applying the binomial theorem, we observe that \[\frac{1}{2^k}\left({\bf I}_N\pm\frac{{\bf A}}{d}\right)^k=\frac{1}{2^k}\sum_{j=0}^k(\pm 1)^j{k\choose j}\left(\frac{{\bf A}}{d}\right)^j\] and so we need to show that the nullspace of \[\frac{1}{2^k}\left(\sum_{j\in\mathbb{Z}}{k\choose 2j}\left(\frac{{\bf A}}{d}\right)^{2j}+{\bf K}\sum_{j\in\mathbb{Z}}{k\choose 2j+1}\left(\frac{{\bf A}}{d}\right)^{2j+1}\right)\] is empty, where ${\bf K}$ is the diagonal matrix with entries $K(i,i)=1$ if node $i$ retains the low- and $K(i,i)=-1$ if it retains the high-pass component.
Assume the contrary and define the vector ${\bf z}={\bf V}{\bf r}$ to lie in the nullspace, given eigendecomposition $\frac{{\bf A}}{d}={\bf V}{\bf \Gamma}{\bf V}^H$, which yields the following simplifications:
\begin{equation}\frac{1}{2^k}\left(\sum_{j\in\mathbb{Z}}{k\choose 2j}\left(\frac{{\bf A}}{d}\right)^{2j}+{\bf K}\sum_{j\in\mathbb{Z}}{k\choose 2j+1}\left(\frac{{\bf A}}{d}\right)^{2j+1}\right){\bf V}{\bf r}={\bf 0}_N\end{equation}

\begin{equation}\Leftrightarrow||{\bf V}\sum_{j\in\mathbb{Z}}{k\choose 2j}{\bf \Gamma}^{2j}{\bf r}||_2^2=||{\bf K}{\bf V}\sum_{j\in\mathbb{Z}}{k\choose 2j+1}{\bf \Gamma}^{2j+1}{\bf r}||_2^2\end{equation}
where Eq. $(A.2)$ is the result of a rearrangement of terms in Eq. $(A.1)$ and subsequent application of the $l_2$-vector norm on both sides of the equality. After further simplification, we obtain
\begin{equation}\sum_{i=0}^{N-1} r(i)^2\left(\left(\sum_{j\in\mathbb{Z}}{k\choose 2j}\gamma_i^{2j}\right)^2-\left(\sum_{j\in\mathbb{Z}}{k\choose 2j+1}\gamma_i^{2j+1}\right)^2\right)\stackrel{(a)}{=}\sum_{i=0}^{N-1} r(i)^2(A_i^2-B_i^2)=0,\end{equation}
where in $(a)$, we let $A_i$ and $B_i$ represent the sum of even and odd terms in the binomial series respectively.
For the nullspace to be empty, we need to show that ${\bf r}={\bf 0}_N$, which follows if $(A_i^2-B_i^2)\neq 0$ and is strictly positive or negative $\forall i$. By utilizing the fact that for a general binomial series $(x+a)^n$, with terms $A_i$ and $B_i$, the following holds: $(x^2-a^2)^n=A_i^2-B_i^2$,
we obtain
\[\sum_{i=0}^{N-1} r(i)^2(A_i^2-B_i^2)=\sum_{i=0}^{N-1} r(i)^2(1-\gamma_i^2)^k=0.\]
The eigenvalues are given by $|\gamma_i|\leq 1$, via the Gershgorin circle theorem, where $\gamma_i>-1$ unless the graph is bipartite \cite{chung}; thus we have that $|r(i)|>0$ only if $|\gamma_{i}|=1$ and $r(i)=0$ otherwise. To examine these special cases, let the corresponding eigenvectors for $\gamma_1=-1$ and $\gamma_2=1$ be given by ${\bf \tilde{V }}$ and $\frac{r(0)}{\sqrt{N}}{\bf 1}_N$ respectively, such that ${\bf z}=\frac{r(0)}{\sqrt{N}}{\bf 1}_N+{\bf \tilde{V}} {\bf \tilde{r}}$, and substitute into Eq. $(A.1)$. Here, the multipicity of $\gamma_2=1$ is one, since the graph is connected \cite{chung}. We consider the case of a non-bipartite graph first:\\
\[\frac{r(0)}{\sqrt{N}}\left(\sum_{j\in\mathbb{Z}} {k\choose 2j}{\bf 1}_N+{\bf K}\sum_{j\in\mathbb{Z}} {k\choose 2j+1}{\bf 1}_N\right)={\bf 0}_N\]
Noting that $\sum_{j\in\mathbb{Z}}{k\choose 2j}=\sum_{j\in\mathbb{Z}}{k\choose 2j+1}$, we need at least one entry $K(i,i)=1$, such that $r(0)=0$. 
\\
In the bipartite case, due to spectral folding, if $\gamma$ is an eigenvalue of ${\bf A}$ with eigenvector $\begin{bmatrix} {\bf v}_{B}\\{\bf v}_{B^{\complement}}\end{bmatrix}$, so is $-\gamma$ with eigenvector $\begin{bmatrix} {\bf v}_{B}\\-{\bf v}_{B^{\complement}}\end{bmatrix}$, where $B$ is the set of the node indices in one bipartite set \cite{chung}. Then $\gamma_1=1$ and $\gamma_2=-1$ each have multiplicity one with respective eigenvectors ${\bf 1}_N$ and $\begin{bmatrix} {\bf 1}_{B}\\-{\bf 1}_{B^{\complement}}\end{bmatrix}$, where $|B|=|B^{\complement}|=N/2$, giving \begin{equation}
\begin{aligned}
&\frac{r(0)}{\sqrt{N}}\underbrace{\left(\sum_{j\in\mathbb{Z}}{k\choose 2j}{\bf I}_N+{\bf K}\sum_{j\in\mathbb{Z}}{k\choose 2j+1}{\bf I}_N\right)}_{T_1}{\bf 1}_N\\
&+r(1)\underbrace{\left( \sum_{j\in\mathbb{Z}}{k\choose 2j}{\bf I}_N -{\bf K}\sum_{j\in\mathbb{Z}}{k\choose 2j+1}{\bf I}_N\right)}_{T_2}\begin{bmatrix} {\bf 1}_{B}\\-{\bf 1}_{B^{\complement}}\end{bmatrix}={\bf 0}_N.\end{aligned}\end{equation}
Here we have used the property $\frac{{\bf A}}{d}{\bf v}=\gamma {\bf v}$, in $\left(\frac{{\bf A}}{d}\right)^j \begin{bmatrix} {\bf 1}_{B}\\-{\bf 1}_{B^{\complement}}\end{bmatrix}=(-1)^j \begin{bmatrix} {\bf 1}_{B}\\-{\bf 1}_{B^{\complement}}\end{bmatrix}$, leading to an alternating pattern on the RHS when $j$ is odd. \\
In particular, for any choice of downsampling pattern ${\bf K}$, the terms $T_1$ and $T_2$ in the first and second summands in Eq. $(A.4)$, will respectively have zero entries along the main diagonal, which lie in complementary index sets. Therefore, as long as at least one node retains the low-pass component $K(i,i)=1$, it follows that $r(0) = 0$ and $r(1) = 0$, which again implies ${\bf z} = {\bf 0}$, completing the proof. 
\end{proof}
\subsection{}
\begin{proof} [Proof of Theorem 3.2]
We can rewrite the simplified filters  
\begin{equation}{\bf H}_{LP_{\vec{\alpha}}}=\prod_{n=1}^{Tk} \left(\beta_n{\bf I}_N+\frac{{\bf A}}{d}\right)\end{equation}
\begin{equation}{\bf H}_{HP_{\vec{\alpha}}}=\prod_{n=1}^{Tk}\left(\beta_n{\bf I}_N-\frac{{\bf A}}{d}\right)\end{equation}
noting that the new indices incorporate multiplicities, as follows
\[{\bf H}_{LP_{\vec{\alpha}}}=\sum_{i=0}^{Tk} s_i \left(\frac{{\bf A}}{d}\right)^i\]
\[{\bf H}_{HP_{\vec{\alpha}}}=(-1)^{Tk}\sum_{i=0}^{Tk} (-1)^{i+Tk} s_i \left(\frac{{\bf A}}{d}\right)^i=\sum_{i=0}^{Tk} (-1)^{i} s_i \left(\frac{{\bf A}}{d}\right)^i\]
where the coefficients $s_i$ are the elementary symmetric polynomials $e_n(\beta_1,\dots,\beta_{Tk})$ in $\beta_n$:
\[s_0=e_{Tk}(\beta_1,\dots,\beta_{Tk})=\beta_1\beta_2\dots \beta_{Tk}\]
\[s_k=...\]
\[s_{Tk-1}=e_1(\beta_1,\dots,\beta_{Tk})=\beta_1+\beta_2+\dots+ \beta_{Tk}\]
\[s_{Tk}=e_0(\beta_1,\dots,\beta_{Tk})=1.\]
We need to prove that the filterbank  
\[\sum_{j\in\mathbb{Z}} s_{2j}\left(\frac{{\bf A}}{d}\right)^{2j}+{\bf K}\sum_{j\in\mathbb{Z}} s_{2j+1}\left(\frac{{\bf A}}{d}\right)^{2j+1}\]
with diagonal downsampling matrix ${\bf K}$ is invertible by showing that its nullspace is empty. Similarly, as in $A.1$, we assume the contrary and let ${\bf z}={\bf V}{\bf r}$ lie in its nullspace, where ${\bf V}{\bf \Gamma}^j {\bf V}^H=\left(\frac{{\bf A}}{d}\right)^j$, such that
\begin{equation}
\left(\sum_{j\in\mathbb{Z}} s_{2j}\left(\frac{{\bf A}}{d}\right)^{2j}+{\bf K}\sum_{j\in\mathbb{Z}} s_{2j+1}\left(\frac{{\bf A}}{d}\right)^{2j+1}\right){\bf V}{\bf r}={\bf 0}_N\end{equation}
\begin{equation}\Leftrightarrow ||{\bf V}\sum_{j\in\mathbb{Z}} s_{2j}{\bf \Gamma}^{2j}{\bf r}||_2^2=||{\bf K}{\bf V}\sum_{j\in\mathbb{Z}} s_{2j+1}{\bf \Gamma}^{2j+1}{\bf r}||_2^2\end{equation}
where Eq. $(A.8)$ results from rearranging Eq. $(A.7)$ and taking norms of both sides, and gives rise to
\[\sum_{i=0}^{N-1} r(i)^2 \left( \left(\sum_{j\in\mathbb{Z}} s_{2j} \gamma_i^{2j}\right)^2- \left(\sum_{j\in\mathbb{Z}} s_{2j+1} \gamma_i^{2j+1}\right)^2\right)=0,\enskip\text{and hence}\]
\begin{equation}\sum_{i=0}^{N-1} r(i)^2 \prod_{n=1}^{Tk}(\beta_n-\gamma_i)\prod_{n=1}^{Tk}(\beta_n+\gamma_i)=\sum_{i=0}^{N-1} r(i)^2 \prod_{n=1}^{T}(\beta_n^2-\gamma_i^2)^k=0.\end{equation}
Given parameters $\beta_n$ such that $|\beta_n|\leq 1,\enskip n=1,...,T$, and with eigenvalues satisfying $|\gamma_i|\leq 1, \enskip i=0,...,N-1$ by the Perron-Frobenius Theorem \cite{frob}, we assume $|\beta_n|\neq |\gamma_i|$. Thus, all summands in Eq. $(A.9)$ need to be of the same sign to guarantee ${\bf r}={\bf 0}_N$. As the function $f(\gamma_i)=\prod_{n=1}^T(\beta_n^2-\gamma_i^2)^k$, for spectrum ${\bf \gamma}=\{\gamma_i\}_{i=0}^{N-1}$, does not have exclusively positive or negative range for odd $k$, we require $k\in 2\mathbb{N}$. Furthermore, all terms remain of the same sign at any $k$ as long as parameters $\beta_n$ and $T$ are suitably chosen. This is a sufficient condition for guaranteeing invertibility at any downsampling pattern. \\
\\
If, for some $n$, we have $|\beta_n|=|\gamma_i|$ with $i\in\lbrack 0\enskip N-1\rbrack$, giving $|r(i)|\geq0$, we can show that for certain downsampling patterns, the transform continues to be invertible. In particular, this is the case when parameter $\alpha$ in $\beta_n=\frac{\tilde{d}_{\alpha}}{d}$ is such that $\alpha=\frac{2\pi k}{N}$ for some $k\in\lbrack 0 \quad N-1\rbrack$, i.e. ${\bf H}_{HP_{\alpha}}$ annihilates the $k$-th (eigen-)vector in the DFT matrix. For eigenvalues $\lambda_k=\sum_{j=1}^M 2 d_j \cos \left(\frac{2\pi j k}{N}\right)$ of the non-normalized symmetric and circulant adjacency matrix ${\bf A}$ with first row $\lbrack 0 \quad d_1\quad d_2 \dots d_2\quad d_1\rbrack$, we thus have $\tilde{d}_{\alpha}=\lambda_k$ for some $k\in \lbrack 0 \quad N-1\rbrack$. We proceed to show that the filterbank at hand is invertible for such $\alpha$ as long as downsampling is conducted with respect to $s=1\in S$, and more generally, when at least $m_i$ suitably chosen low-pass components are retained (for multiplicity $m_i$ of $\gamma_i$), and by extension, $\sum_{i=1}^P m_i$ components for $P$ distinct $\beta_n$ that satisfy $|\beta_n|=|\gamma_i|$.\\
\\
Assuming wlog $\beta_n=\gamma_i$, for $P$ distinct eigenvalues ($1\leq P< N$), each of multiplicity $m_i$ with corresponding eigenvector(s) $\{{\bf v}_{i,l}\}_{l=0}^{m_i-1}$, we consider, for the case of a non-bipartite graph, the following nullspace representation

\[{\bf z}=\sum_{n=1}^{P} \sum_{l=0}^{m_n-1} r_{n,l} {\bf v}_{n,l},\quad \forall n, m_n\geq 1,\]
where index $n$ signifies distinct eigenvalues (or $\beta_n$) and $r_{n,l}$ are scalar coefficients. With $\left(\frac{{\bf A}}{d}\right)^k {\bf v}_i=\gamma_i^k {\bf v}_i$, we obtain via substitution into Eq. $(A.7)$
\[\sum_{j\in\mathbb{Z}} s_{2j} \sum_{n=1}^{P} \sum_{l=0}^{m_n-1}  r_{n,l} \beta_n^{2j} {\bf v}_{n,l}+{\bf K}\sum_{j\in\mathbb{Z}} s_{2j+1}\sum_{n=1}^{P} \sum_{l=0}^{m_n-1}  r_{n,l} \beta_n^{2j+1} {\bf v}_{n,l}={\bf 0}_N.\]
If $K_{i,i}=1, \enskip i=0,...,N-1$, then
\[ \sum_{n=1}^{P} \sum_{l=0}^{m_n-1} \prod_{q=1}^{T} (\beta_q+\beta_n)^k r_{n,l} {\bf v}_{n,l}={\bf 0}_N\]
and if $\forall i, K_{i,i}=-1, \enskip i=0,...,N-1$, then
\[ \sum_{n=1}^{P} \sum_{l=0}^{m_n-1} \prod_{q=1}^{T} (\beta_q-\beta_n)^k r_{n,l} {\bf v}_{n,l}={\bf 0}_N\]
where in the case of the latter we observe that $\prod_{q=1}^T (\beta_q-\beta_n)^k$ is always zero since $\beta_q=\beta_n$ for $q=n$. Therefore, we need the number of low-pass components to be greater than or equal to the sum of multiplicities $m_n$ for all $P$ distinct eigenvalues  and, in addition, their (node) locations $D$ need to be suitably chosen to facilitate linearly independent partitions ${\bf v}_{n,l}(D)$ such that $r_{n,l}=0$. Note that for $\beta_n=-\gamma_i$ and $-\gamma_i\notin\gamma$, the opposite is the case, i.e. we need at least $\sum_{n=1}^P m_n$ suitably chosen high-pass components to facilitate linear independence of partitions $\{{\bf v}_{n,l}(D^{\complement})\}_{n,l}$. If both $\pm\gamma_i$ exist, similar reasoning as for the following bipartite case applies.\\
For bipartite graphs, let $m_i$ denote the multiplicity of eigenvalue $\gamma_i$ and $-\gamma_i$ respectively (due to symmetry of the spectrum \cite{chung}), and $\{{\bf v}_{i,l}\}_{l=0}^{m_i-1}$ and $\left\{\begin{bmatrix} {\bf v}_{i,l}^B\\ -{\bf v}_{i,l}^{B^{\complement}}\end{bmatrix}\right\}_{l=0}^{m_i-1}$ the corresponding eigenvectors, resulting in a nullspace representation of the form:\\
\[{\bf z}=\sum_{n=1}^P \sum_{l=0}^{m_n-1} \left(r_{n,l} {\bf v}_{n,l}+\tilde{r}_{n,l}\begin{bmatrix} {\bf v}_{n,l}^B\\ -{\bf v}_{n,l}^{B^{\complement}}\end{bmatrix}\right)\quad m_n\geq 1,\]
whose substitution into Eq. $(A.7)$ yields
\[\sum_{j\in\mathbb{Z}} s_{2j} \sum_{n=1}^{P} \sum_{l=0}^{m_n-1}  r_{n,l} \beta_n^{2j} {\bf v}_{n,l}+{\bf K}\sum_{j\in\mathbb{Z}} s_{2j+1}\sum_{n=1}^{P} \sum_{l=0}^{m_n-1}  r_{n,l} \beta_n^{2j+1} {\bf v}_{n,l}\]

\[+\sum_{j\in\mathbb{Z}} s_{2j} \sum_{n=1}^{P} \sum_{l=0}^{m_n-1}  \tilde{r}_{n,l} \beta_n^{2j} \begin{bmatrix} {\bf v}_{n,l}^B\\ -{\bf v}_{n,l}^{B^{\complement}}\end{bmatrix}-{\bf K}\sum_{j\in\mathbb{Z}} s_{2j+1}\sum_{n=1}^{P} \sum_{l=0}^{m_n-1}  \tilde{r}_{n,l} \beta_n^{2j+1} \begin{bmatrix} {\bf v}_{n,l}^B\\ -{\bf v}_{n,l}^{B^{\complement}}\end{bmatrix}={\bf 0}_N.\]
Thus, for $K_{i,i}=1, \enskip i=0,...,N-1$, we obtain
\[ \sum_{n=1}^{P} \sum_{l=0}^{m_n-1} \prod_{q=1}^T (\beta_q+\beta_n)^k r_{n,l} {\bf v}_{n,l}+\sum_{n=1}^{P} \sum_{l=0}^{m_n-1} \prod_{q=1}^T (\beta_q-\beta_n)^k \tilde{r}_{n,l}  \begin{bmatrix} {\bf v}_{n,l}^B\\ -{\bf v}_{n,l}^{B^{\complement}}\end{bmatrix}={\bf 0}_N\]
and for $K_{i,i}=-1, \enskip i=0,...,N-1$
\[ \sum_{n=1}^{P} \sum_{l=0}^{m_n-1} \prod_{q=1}^T (\beta_q-\beta_n)^k r_{n,l}  {\bf v}_{n,l}+\sum_{n=1}^{P} \sum_{l=0}^{m_n-1} \prod_{q=1}^T (\beta_q+\beta_n)^k \tilde{r}_{n,l} \begin{bmatrix} {\bf v}_{n,l}^B\\ -{\bf v}_{n,l}^{B^{\complement}}\end{bmatrix}={\bf 0}_N.\]
Hence, we require at least $m_n$ low-and at most $N-m_n$ high-pass components per $\beta_n$ at suitably chosen locations $D$ and $D^{\complement}$ such that the corresponding partitions of $\{{\bf v}_{n,l}(D)\}_{l=0}^{m_n-1}$ and $\left\{\begin{bmatrix} {\bf v}_{n,l}^B\\ -{\bf v}_{n,l}^{B^{\complement}}\end{bmatrix}({D^{\complement}})\right\}_{l=0}^{m_n-1}$ are linearly independent, leading to $r_{n,l}=0$ and $\tilde{r}_{n,l}=0$. In general, we need the number of retained low-pass components to be $\sum_{n=1}^P m_n\leq N/2 $ and $D$ such that the above partitions form linearly independent sets for $n=1,...,P$. \\
\\
Consider the relevant case, when we downsample by $2$ w.r.t. $s=1$ such that $D=(0:2:N-1)$ (corresponding to $D=B$ in the bipartite case) and $|D|=|D^{\complement}|=N/2$. When eigenbasis ${\bf V}$ is represented as the $N\times N$-DFT matrix, the relation ${\bf V}_{D,0:N-1}=\lbrack \tilde{{\bf V}} \tilde{{\bf V}} \rbrack$ holds, where $\tilde{{\bf V}}$ denotes the $N/2\times N/2$ DFT-matrix. Thus, the transform is invertible for $\beta_n=\gamma_i$, if corresponding eigenvalue(s) $\gamma_i$ with multiplicity $m_i$ are suitably located in the DFT-ordered spectrum $\gamma_{DFT}=\{\gamma_i\}_{i=0}^{N-1}$ such that the associated eigenvectors $\{{\bf v}_{i,l}\}_{l=0}^{m_i-1}$ remain linearly independent after downsampling, i.e. their pairwise column positions $(j,j')$ in ${\bf V}_{D,0:N-1}$ are not of the form $(j, N/2+j)$. Equivalently, the above can be extended for the bipartite case when $-\gamma_i$ exists, since we also have ${\bf V}_{D^{\complement},0:N-1}=\lbrack \tilde{{\bf V}} \tilde{{\bf V}} \rbrack$ up to a normalization constant per column.
\\
A special case occurs, when $\beta=\tilde{d}=0$; we need to show that ${\bf K}\frac{{\bf A}}{d}{\bf z}={\bf 0}_N$ as long as $\sum_{i=0}^{N-1}r(i)^2 \gamma_i^2=0$. The latter yields $r(i)=0$, except when $\gamma_i=0$. Since, however, the eigenvector(s) for $\gamma_i=0$ lie in the nullspace of $\frac{{\bf A}}{d}$, we have $r(0)\neq 0$, thus the filterbank is not invertible for any downsampling pattern, including the case when downsampling is conducted with respect to $s=1$. When $\frac{{\bf A}}{d}$ on the other hand is invertible with $\gamma_i\neq 0$, so is the filterbank. \end{proof}

\subsection{}
\begin{proof}[Proof of Corollary 3.3]
Consider the simplest case with $k=1,\enskip T=1$ and one parameter $\beta$: we need to show that the nullspace ${\bf z}$ of low-pass filter ${\bf H}_{LP_{\alpha}}$ in
 \begin{equation}\frac{1}{2}\left(\beta{\bf I}_N+\frac{{\bf A}}{d}\right){\bf z}={\bf 0}_N\end{equation}
is empty by contradiction. In a similar fashion as in previous proofs, we let ${\bf z}={\bf V}{\bf r}$, with $\frac{{\bf A}}{d}={\bf V}{\bf \Gamma}{\bf V}^H$ and obtain

\[||\beta {\bf V}{\bf r} + {\bf V}{\bf \Gamma}{\bf r}||_2^2=0\]
\[\Leftrightarrow\sum_{i=0}^{N-1} r(i)^2(\beta^2+2\beta\gamma_i+\gamma_i^2)=\sum_{i=0}^{N-1} r(i)^2(\beta+\gamma_i)^2=0.\]
Hence, it follows from inspection that the low-pass filter is invertible unless $\beta=-\gamma_i$. If $-\gamma_i\in\gamma$, similar reasoning as for the bipartite case applies. 
In the case of a bipartite graph, where $|\beta|=|\gamma_i|$ and $\gamma_i,-\gamma_i$ of respective multiplicity $m_i$ exist, with eigenvectors \[{\bf z}=\sum_{l=0}^{m_i-1} \left(r(l) {\bf v}_l+\tilde{r}(l)\begin{bmatrix} {\bf v}_l^B\\ -{\bf v}_l^{B^{\complement}}\end{bmatrix}\right),\] we observe after substitution into Eq. $(A.10)$ that
\[\sum_{l=0}^{m_i-1} r(l)(\beta{\bf v}_l +\gamma_i{\bf v}_l)+\sum_{l=0}^{m_i-1} \tilde{r}(l)\left(\beta \begin{bmatrix} {\bf v}_l^B\\ -{\bf v}_l^{B^{\complement}}\end{bmatrix} -\gamma_i\begin{bmatrix} {\bf v}_l^B\\ -{\bf v}_l^{B^{\complement}}\end{bmatrix}\right)={\bf 0}_N.\]
Due to spectral folding, one eigenvector-set always cancels out for $\beta=\pm \gamma_i$, so that we cannot guarantee zero coefficients, and hence invertibility. By extension, ${\bf H}_{LP_{\alpha}}^k$ is invertible, while ${\bf H}_{LP_{\vec{\alpha}}}^k$ requires invertibility of each individual factor ${\bf H}_{LP_{{\alpha_n}}}$ for parameters $\beta_n$, under the above.\\
\end{proof}



\bibliographystyle{elsarticle-num} 
\bibliography{sample2}


\end{document}